\newcolumntype{P}[1]{>{\centering\arraybackslash}p{#1}}
\newcolumntype{M}[1]{>{\centering\arraybackslash}m{#1}}
\theoremstyle{plain}
\newtheorem{theorem}{Theorem}[section]
\newtheorem{lemma}[theorem]{Lemma}
\newtheorem{corollary}[theorem]{Corollary}
\newtheorem{proposition}[theorem]{Proposition}
\newtheorem{question}[theorem]{Question}
\newtheorem{definition}[theorem]{Definition}
\newtheorem{example}[theorem]{Example}
\newtheorem{remark}[theorem]{Remark}
\numberwithin{equation}{section}
\newcommand{\R}{\mathbb{R}}
\newcommand{\N}{\mathbb N}
\begin{document}

\title{Generic Features in the Spectral Decomposition of Correlation Matrices}

\author{Yuriy Stepanov}
\email{yuriy.stepanov@uni-due.de}
\affiliation{Faculty of Physics, University of Duisburg-Essen,  Duisburg, Germany.}

\author{Hendrik Herrmann}
\email{hherrmann@uni-wuppertal.de}
\affiliation{Department of Mathematics, Wuppertal University, Wuppertal, Germany.}

\author{Thomas Guhr}
\email{thomas.guhr@uni-due.de}
\affiliation{Faculty of Physics, University of Duisburg-Essen,  Duisburg, Germany.}

\date{\today}

\begin{abstract}
We show that correlation matrices with particular average and  variance of the correlation coefficients have a notably restricted  spectral structure. Applying geometric methods, we derive  lower bounds for the largest eigenvalue and the alignment of the corresponding eigenvector. We explain how and to which extent, a distinctly large eigenvalue and an approximately diagonal eigenvector generically occur for specific correlation matrices independently of the correlation matrix dimension.

\end{abstract}

\maketitle

\tableofcontents

\section{Introduction} \label{sec:intro}

To catch up  with  ever increasing complexity of technologies and the nowadays available data amounts,  empirical scientists chase for patterns of collective behaviour. Reducing complexity of the models has a large value, especially in the industry. In the widely used multivariate methods, the covariance and  correlation matrices  play the central role \cite{Tinsley2000}. Correlation matrices   are  used across different sciences \cite{archdeacon1994correlation,EggCorrelations2014,Batushansky2016}  and especially in finance  \cite{Markowitz1956,POLLET2010,meissner2013correlation,ROUKNY2018}. 

On the mathematical side,  stochastic approaches, like random matrix theory\cite{Fyodorov2011,BunBouchaudPotters2017}, have strongly influenced these fields in the past hundred years, mostly focusing on the spectral structure \cite{Wishart1928,Wigner1967,MarcenkoPastur1967,Pastur1973,Edwards1976,Friedman1981,Furedi_Komlos_1981,Juhasz1981,Homes_1991} of random matrices. Usually, the results are stated as statistical limits or only apply for infinitely large correlation matrices. A practitioner  is often unable  to quantify, to which extent a model is applicable to particular empirical data.

Recent empirical studies of financial correlation have clearly shown its non-stationarity \cite{CONLON2009,Song2011,Munnix2012}.  At the same time, financial correlation matrices have been repeatedly reported\cite{Laloux1999,Plerou1999,Plerou2002} to generically have  an approximately diagonal eigenvector, corresponding to a distinctly large eigenvalue. Furthermore, for empirical ensembles of correlation matrices, the largest eigenvalue is proportional to the average correlation\cite{Song2011,Stepanov_2015}. Similar observations have been made for simulated data\cite{Friedman1981,SornetteMalevergne2004} as well. Such correlation matrices are approximately determined by a single eigenvector, which is a notable simplification.

Correlation matrices with constant non-zero coefficients, trivially have these generic features, whatever the matrix dimension is \cite{Morrison1976, Friedman1981}. As shown by Füredi \textit{et~al.}\cite{Furedi_Komlos_1981} and Malevergne \textit{et~al.}\cite{SornetteMalevergne2004}, the same applies to infinitely large correlation matrices in statistical limit provided the average correlation is positive, and the variance of the correlation coefficients is small enough.

The scope of the present study is to understand under which conditions and to which extent, arbitrary correlation matrices have these generic features. Hence we focus on the relation between the average correlation and the variance of the correlation coefficients to the spectral structure of the underlying correlation matrix. Furthermore, we address the impact of the correlation matrix dimension.

In the present paper we extend the results of Refs.\cite{Furedi_Komlos_1981,SornetteMalevergne2004} to  correlation matrices of an arbitrary dimension.   We show that the average correlation and the variance of the correlation coefficients imply constraints on the spectrum of the underlying correlation matrix. Applying methods from linear algebra,  we derive lower bounds on the largest eigenvalue and restrictions on the alignment of the corresponding eigenvector. We show that  no matter matrix dimension is, a distinctly large eigenvalue with an approximately diagonal vector, simultaneously occur for a wide range of correlation matrices.

The paper is organised as follows:  In Sec.~\ref{sec:char_lemma} we discus general properties of correlation matrices and explain our methods. We state our main results in Sec.~\ref{sec:main_reuslts}. We give the proofs of the main results in the rather technical sections Secs. \ref{sec:proofs_Th1}--\ref{sec:ProofTheorem4}. In Sec.~\ref{sec:conclusion} we conclude our findings.

\section{Characteristic Lemma and Examples of Correlation Matrices} \label{sec:char_lemma}

In  Sec.~\ref{sec:genertal_corr_mat} we summarise general properties of correlation matrices. We consider distinct examples in Sec.~\ref{sec:genertal_corr_mat}. We derive the characteristic lemma of correlation matrices and explain our methods  in Sec.~\ref{sec:methods}.

\subsection{General Features of Correlation Matrices} \label{sec:genertal_corr_mat}

We introduce correlation matrices from the geometric point of view. For $n \geq 2$ and two vectors \(x,y\in\R^n\), their standard inner product is defined by
	
\begin{equation}
\langle x,y\rangle:=\sum_{j=1}^n x_jy_j.
\end{equation}
We denote the Euclidean norm of \(x\) by \(\|x\|:=\sqrt{\langle x,x\rangle}\). Consider  an $N \times n$ matrix
\begin{equation}
		M:=[r_1,\ldots,r_n].
\end{equation} 
The columns  $r_1,\ldots, r_n \in \R^N$ of $M$ are  $n$ arbitrary vectors, normalised by
\begin{equation}
\parallel 	 r_i \parallel  \equiv 1,
\end{equation} for $1 \leq i\leq n$.
Given a matrix $M$, we define the  $n \times n$   matrix 
\begin{equation} \label{eq:def_C}
		C:=M^TM,
\end{equation} as the  product of matrix $M$ and its transpose $M^T$. Its coefficients  
\begin{equation} 
C_{ij}=\langle r_i,r_j \rangle,
\end{equation} are the pairwise inner products of the vectors $r_1,\ldots,
 r_n$.
Matrix $C$, as defined in \eqref{eq:def_C}, has three characteristic properties:

\begin{itemize}
	\item[(i)] $C$ is symmetric,\textit{ i.e.}  \(C_{ij}=C_{ji}\)  for \(1\leq i,j\leq n\),
	\item[(ii)] \(C_{ii}=1\) for \(1\leq i\leq n\),
	\item[(iii)] $C$ is positive semi-definite, \textit{i.e.} $x^TCx\geq 0$ for all $x\in \mathbb R^n$.
\end{itemize}	
In most cases,  the matrix $M$ and hence $C$ are random variables\cite{Tinsley2000}. In the present study, we refer to a correlation matrix  $C$ as a real $n \times n$ matrix, which fulfils conditions (i)-(iii). In particular we consider a correlation matrix $C$ as a fixed realisation of a random variable and we don't refer to any matrix $M$. We note that any correlation matrix $C$ can be written\cite{Stefanica2014} as in \eqref{eq:def_C} and one automatically has\cite{Stefanica2014} 
\begin{equation} \label{eq:abs_cij_lower_1}
 C_{ij}\in[-1,1],
\end{equation}
for $1\leq i,j\leq n$. Conditions (i)-(iii) imply further  properties of correlation matrices. From (i) one has that any $n \times n$ correlation matrix $C$ can be spectrally decomposed
\begin{equation} \label{eq:C_spectral_decomposition}
		C=\sum_{i=1}^{n}\lambda_i v_i v_i^T.
\end{equation}
Here $\lambda_1\geq\ldots\geq\lambda_n$ are the real eigenvalues and $v_1,..,v_n \in \R^n$ is an orthonormal basis consisting of corresponding eigenvectors, i.e.~$  Cv_i=\lambda_iv_i$, \(1\leq i\leq n\). We introduce the normalised diagonal vector
\begin{equation}
\delta_n:=\frac{1}{\sqrt{n}}(1,\ldots,1) \in \R^n,
\end{equation} which is distinctly important in the following.
 We note that if one of the eigenvectors is parallel to $\delta_n$, the remaining eigenvectors are orthogonal to it due to orthogonality of the eigenbasis.

From (ii) one has for the trace $\text{Tr}(C) =\sum_{i=1}^{n}C_{ii} = n$. Hence for the eigenvalues one has
\begin{equation} \label{eq:trace}
\sum_{i=1}^{n}\lambda_i = n.
\end{equation} Therefore, one has  with condition (iii) for the eigenvalue range
\begin{equation}
  0 \leq\lambda_i\leq n,
\end{equation} for $1 \leq i\leq n$. 

From \eqref{eq:C_spectral_decomposition} we observe that correlation matrices with a strongly dominating eigenvalue $\lambda_1 \gg \lambda_2$ are approximately given by the single eigenvector $v_1$.
 Especially for empirical applications, it is crucial to understand which correlation matrices have a distinctly large eigenvalue and what is the corresponding eigenvector.

Motivated by previous studies\cite{Furedi_Komlos_1981,SornetteMalevergne2004}, we characterise an $n \times n$ correlation matrix $C$ by the mean correlation
\begin{equation} \label{eq:def_mean_c}
	c=c(C):=\frac{2}{n(n-1)}\sum_{i>j}C_{ij}		
\end{equation}
and the standard deviation 
\begin{equation} \label{eq:def_variance}
	\sigma=\sigma(C):=\sqrt{\frac{2}{n(n-1)}\sum_{i>j}C^2_{ij}-c^2}
\end{equation} of the correlation coefficients $C_{ij}$. We denote the pair $(c,\sigma)$ as the characteristic of $C$. From \eqref{eq:abs_cij_lower_1},  for any $n \times n$ correlation matrix with $n \geq 2$, one has the constraints
\begin{equation} \label{eq:legal_domain}
|c|,\sigma\leq 1\,
\end{equation}
and 
\begin{equation} \label{eq:circle}
	c^2+\sigma^2 =\frac{2}{n(n-1)}\sum_{i>j}C^2_ {ij} \leq 1.
\end{equation}
 Therefore, the mapping $C \mapsto \left(    c,\sigma   \right)$ maps correlation matrices onto the upper half of the unit disc in the $(c,\sigma)$-plane, as shown in Fig.~\ref{fig:examples} (legal domain).
 In the present paper we address generic relations between the characteristic $(c,\sigma)$ of an arbitrary $n \times n$ correlation matrix $C$  and its spectral structure.  
 \begin{figure}[b]
\includegraphics[width=0.5\textwidth]{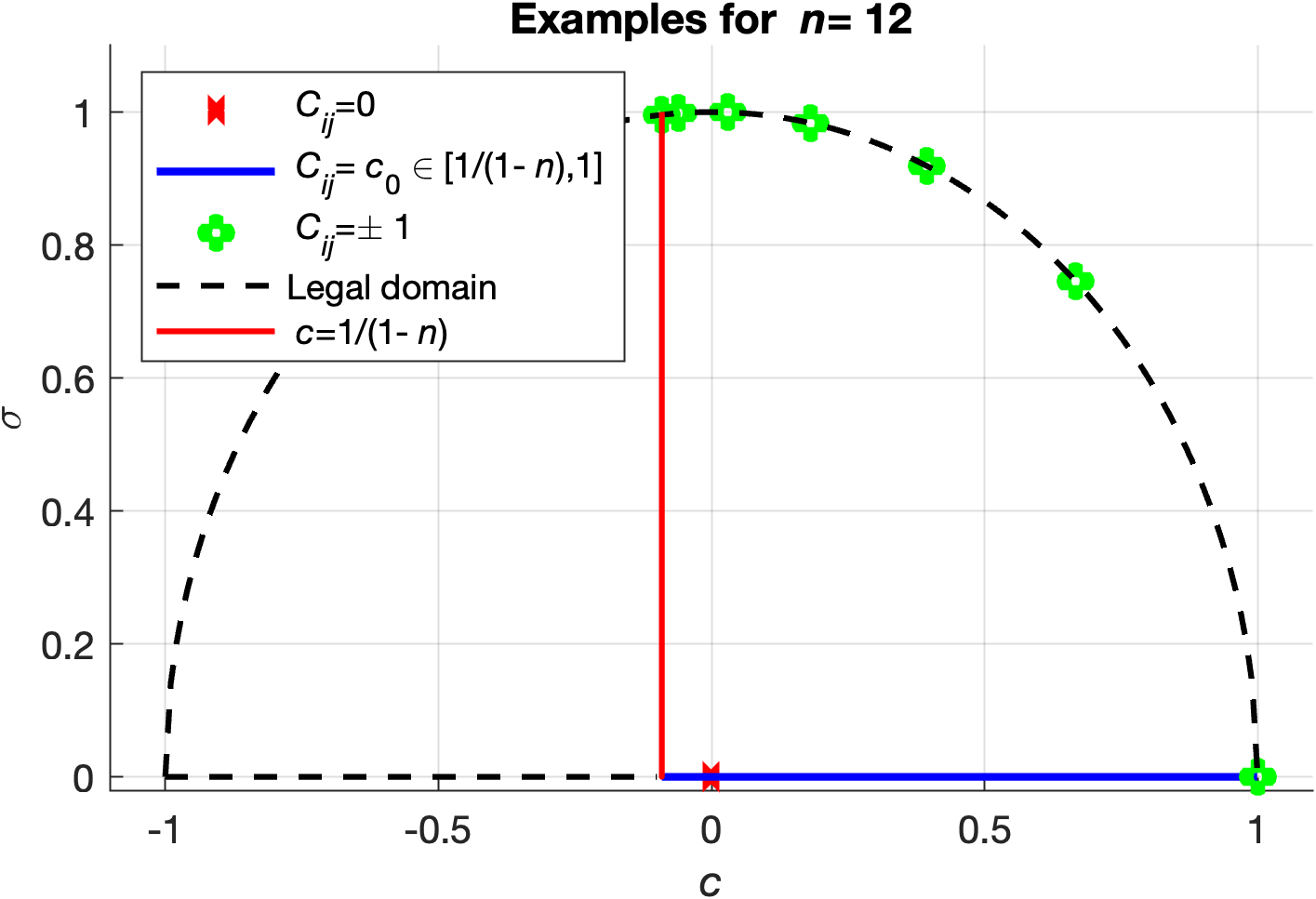}
\caption{The characteristics  $(c,\sigma)$ for correlation matrices form  Examples \eqref{ex:identity} - \eqref{ex:rankOne} for $n=12$.}\label{fig:examples}
\label{fig:examples}
\end{figure}

 Before we consider examples of correlation matrices with  known characteristic $(c,\sigma)$ and their spectral decomposition, we introduce  scalings functions

\begin{equation}\label{eq:DefinitionOfGn}
g_n(x) :=  \frac{(n-1)x+1}{n},
\end{equation}
and
\begin{equation}\label{eq:DefinitionOfSn}
		s(x) := \begin{cases}
			\frac{1}{2}\left(1+\sqrt{2x-1}\right)&, \text{ if } x\geq \frac{1}{2}\\
			x&, \text{ else.}
		\end{cases}
\end{equation} 
We further set
\begin{equation} \label{eq:DefinitionOfS}
	s_n(x) := s(g_n(x)).
\end{equation}
Fig.~\ref{fig:scaling_functions} shows the scaling functions for different $n$ values in the relevant domain. These functions have a distinct meaning for correlation matrices, as we will show. The first observation is the following lemma.
\begin{lemma} \label{lem:sum_and_gn} For a real symmetric $n \times n$ matrix $X$ with $X_{ii} = 1$ for \(1\leq i\leq n\) one has
\begin{equation} \label{eq:gn_sum}
g_n(c(X))=\frac{1}{n^2} \sum_{i,j}X_{ij},
\end{equation} where $c(X)$ is given by   \eqref{eq:def_mean_c}.
	\end{lemma}
	
	\begin{proof}
Using the symmetry $X_{ij}=X_{ji}$  and $X_{ii}=1$ for $1 \leq i,j \leq n$  one has
\begin{equation}
\sum_{i,j}X_{ij}=2\sum_{i < j}X_{ij}+\sum_{ii}X_{ii}=n(n-1)c(X)+n =n^2g_n(c(X)).
\end{equation}
\end{proof} 

 \begin{figure}[t]
 \includegraphics[width=0.4\textwidth]{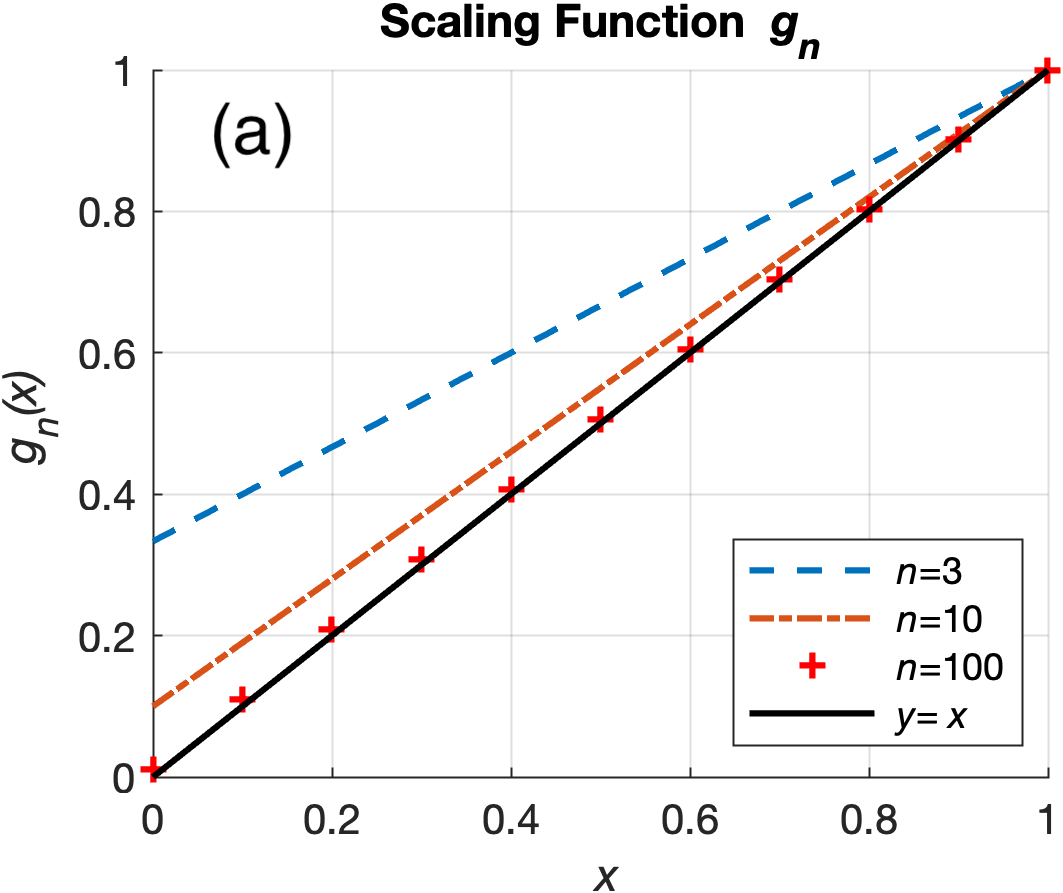}	
\includegraphics[width=0.4\textwidth]{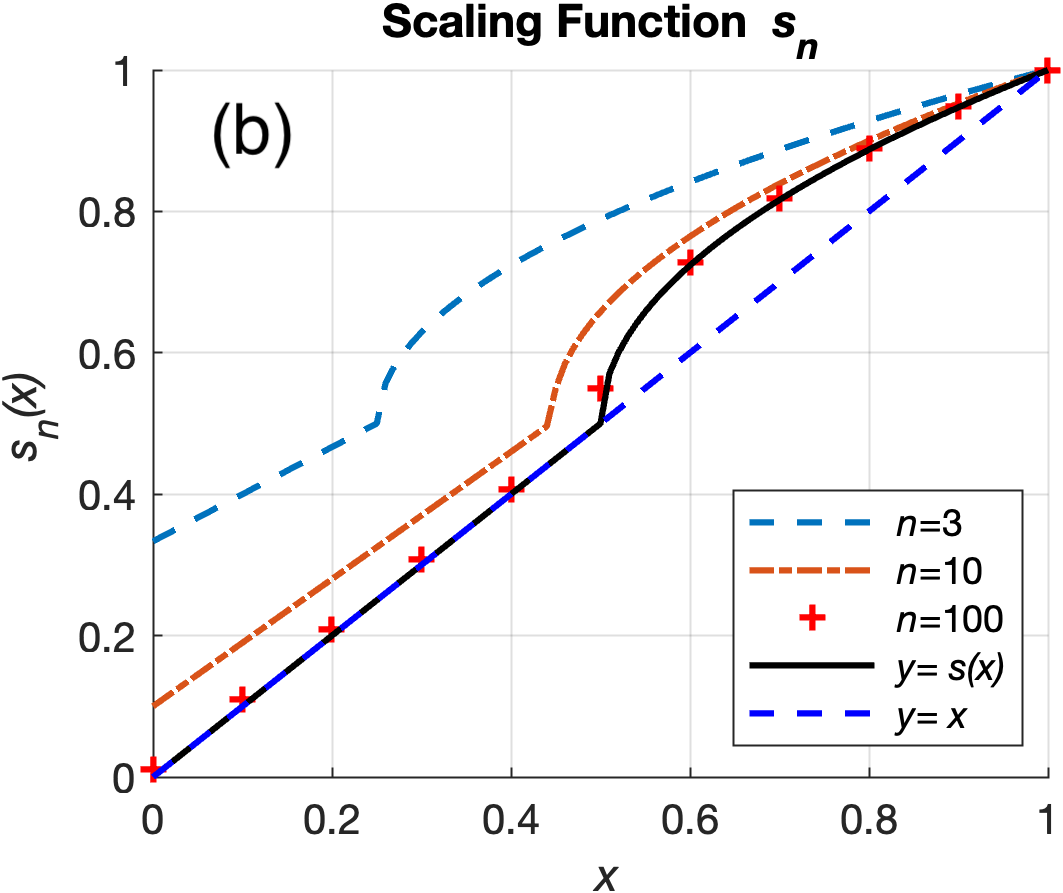}	
\caption{The scaling functions. (a): $g_n$  as defined by  \eqref{eq:DefinitionOfGn}. (b): $s_n$ as defined by \eqref{eq:DefinitionOfS}, for different values of the correlation matrix dimension $n$.}
\label{fig:scaling_functions}
\end{figure}

\subsection{Examples of Correlation Matrices and Motivation from Empirical Observations} \label{sec:examples}

In this section we consider examples of distinct correlation matrices, which we also use for the proofs of our main result. We pick correlation matrices with the characterisic $(c, \sigma)$ from the boundary of the legal domain, given by \eqref{eq:legal_domain} and \eqref{eq:circle}. We especially address the spectral decomposition of correlation matrices,  depending on the position in the $(c, \sigma)$-plane.

\begin{example} \label{ex:identity} (Identity Matrix) 
The simplest correlation matrix is the $n \times n$ matrix C, with $c(C)=\sigma(C)=0$. There exists only one such correlation  matrix for a given $n$, which is the identity matrix denoted by $\operatorname{Id}_n$. It has zero off-diagonal coefficients and 1 on the diagonal.  It has constant eigenvalues
 \begin{equation}
 \lambda_i = 1,
\end{equation} for $1 \leq i \leq n$.  Furthermore, every orthonormal basis of $\R^n$ is an eigenbasis of $\operatorname{Id}_n$. This matrix has the characteristic $(c, \sigma)$ in the origin 	of the $(c, \sigma)$-plane, as shown in Fig.~\ref{fig:examples}.
\end{example}

\begin{example} \label{ex:sigma_0} (Constant Coefficients)
Consider a real symmetric $n \times n$ matrix $C_0$, with $\sigma(C_0)=0$ and 1 on the diagonal. This matrix has constant off-diagonal coefficients
\begin{equation}
  (C_0)_{ij} =c(C_0)= c_0
\end{equation} 
for all \(1\leq i,j\leq n\), \(i\neq j\).
For $c_0>0$, the first eigenvalue\cite{Kaiser_1968,Morrison1976, Friedman1981}
\begin{equation} \label{eq:lam_1_of_C_0}
  \lambda_1=(n-1)c_o+1=ng_n(c_0),
\end{equation}
has the corresponding eigenvector $v_1=\pm \delta_n$, with
\begin{equation} \label{eq:v1_diag}
  \langle v_1,\delta_n \rangle^2=1.
\end{equation}  The remaining eigenvalues
\begin{equation}
  \lambda_j=1-c_0,
\end{equation}  have corresponding eigenvectors $v_j$, with $\langle v_j, \delta_n \rangle=0$. If $c_0$ is in the range
\begin{equation} \label{eq:example_c_0_range}
  \frac{1}{1-n} \leq c_0\leq 1,
\end{equation} then $C_0$  is a correlation matrix.  
For $c_0=0$ one has $C_0=\operatorname{Id}_n$ from the previous example. For $c_0$ in the range given by \eqref{eq:example_c_0_range},  these correlation matrices  continuously cover an interval on the $c$-axis in the $(c,\sigma)$-plane, as shown in Fig.~\ref{fig:examples}. The extreme case $c_0=1$, and hence $\lambda_1=n$, is contained in the family of correlation matrices from the following example.

\end{example}

\begin{example} (Single positive eigenvalue) \label{ex:rankOne}
Consider an $n \times n$ correlation matrix $C$ with
\begin{equation} \label{eq:l1=n}
  \lambda_1=n
\end{equation} and hence $\lambda_j=0$ for $j>1$. From the spectral decomposition \eqref{eq:C_spectral_decomposition} for the diagonal coefficients one has
\begin{equation} \label{eq:C_pm_1_diagonal}
  1=C_{ii}=n(v_{1i})^2
\end{equation} for $1\leq i \leq n$.   From here one has
\begin{equation}
  v_{1i}=\pm \frac{1}{\sqrt{n}}
\end{equation} and hence
\begin{equation} \label{eq:C_ij_pm_1}
C_{ij}= \pm 1
\end{equation} 
for $1 \leq i,j \leq n$. Such correlation matrices are therefore determined by the single  eigenvector $v_1$. For its alignment with respect to $\delta_n$ one has
\begin{equation} \label{eq:v_1_alignment_l1_n}
    \langle v_1, \delta_n \rangle^2=\frac{1}{n}\left(    \sum_{j=1}^nv_{1j}	  \right)^2=\frac{1}{n}\sum_{i,j}\frac{1}{\lambda_1}C_{ij}=g_n(c)
\end{equation} analogous to \eqref{eq:lam_1_of_C_0}.  We used \eqref{eq:l1=n} and \eqref{eq:gn_sum} for $X_{ij}=C_{ij}$ in the last step. Therefore, the alignment of the first eigenvector $v_1$ is determined by the mean correlation $c$. Such correlation matrices can have the first eigenvector parallel as well as perpendicular to the diagonal vector $\delta_n$.
 We note that replacing $v_1$ by $-v_1$ keeps $C§$ unchanged.  Hence for a given  $n$, we easily count correlation matrices satisfying \eqref{eq:l1=n}. From \eqref{eq:v_1_alignment_l1_n}  one has the quantisation  of the mean correlation
\begin{equation} \label{eq:c_quantised}
   g_n(c)=\frac{1}{n}\left(    \sum_{j=1}^nv_{1j}	  \right)^2=\left(    1-\frac{2k}{n}   \right)^2.
\end{equation} Here $k=0,\ldots,\frac{n}{2}$ when $n$ is even 
and $k=0,\ldots,\frac{n-1}{2}$ when \(n\) is odd. 
For the mean correlation one therefore has $c \geq 1/(1-n)$ similar to \eqref{eq:example_c_0_range}.                                                                                  From \eqref{eq:C_ij_pm_1} one has
\begin{equation} \label{eq:on_the_circle}
  c^2+\sigma^2=1.
\end{equation}

In the $(c, \sigma)$-plane, these correlation matrices are characterised by  points on the  unit circe, with $(c,\sigma)$ determined by  \eqref{eq:c_quantised} and \eqref{eq:on_the_circle}, as shown in Fig.~\ref{fig:examples}. As we will show, any $n \times n$ correlation matrix satisfying  \eqref{eq:on_the_circle} has $\lambda_1=n$ and hence a single positive eigenvalue.
\end{example}

 As we mentioned in Sec.~\ref{sec:intro}, empirical\cite{Laloux1999,Plerou1999,Plerou2002,Song2011,Stepanov_2015} as well as simulated\cite{Friedman1981,SornetteMalevergne2004}  correlation matrices 
have been observed to generically share the features 
\begin{equation} \label{eq:features_2}
  \lambda_1 \approx nc
\end{equation}
and 
\begin{equation} \label{eq:features_3}
  \langle v_1,\delta_n \rangle ^2\approx 1
\end{equation} for various values of $c,\sigma$ and $n$. As the examples show, \eqref{eq:features_3} is not a generic feature of a correlation matrices.

It has been shown by Füredi \textit{et al.}\cite{Furedi_Komlos_1981} and Malevergne \textit{et al.}\cite{SornetteMalevergne2004} that correlation matrices with vanishing standard deviation and positive mean correlation automatically have the features \eqref{eq:features_2} and \eqref{eq:features_3} in probability and in the limit $n  \rightarrow \infty$.  In the following sections  we address the spectral structure of an arbitrary $n \times n$ correlation matrix $C$ for $n\geq 2$ with known characteristic $(c,\sigma$).

\subsection{Methods:  A Characteristic Lemma for Correlation Matrices} \label{sec:methods}

In this section we derive a lemma, which give characteristic  constraints for the spectral structure of a correlation matrix $C$ in terms of $c$ and $\sigma$ for any $n \geq 2$. We will deduce our main results applying geometric methods to the  lemma.

 To quantify the eigenbasis geometry we  introduce the weights
\begin{equation} \label{eq:def_of_weights}
	w_j:=\langle v_j,\delta_n\rangle^2
\end{equation} for $1\leq j \leq n$, which  measure the alignment of the eigenvectors with respect to $\delta_n$. The weights $w_j$, therefore, provide a diagonality measure for   the eigenvectors. We note that the weights in general depend on the eigenbasis choice of the underlying correlation matrix $C$.

 With the weights we rewrite the eigenvectors
\begin{equation}
  v_j=\pm\sqrt{w_j}\delta_n+r_j.
\end{equation} Here  $r_j \in \R^n$ is a vector with $\langle r_j,\delta_n\rangle=0$ for $1 \geq j\geq n$. For its magnitude one has 
\begin{equation}
  \| r_j\|^2=1-w_j.
\end{equation}
For correlation matrices with a diagonal eigenvector $v_i=\pm \delta_n$, one has $w_i=1$. From the eigenbasis orthonormality for such correlation matrices one has $w_j =0$  and hence $v_j=r_j$ for $j \ne i$. More generally, one has the normalisation\cite{marcus1992survey,Mandolesi2020}
\begin{equation} \label{eq:weights_norm}
	\sum_{j=1}^nw_j=1.
\end{equation} 
We note that in general,  for the highest weight
\begin{equation}
w_{\max}:=  \max_{n \geq i \geq 1}(w_i),
\end{equation} one has  $ w_{\max}\ne w_1$, as it is the case in Example \ref{ex:rankOne} for $c>0$ small enough. For correlation matrices with $w_{\max}=w_j > w_1$ for some $j \ne 1$ one therefore has
\begin{equation}
  \langle v_1,\delta_n\rangle^2<\langle v_j,\delta_n\rangle^2.
\end{equation}

We additionally introduce the normalised  eigenvalues 
\begin{equation}
\tilde{\lambda}_i:=\frac{\lambda_i}{n}.
\end{equation} From \eqref{eq:trace}  one has the normalisation
\begin{equation} \label{eq:lamd_norm}
  	\sum_{j=1}^n\tilde{\lambda}_j=1,
\end{equation}  analogous  to  \eqref{eq:weights_norm}. Furthermore, for  the weights and for the rescaled eigenvalues one has
\begin{equation}
  0\leq w_i,\,\tilde{\lambda}_i \leq 1,
\end{equation} for $1\leq i \leq n $. For the characteristic lemma, we introduce the vector of the normalised eigenvalues 
\begin{equation}
  \tilde{\lambda}:=(\lambda_1,\ldots,\lambda_n)/n
\end{equation} and the weights vector 
\begin{equation}
 w:=(w_1,\ldots ,w_n)=(\langle v_1,\delta_n\rangle^2,\ldots,\langle v_n,\delta_n\rangle^2)
\end{equation} respectively.  For any $n \times n$ correlation matrix $C$, the alignment of the corresponding vectors $  \tilde{\lambda},w \in \R^n$ is notably restricted by the characteristic $(c,\sigma)$, as we show in the following lemma.

\begin{lemma} (Characteristic Lemma) \label{lem:Characteristic_Lemma}
For $n \geq 2$, let $C$ be an $n \times n$ correlation matrix with the mean correlation $c$ and the standard deviation $\sigma$. For its eigenvalues $\lambda_1,\ldots,\lambda_n$ and an eigenbasis $v_1,\ldots, v_n$, one has

\begin{equation} \label{eq:geom_sclar_prod}
\langle\tilde{\lambda},w\rangle =g_n(c)
\end{equation}
and
\begin{equation} \label{eq:geom_norm}
\|\tilde{\lambda}\|^2=g_n(c^2+\sigma^2).
\end{equation}
	Here \(\tilde{\lambda}:=(\lambda_1,\ldots,\lambda_n)/n\) denotes the normalised   eigenvalues vector and $w:=(w_1,\ldots ,w_n)$ denotes the weights vector, respectively. 
 \end{lemma}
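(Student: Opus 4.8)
The plan is to reduce both identities to quadratic quantities of $C$ that can be evaluated through the spectral decomposition \eqref{eq:C_spectral_decomposition} and then re-expressed via Lemma \ref{lem:sum_and_gn}. The guiding principle in both cases is that trace-type sums over the eigenvalues turn into entrywise sums over the matrix, which the lemma translates into the scaling function $g_n$ evaluated at the appropriate moment of the coefficients.

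For \eqref{eq:geom_sclar_prod}, I would start from the definition of the inner product, writing
\[
\langle\tilde{\lambda},w\rangle=\frac{1}{n}\sum_{j=1}^n\lambda_j\langle v_j,\delta_n\rangle^2 .
\]
The key step is to recognise that $\langle v_j,\delta_n\rangle^2=\delta_n^Tv_jv_j^T\delta_n$, so that, inserting the spectral decomposition \eqref{eq:C_spectral_decomposition}, the sum collapses to the single quadratic form $\delta_n^TC\delta_n$. Since $\delta_n=\frac{1}{\sqrt{n}}(1,\ldots,1)$, this quadratic form equals $\frac{1}{n}\sum_{i,j}C_{ij}$, whence $\langle\tilde{\lambda},w\rangle=\frac{1}{n^2}\sum_{i,j}C_{ij}$. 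Applying Lemma \ref{lem:sum_and_gn} to $X=C$ — legitimate because $C$ is symmetric with unit diagonal — identifies this expression directly with $g_n(c)$, giving \eqref{eq:geom_sclar_prod}.

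For \eqref{eq:geom_norm}, I would use the standard identity $\|\tilde{\lambda}\|^2=\frac{1}{n^2}\sum_j\lambda_j^2=\frac{1}{n^2}\mathrm{Tr}(C^2)$ and then expand $\mathrm{Tr}(C^2)=\sum_{i,j}C_{ij}C_{ji}=\sum_{i,j}C_{ij}^2$ using the symmetry of $C$. The crucial observation is to apply Lemma \ref{lem:sum_and_gn} not to $C$ itself but to the entrywise-squared matrix $X$ with $X_{ij}=C_{ij}^2$, which is again symmetric and has unit diagonal because $C_{ii}=1$. The lemma yields $\frac{1}{n^2}\sum_{i,j}C_{ij}^2=g_n(c(X))$, and it remains to identify $c(X)$. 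By definition $c(X)=\frac{2}{n(n-1)}\sum_{i>j}C_{ij}^2$, which the defining relation for the standard deviation \eqref{eq:def_variance} identifies precisely as $c^2+\sigma^2$; substituting gives \eqref{eq:geom_norm}.

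The computations are essentially routine, so I would not expect a serious obstacle. The only genuinely nontrivial move — and the one I would flag as the crux — is the realisation in the second part that the correct object to feed into Lemma \ref{lem:sum_and_gn} is the Hadamard square of $C$ rather than $C$ itself, after which the characteristic $(c,\sigma)$ enters solely through the defining relation between the second moment of the off-diagonal coefficients and $c^2+\sigma^2$.
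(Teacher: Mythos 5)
Your proof is correct and takes essentially the same route as the paper's: both identities are obtained by applying Lemma~\ref{lem:sum_and_gn} first to $X=C$ (collapsing the spectral sum to the quadratic form $\delta_n^TC\delta_n$) and then to the entrywise square $X_{ij}=C_{ij}^2$ (via $\operatorname{Tr}(C^2)=\sum_{i,j}C_{ij}^2$ and the identification of the second moment with $c^2+\sigma^2$). The only difference is that you run the chains of equalities in the reverse direction, which is immaterial.
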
 

\begin{proof} 
Applying Lemma~\ref{lem:sum_and_gn} 
with $X_{ij}=C_{ij}$, one has
\begin{equation}
g_n(c)=\frac{1}{n^2}\sum_{i,j}C_{ij}=\frac{1}{n}\delta_n^TC\delta_n=\frac{1}{n}\sum_{i=1}^{n}\lambda_i \left< \delta_n,v _i \right>^2=\sum_{i=1}^{n}\tilde{\lambda}_i w_i.	\end{equation}
Here, in the third step, we use  \eqref{eq:C_spectral_decomposition} which is an identity from linear algebra \cite{LLDines_1943,marcus1992survey}.

For $X_{ij}=C_{ij}^2$ together with \eqref{eq:circle}, one has 
\begin{equation}
g_n(c^2+\sigma^2)=\frac{1}{n^2}\sum_{i,j}C^2_{ij}=\frac{1}{n^2}\text{Tr}(C^2)=\sum_{k=1}^{n}\tilde{\lambda}^2_k.
\end{equation}
In the second step, we use the symmetry of \(C\). In the third step, we use standard properties of the trace for symmetric matrices. The equality in the third step also appears in context of the  Frobenius norm  (resp.~the  Hilber-Schmidt norm) \cite{Horn_1990,golub_1996}.
\end{proof}

\textbf{Methods}  Geometrically, the characteristic  Lemma~\ref{lem:Characteristic_Lemma} shows how for an $n \times n$ correlation matrix $C$, its characteristics $(c,\sigma)$ determines the norm $\|\tilde{\lambda}\|$ and the scalar product $\langle\tilde{\lambda},w\rangle$, which is equal to a weighted average of the normalised eigenvalues. The vectors $\tilde{\lambda}$ and $w$, both have non-negative components and their projection \begin{equation}
  \langle \tilde{\lambda},\delta_n\rangle=\langle w,\delta_n\rangle=\frac{1}{\sqrt{n}}
\end{equation} onto the diagonal vector $\delta_n$
 is constant for any correlation matrix for a fixed $n \geq 2$. Applying geometric methods, we will show that for correlation matrices with right-hand side in \eqref{eq:geom_norm} large enough, the vector $\tilde{\lambda}$ automatically has a distinctly large component. Hence the underlying correlation matrix has a distinctly large eigenvalue $\lambda_1$. Independently, the right-hand side in \eqref{eq:geom_sclar_prod} is large enough then the weights vector $w$ automatically has	 distinctly large first component  $w_1$ and for the first eigenvector one has $v_1 \approx \pm \delta_n$.  

 We conclude this section by showing that \eqref{eq:example_c_0_range} is generically valid for all correlation matrices.
 
 \begin{corollary} \label{cor:smalles_mean_c}
For  $n \times n$ correlation matrices $C$, with the mean correlation $c=c(C)$, one  generically has

\begin{equation}\label{eq:smallest_c}
c(C) \geq \frac{1}{1-n}. 
\end{equation}
\end{corollary}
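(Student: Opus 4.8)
The plan is to read the desired inequality directly off the Characteristic Lemma. First I would note that the normalised eigenvalue vector $\tilde{\lambda}$ and the weights vector $w$ both have non-negative components: $\tilde{\lambda}_i=\lambda_i/n\geq 0$ by positive semi-definiteness (condition (iii)), and $w_i=\langle v_i,\delta_n\rangle^2\geq 0$ as a square. Consequently their inner product is a sum of non-negative terms, so $\langle\tilde{\lambda},w\rangle\geq 0$. By \eqref{eq:geom_sclar_prod} this inner product equals $g_n(c)$, whence $g_n(c)\geq 0$.

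It then remains to solve this for $c$. Inserting the definition \eqref{eq:DefinitionOfGn},
\begin{equation}
g_n(c)=\frac{(n-1)c+1}{n}\geq 0.
\end{equation}
Since $n\geq 2>0$, this is equivalent to $(n-1)c+1\geq 0$, and since $n-1>0$ we may divide to obtain $c\geq -1/(n-1)=1/(1-n)$, which is exactly \eqref{eq:smallest_c}.

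Alternatively, and even more directly, one can bypass the spectral decomposition entirely: by Lemma~\ref{lem:sum_and_gn} with $X=C$ one has $g_n(c)=\frac{1}{n^2}\sum_{i,j}C_{ij}=\frac{1}{n}\delta_n^TC\delta_n$, which is non-negative by condition (iii) applied to the vector $\delta_n$. The same arithmetic as above then yields \eqref{eq:smallest_c}. This route makes transparent that the bound is simply the quadratic form of $C$ evaluated along the diagonal direction.

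I do not anticipate a genuine obstacle here; the statement is an immediate positivity consequence rather than a deep result, and indeed it holds for every correlation matrix, not merely generically. The only point demanding a little care is the handling of the inequalities when clearing the denominator $n$ and dividing by $n-1$, both of which are strictly positive precisely because $n\geq 2$.
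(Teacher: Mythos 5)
Your proof is correct and your main argument is exactly the paper's: Corollary~\ref{cor:smalles_mean_c} is deduced from \eqref{eq:geom_sclar_prod} via $g_n(c)=\langle\tilde{\lambda},w\rangle\geq 0$, with the non-negativity of $\tilde{\lambda}_i$ and $w_i$ spelled out. Your alternative route via $g_n(c)=\frac{1}{n}\delta_n^TC\delta_n\geq 0$ is also fine, but it is not genuinely different: it simply unwinds the proof of Lemma~\ref{lem:Characteristic_Lemma}, which itself passes through this quadratic form.
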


\begin{proof}
	Follows directly from \eqref{eq:geom_sclar_prod} with  $g_n(c)=	\langle\tilde{\lambda},w\rangle \geq 0$.
\end{proof}

In other words, a real symmetric $n \times n$ matrix  $C$ with 1 on its diagonal and a mean coefficient $c < 1/(1-n)$, necessarily has a negative eigenvalue and is therefore not a correlation matrix. Especially, if rounding up the empirical mean correlation up to the second decimal, the mean correlation can not be negative for $n \geq 202$. Corollary \ref{cor:smalles_mean_c} is especially important for interpretation of empirical correlations.

\section{Main Results}\label{sec:main_reuslts}

\begin{figure}[b]
\includegraphics[scale=.7]{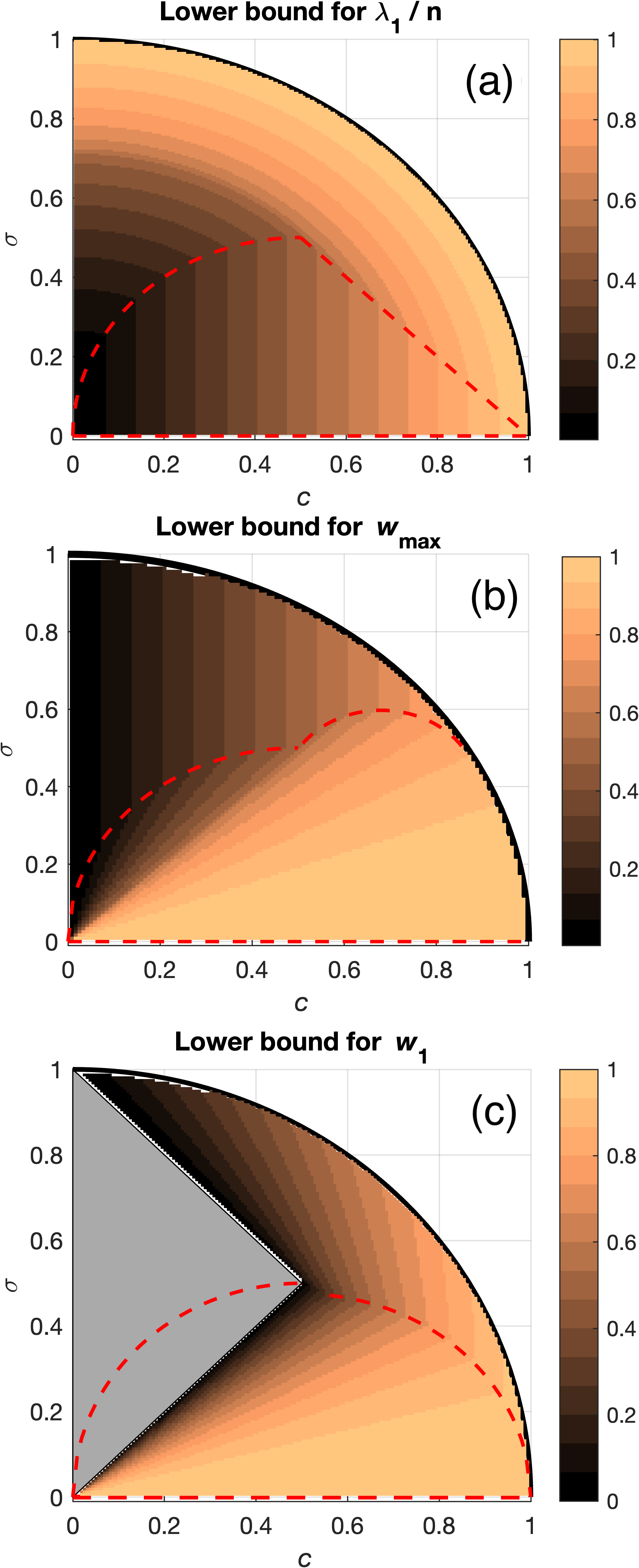}
\caption{The visualisation of the bounds from Corollary~\ref{cor:l_1_w_max} in 15 colour shades from 0 to 1 (colour online).   (a): the lower bound for largest eigenvalue $\lambda_1$ as given by \eqref{eq:l_1_universal}. The dashed curve encloses the domain with $c\geq s(c^2+s^2)$.   (b): the lower bound for $w_{\max}$ as given by \eqref{eq:w_max_universal}. The dashed curve encloses the domain with $c\leq s(c^2/(c^2+\sigma^2))$. (c): the lower bound for $w_1$ as given by \eqref{eq:w_1_universal}. The dashed curve encloses the domain with $\sigma^2/c^2\leq (1-c)/ s(c^2+\sigma^2)$. In the uncoloured area the right-hand side of \eqref{eq:w_1_universal} is negative.}     \label{fig:surfaces}
\end{figure}

In this section we state our main results. We derive lower bounds for the largest eigenvalue \(\lambda_1\), the corresponding weight $w_1$  and the largest weight \(w_\text{max}\) of an $n \times n$ correlation matrix \(C\) in terms of the mean correlation \(c\) and the standard deviation \(\sigma\).

\begin{theorem} \label{th:l_1_w_max}
Let \(C \neq \operatorname{Id}_n\) be an \(n\times n\) correlation matrix, \(n \geq 2\),  with mean correlation \(c=c(C)\), standard deviation  \(\sigma=\sigma(C)\) and let \(\lambda_1 \geq \ldots\geq \lambda_n\) be the eigenvalues of \(C\). Furthermore, let  \(v_1,\ldots v_n \in\R^n\) be an orthonormal basis of \(\R^n\) consisting of eigenvectors of \(C\) with \(Cv_j=\lambda_jv_j\) for \(1\leq j\leq n\). We put \(w_j:=\langle v_j,\delta_n\rangle^2\) for $1 \leq j \leq n$, where \(\delta_n=(1,\ldots,1)/\sqrt{n}\) is the normalised diagonal vector in \(\R^n\). We further put  \(w_{\max}=\max_{1\leq j\leq n}w_j\). We have

\begin{equation}\label{eq:main_result_l_1}
\frac{\lambda_1}{n}\geq \max\{ s_n(c^2+\sigma^2),g_n(c)\}
\end{equation} 
and
\begin{equation}\label{eq:main_result_w_max}
	w_{\max}\geq  \max\{ s_n \left( \frac{c^2}{c^2+\sigma^2} \right),g_n(c)\}.
\end{equation}
		Furthermore, if \(c>0\) then
		\begin{equation}w_1\geq 1-\min\left\{ \frac{n-1}{n}\frac{\sigma^2}{c^2},  \frac{1-g_n(c)}{s_n(c^2+\sigma^2)}\right\}.\end{equation} 
\end{theorem}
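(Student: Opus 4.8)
The plan is to read everything off the two identities of the Characteristic Lemma, $\langle\tilde{\lambda},w\rangle=g_n(c)$ and $\|\tilde{\lambda}\|^2=g_n(c^2+\sigma^2)$, together with the structural facts that $\tilde{\lambda}$ and $w$ are nonnegative vectors summing to $1$, that $\tilde{\lambda}_1$ is the largest component of $\tilde{\lambda}$, and that $\langle\tilde{\lambda},\delta_n\rangle=\langle w,\delta_n\rangle=1/\sqrt n$. The linchpin is an elementary lemma on probability vectors: if $p\in\R^n$ has $p_j\ge0$ and $\sum_jp_j=1$, then $\max_jp_j\ge s(\|p\|^2)$ with $s$ as in \eqref{eq:DefinitionOfSn}. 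I would prove it by noting $\|p\|^2=\sum_jp_j^2\le p_{\max}\sum_jp_j=p_{\max}$, which already settles the case $\|p\|^2\le\tfrac12$; when $\|p\|^2\ge\tfrac12$ the same estimate forces $p_{\max}\ge\tfrac12$, and then concentrating the remaining mass $1-p_{\max}\le p_{\max}$ into one coordinate gives $\|p\|^2\le p_{\max}^2+(1-p_{\max})^2$, so solving this quadratic for $p_{\max}\ge\tfrac12$ returns $p_{\max}\ge\tfrac12(1+\sqrt{2\|p\|^2-1})$. I also record that $s$ is nondecreasing.

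The two $g_n(c)$ bounds are then immediate: $g_n(c)=\sum_j\tilde{\lambda}_jw_j\le\tilde{\lambda}_1\sum_jw_j=\tilde{\lambda}_1$ gives $\lambda_1/n\ge g_n(c)$, and symmetrically $g_n(c)\le w_{\max}\sum_j\tilde{\lambda}_j=w_{\max}$. For the first $s_n$-bound I apply the probability-vector lemma to $p=\tilde{\lambda}$, whose squared norm is exactly $g_n(c^2+\sigma^2)$, so $\tilde{\lambda}_1=\max_j\tilde{\lambda}_j\ge s(g_n(c^2+\sigma^2))=s_n(c^2+\sigma^2)$. For the $w_{\max}$-bound the decisive extra step is a lower bound on $\|w\|^2$: writing $\tilde{\lambda}=\tfrac1{\sqrt n}\delta_n+a$ and $w=\tfrac1{\sqrt n}\delta_n+b$ with $a,b\perp\delta_n$, the Lemma gives $\|a\|^2=\tfrac{n-1}{n}(c^2+\sigma^2)$ and $\langle a,b\rangle=\tfrac{n-1}{n}c$, so Cauchy--Schwarz $|\langle a,b\rangle|\le\|a\|\,\|b\|$ yields $\|b\|^2\ge\tfrac{n-1}{n}\tfrac{c^2}{c^2+\sigma^2}$ and hence $\|w\|^2=\tfrac1n+\|b\|^2\ge g_n\big(\tfrac{c^2}{c^2+\sigma^2}\big)$. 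Applying the lemma to $w$ and using monotonicity of $s$ then gives $w_{\max}\ge s_n(c^2/(c^2+\sigma^2))$. Passing through the $\delta_n$-perpendicular parts is essential here, since a naive Cauchy--Schwarz on the full vectors loses the fixed projection onto $\delta_n$ and produces a strictly weaker bound.

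For the $w_1$ estimates the second term comes out cleanly. Since $\sum_j\tilde{\lambda}_j=1$ and $0\le w_j\le1$, I write $1-g_n(c)=\sum_j\tilde{\lambda}_j(1-w_j)\ge\tilde{\lambda}_1(1-w_1)$, so $1-w_1\le(1-g_n(c))/\tilde{\lambda}_1$; combining with $\tilde{\lambda}_1\ge s_n(c^2+\sigma^2)>0$ from the first part yields $w_1\ge1-(1-g_n(c))/s_n(c^2+\sigma^2)$. The main obstacle is the first term, $1-w_1\le\tfrac{n-1}{n}\sigma^2/c^2$, equivalently $c^2(1-w_1)\le\tfrac{n-1}{n}\sigma^2$. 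Here neither the weighted-average bound nor a single Cauchy--Schwarz suffices, because $w_1$ is the weight of the \emph{top} eigenvalue and may be far below $w_{\max}$; the estimate genuinely requires the ordering $\tilde{\lambda}_1=\max_j\tilde{\lambda}_j$ \emph{and} the nonnegativity of all coordinates $w_j\ge0$.

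My plan for this last term is to bound $1-w_1=\sum_{j\ge2}w_j$ by minimising $w_1$ over all $(\tilde{\lambda},w)$ compatible with the two identities, the ordering, and $w_j\ge0$. For fixed $\tilde{\lambda}$ this is a linear program in $w$ whose optimum sits at an extreme point concentrating the $w$-mass on few coordinates; crucially, the constraints $w_j\ge0$ limit how much weight can be moved off the first coordinate while keeping $\langle\tilde{\lambda},w\rangle=g_n(c)$ fixed, and it is this feasibility restriction — not the identities alone — that manufactures the factor $\sigma^2/c^2$. I would then optimise over the ordered vector $\tilde{\lambda}$ subject to $\|\tilde{\lambda}\|^2=g_n(c^2+\sigma^2)$ and verify that the extremal value of $1-w_1$ equals $\tfrac{n-1}{n}\sigma^2/c^2$, so that the bound is vacuous precisely in the regime $\sigma^2>\tfrac{n}{n-1}c^2$ where its right-hand side turns negative, with the constant-coefficient matrix of Example~\ref{ex:sigma_0} and the single-positive-eigenvalue matrix of Example~\ref{ex:rankOne} as the boundary cases. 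Carrying out this two-stage optimisation cleanly is the technical heart of the argument and the step I expect to be hardest.
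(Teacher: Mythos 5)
Everything in your proposal except the bound $w_1\geq 1-\tfrac{n-1}{n}\sigma^2/c^2$ follows the paper's own route and is correct: your probability-vector lemma and its proof are exactly the paper's Lemma~\ref{Lem:DegenerationEstimate}; the two $g_n(c)$ estimates are Lemma~\ref{Lem:l1wmaxestimates}; your passage to the $\delta_n$-orthogonal components $a,b$ before applying Cauchy--Schwarz is precisely Lemma~\ref{Lem:CorrelationCoefEstimate}, used the same way in Lemma~\ref{Lem:l1wmaxEstimates}; and your one-line derivation $1-g_n(c)=\sum_j\tilde{\lambda}_j(1-w_j)\geq\tilde{\lambda}_1(1-w_1)$ is a compressed, correct version of Lemma~\ref{Lem:w1Estimate} and Corollary~\ref{Cor:w1Estimate}.

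The genuine divergence, and the only hole, is the remaining estimate. The paper does not obtain it from the Characteristic Lemma at all: it writes $C=C_0+(C-C_0)$ with $C_0$ the constant-coefficient matrix of Example~\ref{ex:sigma_0}, uses $\|C-C_0\|_F^2=n(n-1)\sigma^2$ together with the spectral gap $ng_n(c)-(1-c)=nc$ of $C_0$, and applies a Stewart-type eigenvector perturbation inequality (Lemma~\ref{Lem:Wieland}) to a top eigenvector of $C$, giving $(nc)^2(1-w_1)\leq n(n-1)\sigma^2$ (Lemma~\ref{Lem:w1EstimatePerturbation}). You instead propose a two-stage optimisation over all pairs $(\tilde{\lambda},w)$ compatible with the two scalar identities, and you leave exactly this step unexecuted, so as submitted the proof is incomplete at the one place where a new idea is required. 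Moreover the verification target you announce is false: the minimum of $w_1$ over that relaxed feasible set is in general strictly \emph{larger} than $1-\tfrac{n-1}{n}\sigma^2/c^2$, so one cannot ``verify equality''; one must prove an inequality.

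That said, your plan is viable, which is not obvious and is worth recording. For fixed $\tilde{\lambda}$ the linear program puts the remaining mass on coordinate $2$, giving $\min w_1=\bigl(g_n(c)-\tilde{\lambda}_2\bigr)/\bigl(\tilde{\lambda}_1-\tilde{\lambda}_2\bigr)$, and feasibility forces $\tilde{\lambda}_1\geq g_n(c)$. Writing $\tilde{\lambda}_1=g_n(c)+x$ and $\tilde{\lambda}_2=\tfrac{1-g_n(c)-x}{n-1}+y$, Cauchy--Schwarz applied to $\tilde{\lambda}_3,\ldots,\tilde{\lambda}_n$ turns $\|\tilde{\lambda}\|^2=g_n(c^2+\sigma^2)$ into the constraint $2cx+\tfrac{n}{n-1}x^2+\tfrac{n-1}{n-2}y^2\leq\tfrac{n-1}{n}\sigma^2$, and the desired bound is equivalent to $y\leq c-\beta x$ with $\beta=\tfrac{n(c^2-\sigma^2)}{(n-1)\sigma^2}$. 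Since the constraint gives $x\leq\tfrac{1}{2}\tau c$ with $\tau=\tfrac{n-1}{n}\sigma^2/c^2$, one has $c-\beta x>0$, and the claim reduces to nonnegativity, for $x\geq 0$, of the quadratic $(c-\beta x)^2-\tfrac{n-2}{n-1}\bigl(\tau c^2-2cx-\tfrac{n}{n-1}x^2\bigr)$; its constant term is $c^2\bigl(1-\tfrac{n-2}{n-1}\tau\bigr)>0$, and either its linear coefficient $2c\bigl(\tfrac{n-2}{n-1}-\beta\bigr)$ is nonnegative, or else $\tau<\tfrac12$ and its discriminant $4c^2\tfrac{n-2}{n-1}\bigl(\tfrac{2n\tau}{n-1}-\tfrac1\tau-\tfrac2{n-1}\bigr)$ is negative. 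So the scalar constraints do suffice, and in fact yield a slightly stronger bound than the theorem states; but this page of work is what your proposal defers. In comparison, the paper's perturbation argument is shorter and exposes the mechanism (proximity to $C_0$ and the gap $nc$), while your route, once completed, has the merit of staying entirely inside Lemma~\ref{lem:Characteristic_Lemma}.
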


Since \(c=\sigma=0\) precisely when \(C=\operatorname{Id}_n\), we need to exclude the identity matrix in Theorem~\ref{th:l_1_w_max} to ensure that the right-hand side of \eqref{eq:main_result_w_max} is well defined.  We note that the estimate \(\lambda_1\geq (n-1)c+1=ng_n(c)\) follows immediately from observations made by Nicewander\cite{NICEWANDER_1974} and Meyer\cite{Meyer_1975} as a consequence of the Min-Max-Principle for eigenvalues. From  \eqref{eq:main_result_l_1} we observe that not only correlation matrices with large  $c$ automatically have a distinctly large eigenvalue. The same is also true for correlation matrices with vanishing mean correlation $c$ and a large standard deviation $\sigma$. 
We state Theorem~\ref{th:l_1_w_max} less sharp but generically valid for correlation matrices of any dimension $n \geq 2$ by replacing $s_n(x)$ by $s(x)$ and $g_n(x)$ by $x$.

\begin{corollary}\label{cor:l_1_w_max}
With the assumptions and the notation as in Theorem \ref{th:l_1_w_max}, one  has
	\begin{equation} \label{eq:l_1_universal}
\frac{\lambda_1}{n}\geq \max\{c,s(c^2+\sigma^2)\} 
\end{equation}
and
\begin{equation} \label{eq:w_max_universal}
w_{\operatorname{max}}\geq \max\left\{c,s\left(\frac{c^2}{c^2+\sigma^2}\right)\right\}.\end{equation}
		If $c>0$ then
\begin{equation} \label{eq:w_1_universal}
w_1\geq 1-\min\left\{ \frac{\sigma^2}{c^2},  \frac{1-c}{s(c^2+\sigma^2)}\right\}.
\end{equation}
	\end{corollary}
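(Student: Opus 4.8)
The plan is to deduce the corollary directly from Theorem~\ref{th:l_1_w_max} by comparing the unscaled functions $s$ and the identity with their $n$-dependent counterparts $s_n$ and $g_n$. The only genuinely new inputs are two elementary monotonicity facts about the scaling functions; once these are in place, each of the three bounds drops out of the monotonicity of $\max$ and $\min$.

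First I would record that for $x \leq 1$ one has $g_n(x) - x = (1-x)/n \geq 0$, so $g_n(x) \geq x$; in particular $g_n(c) \geq c$ and $g_n(c^2+\sigma^2) \geq c^2+\sigma^2$, since $c \leq 1$ and $c^2+\sigma^2 \leq 1$ by \eqref{eq:legal_domain} and \eqref{eq:circle}. Next I would check that the function $s$ from \eqref{eq:DefinitionOfSn} is nondecreasing: it equals the identity on $[0,1/2)$, has strictly positive derivative on $(1/2,\infty)$, and the two branches agree at $x=1/2$. Combining these gives the central comparison $s_n(x) = s(g_n(x)) \geq s(x)$ for every argument in $[0,1]$, and likewise $s_n(c^2/(c^2+\sigma^2)) \geq s(c^2/(c^2+\sigma^2))$.

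With these comparisons the first two bounds are immediate. Since $\max\{s_n(c^2+\sigma^2), g_n(c)\} \geq \max\{s(c^2+\sigma^2), c\}$ and $\max\{s_n(c^2/(c^2+\sigma^2)), g_n(c)\} \geq \max\{s(c^2/(c^2+\sigma^2)), c\}$, substituting into \eqref{eq:main_result_l_1} and \eqref{eq:main_result_w_max} yields \eqref{eq:l_1_universal} and \eqref{eq:w_max_universal}. For the $w_1$ bound I would show that each entry of the minimum in Theorem~\ref{th:l_1_w_max} is dominated by the corresponding entry in \eqref{eq:w_1_universal}: the first because $\frac{n-1}{n}\frac{\sigma^2}{c^2} \leq \frac{\sigma^2}{c^2}$, and the second because $1-g_n(c) \leq 1-c$ while $s_n(c^2+\sigma^2) \geq s(c^2+\sigma^2) > 0$, so that $\frac{1-g_n(c)}{s_n(c^2+\sigma^2)} \leq \frac{1-c}{s(c^2+\sigma^2)}$. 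As the minimum is monotone in each argument, the Theorem's minimum is at most the corollary's, and subtracting from $1$ reverses the inequality to give \eqref{eq:w_1_universal}.

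The step requiring the most care is this third bound, where the inequalities must be tracked through a quotient and then through the minimum. The hard part will be ensuring the denominators are strictly positive and that the numerator and denominator comparisons push the quotient in the same direction; this is precisely where the hypothesis $C \neq \operatorname{Id}_n$ enters, since it forces $c^2+\sigma^2 > 0$ and hence $s(c^2+\sigma^2) > 0$. The remaining bookkeeping, including the monotonicity of $\min$, is routine.
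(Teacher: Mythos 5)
Your proof is correct and follows essentially the same route as the paper: the paper's Lemma~\ref{Lem:gnsnestimate} establishes exactly your two comparison facts, namely $g_n(x)=x+(1-x)/n\geq x$ and, via the monotonicity of $s$, $s_n(x)=s(g_n(x))\geq s(x)$, from which the corollary is deduced from Theorem~\ref{th:l_1_w_max}. Your extra care with the $w_1$ quotient (nonnegative numerator since $g_n(c)\leq 1$, strictly positive denominator since $c^2+\sigma^2>0$) spells out a step the paper leaves implicit, but it is the same argument.
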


The bounds from Corollary~\ref{cor:l_1_w_max} are shown in Fig.~\ref{fig:surfaces}.
From earlier results due to F{\"u}redi--Komlós \cite{Furedi_Komlos_1981}, it follows that for random correlation matrices with $c>0$, one has 
\begin{equation} \label{eq:w_max_fro m_ref}
	w_1                                              \geq 1-4\frac{\sigma^2}{c^2},
\end{equation} in probability when $n$ goes to infinity. For our best knowledge, this result is the only reported bound for $w_1=\langle v_1,\delta_n\rangle^2$ of a correlation matrix in terms of $c$ and $\sigma$.

In the case $w_1=w_{\max}$, the estimates for the corresponding bounds can be improved by taking the maximum of both. We find that this is the case for a wide range of correlation matrices.

\begin{theorem}\label{Thm:w1wmaxDomainNLarge}
		Let \(C\) be an \(n\times n\)- correlation matrix with mean correlation \(c>0\) and standard deviation \(\sigma\). If at least one of the following three conditions
		\begin{itemize}
			\item[(i)] \(\displaystyle c\geq\frac{1}{2}\)
			\item[(ii)] $c \geq \sigma+\frac{1}{\sqrt{n}}$
						\item[(iii)]
			\(\displaystyle c \geq \sqrt[4]{2}\sigma\)
		\end{itemize}
		is satisfied, we have \(w_1>\frac{1}{2}\) and hence \(w_1=w_{\operatorname{max}}\). Here we use the notation as in Theorem \ref{th:l_1_w_max}.
	\end{theorem}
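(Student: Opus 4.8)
The plan is to prove the single inequality $w_1>\tfrac12$; the rest is free. Indeed, the normalisation $\sum_{j=1}^n w_j=1$ from \eqref{eq:weights_norm} shows that as soon as $w_1>\tfrac12$ every other weight satisfies $w_j\le 1-w_1<\tfrac12<w_1$, so $w_1$ is the strict maximum and $w_1=w_{\max}$. Hence the ``hence'' is automatic, and the whole task is to certify $w_1>\tfrac12$ under each of (i)--(iii).

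My engine is a refinement of the Characteristic Lemma. Abbreviate $\tilde\lambda_1=\lambda_1/n$, $G_1:=g_n(c)$ and $G_2:=g_n(c^2+\sigma^2)$, so that \eqref{eq:geom_sclar_prod} gives $\sum_i\lambda_i w_i=nG_1$ and \eqref{eq:geom_norm} gives $\sum_i\lambda_i^2=n^2G_2$. Splitting off the top mode and estimating the tail by Cauchy--Schwarz, with $\sum_{i\ge2}\lambda_i^2=n^2G_2-\lambda_1^2$ and $\sum_{i\ge2}w_i^2\le(\sum_{i\ge2}w_i)^2=(1-w_1)^2$ (legitimate since the $w_i$ are non-negative), I obtain the master inequality
\begin{equation}
G_1\le \tilde\lambda_1\,w_1+\sqrt{G_2-\tilde\lambda_1^2}\,(1-w_1).
\end{equation}
Feeding in the eigenvalue estimate $\tilde\lambda_1\ge\max\{s_n(c^2+\sigma^2),g_n(c)\}$ of \eqref{eq:main_result_l_1} then lower-bounds $w_1$ through $(c,\sigma)$ and $\tilde\lambda_1$. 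In particular the scalar criterion $2G_1^2>G_2$ already forces $w_1>\tfrac12$: were $w_1\le\tfrac12$, then because $\tilde\lambda_1\ge G_1>\sqrt{G_2/2}$ makes the right-hand side affine and non-decreasing in $w_1$, it would be at most $\tfrac12\big(\tilde\lambda_1+\sqrt{G_2-\tilde\lambda_1^2}\big)\le\tfrac12\sqrt{2G_2}$, contradicting $G_1>\sqrt{G_2/2}$.

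I would then dispatch the three hypotheses separately. Condition (ii) is purely algebraic: substituting $\sigma\le c-1/\sqrt n$ and $c\ge1/\sqrt n$ into $2G_1^2-G_2$ shows $2g_n(c)^2>g_n(c^2+\sigma^2)$, so the criterion above applies. For condition (iii) I would instead play the $w_1$-bound against the $w_{\max}$-bound: \eqref{eq:main_result_w_max} gives $w_{\max}\ge s_n\!\big(c^2/(c^2+\sigma^2)\big)\ge s\!\big(c^2/(c^2+\sigma^2)\big)$, while the first branch of the $w_1$-bound of Theorem~\ref{th:l_1_w_max} gives $w_1\ge 1-\tfrac{n-1}{n}\tfrac{\sigma^2}{c^2}$. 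If $w_1\ne w_{\max}$ then $w_1\le 1-w_{\max}$, so with $t:=\sigma^2/c^2$ one would need $s\!\big(1/(1+t)\big)\le\tfrac{n-1}{n}\,t$. The equation $s\!\big(1/(1+t)\big)=t$ has unique positive root $t=1/\sqrt2$, which is exactly the threshold $c=\sqrt[4]{2}\,\sigma$; for $t\le 1/\sqrt2$ one has $s\!\big(1/(1+t)\big)\ge t>\tfrac{n-1}{n}t$, so the collision is impossible and $w_1=w_{\max}>\tfrac12$. Condition (i) guarantees $w_{\max}\ge g_n(c)>\tfrac12$, but the quadratic criterion $2G_1^2>G_2$ can fail near the arc $c^2+\sigma^2=1$, so there I would keep $\tilde\lambda_1$ in the master inequality and exploit the full force of $\tilde\lambda_1\ge s_n(c^2+\sigma^2)$, the hypothesis $c\ge\tfrac12$ being what pushes the resulting bound past $\tfrac12$.

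The step I expect to be hardest is condition (i). Unlike (ii) and (iii), its threshold is captured neither by the scalar criterion $2G_1^2>G_2$ nor by the $w_{\max}$-collision, and one is forced to retain the eigenvalue and to control the function $\Lambda\mapsto\big(G_1-\sqrt{G_2-\Lambda^2}\big)/\big(\Lambda-\sqrt{G_2-\Lambda^2}\big)$ over the whole admissible range $\Lambda\in[\,s_n(c^2+\sigma^2),\sqrt{G_2}\,]$. At the corner $c=\tfrac12$, $c^2+\sigma^2=\tfrac34$ the bound only marginally exceeds $\tfrac12$, so the estimates are genuinely tight and the case split at argument $\tfrac12$ in the definition of $s_n$ must be tracked carefully; the extreme locus $c^2+\sigma^2=1$, where $\lambda_1=n$, I would peel off beforehand, since there Example~\ref{ex:rankOne} gives $w_1=g_n(c)>\tfrac12$ directly.
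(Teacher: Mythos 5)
Your handling of (ii) and (iii) is correct, and it largely parallels the paper's own route through its intermediate Theorem~\ref{Thm:w1wmaxDomainGeneral}. The scalar criterion you extract from your Cauchy--Schwarz master inequality, namely $2g_n(c)^2>g_n(c^2+\sigma^2)\Rightarrow w_1>\tfrac12$, is exactly hypothesis (ii) of that theorem (obtained in the paper as the contrapositive Lemma~\ref{Lem:w1wmax12Gen}), and your algebra reducing $c\geq\sigma+1/\sqrt{n}$ to it is the paper's. Your argument for (iii) has the same skeleton as the paper's Lemma~\ref{Lem:w1wmax3General}: pit $w_1\geq 1-\tfrac{n-1}{n}\tfrac{\sigma^2}{c^2}$ against $w_{\max}\geq s\bigl(c^2/(c^2+\sigma^2)\bigr)$ via $w_1\leq 1-w_{\max}$. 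Your closing step --- that $s\bigl(1/(1+t)\bigr)$ is decreasing with fixed point $t=1/\sqrt{2}$, so $s\bigl(1/(1+t)\bigr)\geq t>\tfrac{n-1}{n}t$ on $(0,1/\sqrt{2}]$, and then $w_1=w_{\max}\geq s\bigl(1/(1+t)\bigr)>\tfrac12$ --- is watertight and genuinely cleaner than the paper's algebraic chain for its third condition.

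The gap is condition (i): you label it the hardest step and never prove it. Two concrete problems remain in your sketch. First, the inequality you would need, $2g_n(c)>\Lambda+\sqrt{G_2-\Lambda^2}$ over the admissible range of $\Lambda$ (where $G_2:=g_n(c^2+\sigma^2)$), is only asserted, not established; it does hold, because for $G_2\geq\tfrac12$ one computes $s(G_2)+\sqrt{G_2-s(G_2)^2}=1<2g_n(c)$, but this computation is the entire content of the case and is missing. Second, your stated range $\Lambda\in[s_n(c^2+\sigma^2),\sqrt{G_2}]$ is the wrong one when $G_2<\tfrac12$, a regime that does occur under (i) (e.g.\ $c=\tfrac12$ with $\sigma^2/c^2\in(1/\sqrt2,1)$ and $n$ large, points not covered by (iii) either): there $s_n(c^2+\sigma^2)=G_2<\sqrt{G_2/2}$, the slope $\Lambda-\sqrt{G_2-\Lambda^2}$ of your affine bound turns negative on part of the range, and the quotient you propose to control degenerates; one must switch to the other branch $\tilde\lambda_1\geq g_n(c)>\tfrac12>\sqrt{G_2/2}$ of \eqref{eq:main_result_l_1}. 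The irony is that in the paper, (i) is the easiest case, not the hardest: Lemma~\ref{Lem:w1wmax12Gen} bounds the tail by $\lambda_2\sum_{j\geq2}w_j$ instead of by Cauchy--Schwarz, so $w_1\leq\tfrac12$ gives $ng_n(c)\leq\lambda_1w_1+\lambda_2(1-w_1)\leq\tfrac12(\lambda_1+\lambda_2)\leq\tfrac{n}{2}$, i.e.\ $g_n(c)\leq\tfrac12$, while $c\geq\tfrac12$ forces $g_n(c)\geq\tfrac{n+1}{2n}>\tfrac12$. Your master inequality is strictly lossier than this $\lambda_2$-bound, which is precisely why (i) became hard in your setup.
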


\begin{figure}[b] 
\includegraphics[scale=.9]{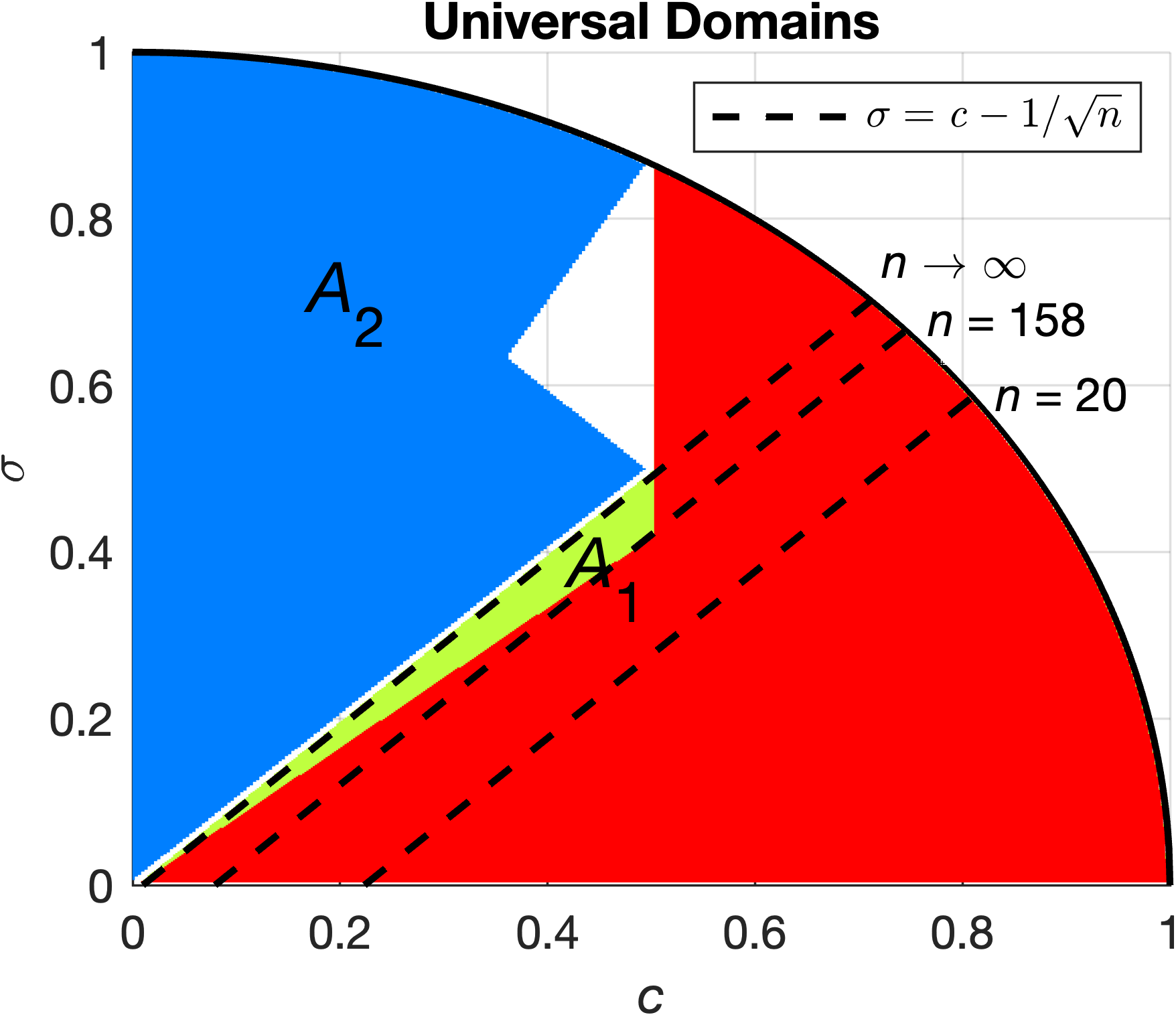}		
\caption{The domains from Theorems~\ref{Thm:w1wmaxDomainNLarge}-\ref{Thm:AsymptoticCounterExamples} (colour online). Red: indicates the domain given by conditions (i) and (iii) in Theorem~\ref{Thm:w1wmaxDomainNLarge}. The area below the dashed lines indicates the domain given by condition (ii) for different values of $n$. The domain $A_1$ defined in  \eqref{eq:A1} is the union of the red and the green areas. The domain $A_2$ defined in \eqref{eq:A2} is indicated by the blue area.} \label{fig:universl_domains}	
\end{figure}

As we already mentioned, it was known\cite{Furedi_Komlos_1981} that correlation matrices with  positive $c$, sufficiently small $\sigma$ and large $n$,  automatically have $w_1=w_{\max}$. Hence Theorem~\ref{Thm:w1wmaxDomainNLarge} extends those results  providing an exact statement on how small \(\sigma\) has to be compared to \(c\) for a  fixed \(n\) . In Fig.~\ref{fig:universl_domains}, the red coloured area shows the domain coming from the inequalities (i) and (iii) which are independent of \(n\). In addition, the domain with $w_1=w_{\max}$ is increased by the \(n\) dependent  inequality (ii) as indicated by the dashed lines in Fig.~\ref{fig:universl_domains}. We note that inequality (ii) becomes relevant  for  $n>158$.

 We now show that $w_1=w_{\max}$ is not a generic feature of correlation matrices and explicetly construct correlation matrices which  satisfy \(w_1<w_\text{max}\).  We will consider correlation matrices with characteristic $(c,\sigma)$ inside the domain \(A:=\{(c,\sigma)\in\R^2\mid c,\sigma>0,c^2+\sigma^2<1\}\) but outside the domain defined in Theorem \ref{Thm:w1wmaxDomainNLarge}. For that purpose, we introduce the following two subdomains of $A$: 
		 \begin{eqnarray} \label{eq:Def_of_domains}
		 A_1&:=&\{(c,\sigma)\in A\mid c\geq \frac{1}{2} \text{ or }\sigma<c \}, \label{eq:A1}	\\ 
		 A_2&:=&\{(c,\sigma)\in A\mid \sigma> \sqrt{3 }c \text{ or }c<\sigma<1-c\}. \label{eq:A2}
		 \end{eqnarray}
The domains $A_1$ and $A_2$ are shown in Fig.~\ref{fig:universl_domains}. 	We find that the domain given by Theorem~\ref{Thm:w1wmaxDomainNLarge}, where \(w_1=w_\text{max}\) holds, approaches the domain \(A_1\) when \(n\to \infty\). Therefore, for any $n \times n$ correlation matrix with \((c,\sigma)\in A_1\) we have $w_1=w_{max}$, provided that $n$ is large enough. We find correlation matrices with characteristic $(c,\sigma) \in A_2$ such that $w_1<w_{\max}$.
	\begin{theorem}\label{Thm:AsymptoticCounterExamples}
		 For any compact set \(K \subset A_2\) there is an \(n_0\in\N\) such that for any \(n\geq n_0\) and any \((c,\sigma)\in K\) there exists a correlation matrix \(C\)  with mean correlation \(c\) and standard deviation \(\sigma\) such that \(w_1<w_{\operatorname{max}}\) holds for any choice of orthonormal eigenbasis. Here, the first eigenvector $v_1$ corresponds to the largest eigenvalue of \(C\).
	\end{theorem}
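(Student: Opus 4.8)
The plan is to prove the statement constructively, exhibiting for each target $(c,\sigma)$ an explicit correlation matrix whose validity (symmetry, unit diagonal, positive semidefiniteness) is automatic from the construction, so that only the characteristic $(c,\sigma)$ and the weight inequality remain to be checked. The guiding principle is dictated by the phrase ``for any choice of orthonormal eigenbasis'': to make $w_{\max}>w_1$ robust against the eigenbasis ambiguity coming from repeated eigenvalues, I would force the two competing eigenvectors---the one carrying the largest eigenvalue and the one carrying the largest weight---to belong to \emph{simple} eigenvalues, while arranging that every remaining (possibly degenerate) eigenvector is orthogonal to $\delta_n$ and hence has weight $0$. Then $w_1$ and $w_{\max}$ are both independent of the eigenbasis choice and the inequality is well posed.

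I would split $A_2$ along the line $c+\sigma=1$. The reason for the split is already visible from the Characteristic Lemma~\ref{lem:Characteristic_Lemma}: if $\delta_n$ were itself an eigenvector then $w$ is a standard basis vector, and combining $\langle\tilde\lambda,w\rangle=g_n(c)$, $\|\tilde\lambda\|^2=g_n(c^2+\sigma^2)$ with $\sum_i\tilde\lambda_i=1$ and $\tilde\lambda_i\ge0$ forces, in the limit $n\to\infty$, $\sigma\le 1-c$. Hence for the part of $A_2$ with $c+\sigma>1$ (which, since $A_2\subset A$ and the piece $c<\sigma<1-c$ already satisfies $c+\sigma<1$, lies entirely in the region $\sigma>\sqrt3\,c$) the diagonal alignment \emph{must} be shared between at least two eigenvectors. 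On the complementary part $c+\sigma\le1$ I would take a convex combination $C=t\,C_{\mathrm{sig}}+(1-t)\,C_0$ of a balanced sign matrix $C_{\mathrm{sig}}$ from Example~\ref{ex:rankOne} (eigenvector $\perp\delta_n$, so $\lambda_1=n$ with weight $0$) and a constant-correlation matrix $C_0$ from Example~\ref{ex:sigma_0} ($\delta_n$ a simple eigenvector of weight $1$). A direct computation with Lemma~\ref{lem:sum_and_gn} gives $c\to(1-t)c_0$ and $\sigma\to t$; choosing $t=\sigma$ and $c_0=c/(1-\sigma)\le1$ realizes such $(c,\sigma)$ with $v_1\perp\delta_n$ simple and dominant (since $\sigma>c$ throughout $A_2$), so $w_1=0<1=w_{\max}$ robustly.

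The harder, and essential, part is the region $c+\sigma>1$. Here I would inflate the $2\times2$ correlation matrix $\left(\begin{smallmatrix}1&b\\ b&1\end{smallmatrix}\right)$ by repeating its two coordinates with multiplicities $k$ and $n-k$; equivalently a two-group block-constant matrix with within-group entries $1$ and between-group entry $b$. This is $C=ARA^T$ with $R$ the $2\times2$ matrix and $A$ the $n\times2$ group-indicator matrix, so $C$ is a correlation matrix by construction. Its nonzero spectrum comes from the reduced operator on $\mathrm{span}(\mathbf 1_S,\mathbf 1_{S^c})\ni\delta_n$, yielding two simple signal eigenvalues $\lambda_1>\lambda_2>0$; the remaining $n-2$ eigenvectors are within-group contrasts, orthogonal to $\delta_n$, of weight $0$. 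Writing $f=k/n$ and $Q=2f(1-f)$, the averages of the entries and of their squares give, in the limit, $c=1-Q(1-b)$ and $\sigma=\sqrt{Q(1-Q)}\,(1-b)$, whence $\sigma/(1-c)=\sqrt{(1-Q)/Q}\ge1$; as $(Q,b)$ range over $(0,\tfrac12]\times(-1,1)$ this sweeps the lens $\{c+\sigma>1\}\cap A$, covering $A_2\cap\{c+\sigma>1\}$. Because the groups are unequal ($f\neq\tfrac12$), $\delta_n$ is \emph{not} an eigenvector and its weight splits as $w_1+w_2=1$. Taking $b<0$ makes the off-diagonal of the reduced operator negative, so the eigenvector of the larger eigenvalue has opposite signs on the two groups---a contrast poorly aligned with the all-positive $\delta_n$---while the subdominant eigenvector is the common mode; this is the mechanism forcing $w_1<w_2=w_{\max}$.

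Finally, the statement is uniform over compact $K$ and asymptotic in $n$, so I would promote the limiting identities to finite $n$ by tracking the $O(1/n)$ corrections to the reduced $2\times2$ operator, its eigenvalues and weights, and the averages defining $(c,\sigma)$. On a compact $K\subset A_2$ the relevant strict inequalities ($\lambda_1>\lambda_2$ and the gap to the zero bulk, $w_2>w_1$, $c>0$, and $c^2+\sigma^2<1$) hold with a uniform margin, so a single $n_0$ suffices for the inequalities to survive the perturbation while the parameter map still hits every point of $K$. The main obstacle I anticipate is precisely this matching step: verifying that the reachable set of the two families covers \emph{all} of $A_2$---in particular that the constant $\sqrt3$ and the boundary $\sigma=1-c$ in \eqref{eq:A2} are consistent with the region where the construction simultaneously realizes the prescribed $(c,\sigma)$ with positive semidefinite parameters and keeps the larger signal eigenvalue strictly less aligned than the smaller one. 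The tension is structural: pushing $\sigma$ up demands a strong between-group contrast (large $|b|$), which competes with positive semidefiniteness ($|b|<1$) and with the eigenvalue ordering, and the quantitative reconciliation of these constraints carries the weight of the proof.
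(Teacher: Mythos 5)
Your architecture is genuinely different from the paper's proof (which covers \(B_1\) by tensor products of a \(2\times2\) matrix with an all-ones block plus an embedding step for odd \(n\), and covers \(B_2\) by perturbing rank-one \(\pm1\) matrices toward the constant matrix \(C_0\) via the Stewart-type Lemma~\ref{Lem:Wieland}), and your two-block reduction is viable in principle; indeed your equal-group case \(f=\tfrac12\), \(b<0\) is exactly the paper's Lemma~\ref{Lem:CEx1Correlation}. But two steps fail as written. First, the mechanism you invoke for the essential region \(c+\sigma>1\) --- ``taking \(b<0\) makes the top eigenvector a between-group contrast, hence poorly aligned with \(\delta_n\), forcing \(w_1<w_2\)'' --- is false as a general implication. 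For unequal groups with \(|b|\) small (say \(f=0.1\), \(b=-0.05\)) the top eigenvector is essentially the indicator of the larger group: it does change sign between the groups, but the negative component is tiny and its weight is \(\approx 1-f>\tfrac12\), so \(w_1=w_{\max}\). What actually decides the comparison is an identity: since \(\delta_n\) lies in the two-dimensional signal space, the bulk eigenvectors are orthogonal to \(\delta_n\), and the two signal eigenvalues satisfy \(\lambda_++\lambda_-=\operatorname{Tr}(C)=n\), one gets \(w_1-w_2=\bigl(2\delta_n^TC\delta_n-n\bigr)/(\lambda_+-\lambda_-)=n\bigl(2g_n(c)-1\bigr)/(\lambda_+-\lambda_-)\), so \(w_1<w_{\max}\) holds if and only if \(g_n(c)<\tfrac12\) (consistent with condition (i) of Theorem~\ref{Thm:w1wmaxDomainGeneral}). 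This criterion does hold on \(A_2\cap\{c+\sigma\geq1\}\), because there \(\sigma>\sqrt3\,c\) and \(c^2+\sigma^2<1\) force \(c<\tfrac12\); but this verification is precisely what you defer as ``the main obstacle,'' so the heart of the proof for that region is missing, and the heuristic you planned to use in its place gives the wrong criterion.

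Second, your finite-\(n\) matching cannot work as described for the lens region: for fixed \(n\) the two-block family with within-group correlation \(1\) has only one continuous parameter (\(b\)), since \(f=k/n\) is quantized, so its image in the \((c,\sigma)\)-plane is a finite union of ray segments through \((1,0)\) --- a set of measure zero that cannot contain an arbitrary compact \(K\), no matter how carefully you track \(O(1/n)\) corrections. You need a second continuous parameter; the natural repair is the one the paper uses: blend with \(\operatorname{Id}_n\), which moves the characteristic radially, \((c,\sigma)\mapsto((1-\mu)c,(1-\mu)\sigma)\), and preserves \(w_1<w_{\max}\) (Corollary~\ref{Cor:ConvexityWithId} and Remark~\ref{Rmk:ConvexityWithId}), combined with a density/pigeonhole argument showing the quantized ray directions become dense as \(n\to\infty\) (the role of Lemma~\ref{Lem:Thecnical03} in the paper); this quantization is exactly why the theorem is only asymptotic in \(n\). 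A smaller but real omission: your first family requires a perfectly balanced sign vector, which exists only for even \(n\); odd \(n\) needs either the paper's embedding method (Lemma~\ref{Lem:EmbeddingMethod}) or a separate treatment of the then nonvanishing coupling between \(x/\sqrt{n}\) and \(\delta_n\). With these three repairs --- the \(g_n(c)<\tfrac12\) criterion, identity-blending plus density of the quantized rays, and an odd-\(n\) device --- your route does go through, and it would arguably be shorter than the paper's perturbation-theoretic treatment of \(B_2\).
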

	
In principle Theorem~\ref{Thm:AsymptoticCounterExamples} shows that for any \((c,\sigma)\in A_2\) and any sufficiently large $n \geq2$ there exist  \(n\times n\) correlation matrices with mean correlation \(c\) and standard deviation \(\sigma\) such that any eigenvector belonging to the largest eigenvalue does not have the largest weight.  The conclusions of Theorem~\ref{Thm:w1wmaxDomainNLarge} and Theorem~\ref{Thm:AsymptoticCounterExamples} do not cover the whole domain $A$. So far we do not know whether correlation matrices with $(c,\sigma)  \in A \setminus (A_1 \cup A_2)$ necessarily  satisfy $w_1=w_{\max}$ or not.
\begin{question} \label{q:question}
 Given $(c,\sigma) \in A \setminus (A_1 \cup A_2)$ and \(n\geq 2\) sufficiently large, is there an \(n\times n\) correlation matrix \(C\) with mean correlation \(c\) and standard deviation \(\sigma\) such that \(w_1<w_{\max}\) holds? 
	\end{question}
	
\textbf{Polar Coordinates:} We round up our results by a remarkable observation. We introduce the polar coordinates 
\begin{equation}
	r_c := \sqrt{c^2+\sigma^2}   \in  (0,1]
\end{equation}
and
\begin{equation}
	 \phi_c  :=       \arccos \left(\frac{c}{r_c} \right) \in [0,\pi]
\end{equation} in the $(c,\sigma)$-plane. From the Theorem~\ref{th:l_1_w_max} for the estimates given by the scaling function $s_n$ one has		
\begin{equation}\label{eq:l_1_spherical}
		\frac{\lambda_1}{n}\geq  s_n(r_c^2),				\end{equation} 
\begin{equation} \label{eq:w_max_spherical}
	w_{\max}\geq s_n(\cos^2 \phi_c).
\end{equation} 
With the polar coordinates, the surface charts in Fig.~\ref{fig:surfaces} collapse to one dimensional functions.  We plot these bounds as functions of $r_c$ and $\phi_c$ in Fig.~\ref{fig:spherical}. We therefore observe that not only  correlation matrices with large $c$, but in general correlation matrices with large $r_c$  automatically have a distinctly large eigenvalue. Analogously, not only correlation matrices with small $\sigma$, but in general correlation matrices with small $\phi_c$ automatically have an approximately diagonal eigenvector. More general we have the following.

\begin{corollary} (Diagonal Cone) \label{cor:diagonal_cone}
With the assumptions and the notation as in Theorem \ref{th:l_1_w_max}, for the smallest possible angle 
\begin{equation}
	\theta_{\text{min}} := \arccos \left( \sqrt{w_{\text{max}}}  \right)
\end{equation} between $\delta_n$ and an  eigenvector of $C$, one has the upper bound

\begin{equation} \label{eq:theta_min}
	\theta_{\text{min}} \leq \arccos \left(\sqrt{s_n(\cos^2 \phi_c)}        \right).
\end{equation} Especially, less sharp, but generically valid for all correlation matrices  one has 
\begin{equation}
	\theta_{\text{min}} \leq \phi_c.
\end{equation}  
\end{corollary}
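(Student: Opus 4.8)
The plan is to observe that the entire statement is an order-preserving reformulation of the weight bound \eqref{eq:main_result_w_max} already established in Theorem~\ref{th:l_1_w_max}. The smallest angle $\theta_{\text{min}}$ between $\delta_n$ and an eigenvector is attained by an eigenvector of maximal alignment: since each $v_j$ is a unit vector, the acute angle it makes with the unit vector $\delta_n$ is $\arccos|\langle v_j,\delta_n\rangle|=\arccos\sqrt{w_j}$, which is minimised exactly when $w_j$ is maximal, giving $\theta_{\text{min}}=\arccos\sqrt{w_{\max}}$ as in the statement. Thus nothing needs to be proved about which eigenvector realises $\theta_{\text{min}}$; it remains only to transport the lower bounds on $w_{\max}$ through the decreasing map $x\mapsto\arccos\sqrt{x}$.

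First I would record the polar identity $\cos^2\phi_c=(c/r_c)^2=c^2/(c^2+\sigma^2)$, so that the term $s_n(c^2/(c^2+\sigma^2))$ appearing in \eqref{eq:main_result_w_max} is precisely $s_n(\cos^2\phi_c)$; this is the content of \eqref{eq:w_max_spherical}. Since $w_{\max}\in[0,1]$ and $s_n(\cos^2\phi_c)\in[0,1]$ (because $g_n$ and $s$ both map $[0,1]$ into $[0,1]$), and since $x\mapsto\arccos\sqrt{x}$ is monotonically decreasing on $[0,1]$, applying it to the inequality $w_{\max}\geq s_n(\cos^2\phi_c)$ yields $\theta_{\text{min}}=\arccos\sqrt{w_{\max}}\leq\arccos\sqrt{s_n(\cos^2\phi_c)}$, which is exactly \eqref{eq:theta_min}.

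For the generically valid bound $\theta_{\text{min}}\leq\phi_c$ I would instead start from Corollary~\ref{cor:l_1_w_max}, which gives $w_{\max}\geq s(\cos^2\phi_c)$. The one genuine computation is the elementary inequality $s(x)\geq x$ for all $x\in[0,1]$: for $x<\tfrac12$ it is an equality by definition of $s$, while for $x\in[\tfrac12,1]$ the substitution $u=2x-1\in[0,1]$ reduces $\tfrac12(1+\sqrt{u})\geq\tfrac12(u+1)$ to $\sqrt{u}\geq u$, which holds on $[0,1]$. Combining these, $w_{\max}\geq s(\cos^2\phi_c)\geq\cos^2\phi_c$, and monotonicity of $\arccos\sqrt{\cdot}$ gives $\theta_{\text{min}}\leq\arccos\sqrt{\cos^2\phi_c}=\arccos|\cos\phi_c|$. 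When $c\geq 0$ we have $\phi_c\in[0,\tfrac\pi2]$, so $|\cos\phi_c|=\cos\phi_c$ and the right-hand side equals $\phi_c$; when $c<0$ the bound is immediate since $\theta_{\text{min}}\in[0,\tfrac\pi2]$ while $\phi_c>\tfrac\pi2$. The only points requiring care — and the closest thing to an obstacle — are the bookkeeping of these ranges for $\arccos$ and the verification that every argument stays inside $[0,1]$ so that the monotone-function step is legitimate; the mathematical substance is carried entirely by Theorem~\ref{th:l_1_w_max} and the one-line inequality $s(x)\geq x$.
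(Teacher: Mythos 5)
Your proposal is correct and is essentially the paper's own argument: the paper proves this corollary with the single line ``follows directly from \eqref{eq:w_max_spherical}'', i.e.\ exactly your route of pushing the bound $w_{\max}\geq s_n(\cos^2\phi_c)$ through the decreasing map $x\mapsto\arccos\sqrt{x}$. Your write-up merely makes explicit the details the paper leaves implicit --- the identification $\cos^2\phi_c=c^2/(c^2+\sigma^2)$, the inequality $s(x)\geq x$ on $[0,1]$, and the range bookkeeping for $c<0$ --- all of which you handle correctly.
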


\begin{proof} [\textbf{Poof:}] Follows directly from  \eqref{eq:w_max_spherical}. 
\end{proof}

\begin{remark}
We note that we do not assume $w_1=w_{\max}$ in Corollary~\ref{cor:diagonal_cone}. From Theorem \ref{Thm:w1wmaxDomainNLarge} it follows that if $\tan(\phi_c)\leq 1/ \sqrt[4]{2}$ then the angle between the two dimensional vector  $(c, \sigma)$ in the $(c, \sigma)$-plane and the $c$-axis, bounds the angle between $ \delta_n$ and an eigenvector belonging to the largest eigenvalue in $\R^n$.
\end{remark}

We complete this section by applying our results to the  correlation matrix from Ref.~\cite{SornetteMalevergne2004}. This matrix has $n=406$ and the characteristic  $(c,\sigma)=(0.14,0.017)$. From Theorem~\ref{th:l_1_w_max} we get for the alignment of the first eigenvector 
\begin{equation}
 | \langle v_1,\delta_n\rangle |=\sqrt{w_1}\geq 0.9853,
\end{equation} which confirms that first eigenvector is  basically given by the diagonal vector\cite{SornetteMalevergne2004}. For the the largest eigenvalue we ensure  the validity of the bound
\begin{equation} \label{eq:sornette_estimation}
  \lambda_1 \geq ng_n(c)=57.7.
\end{equation} This number is very close to the estimate $  \lambda_1 \approx nc=56.84$ from Ref.\cite{SornetteMalevergne2004}.

\begin{figure}[t] 
	\includegraphics[width=0.9\linewidth]{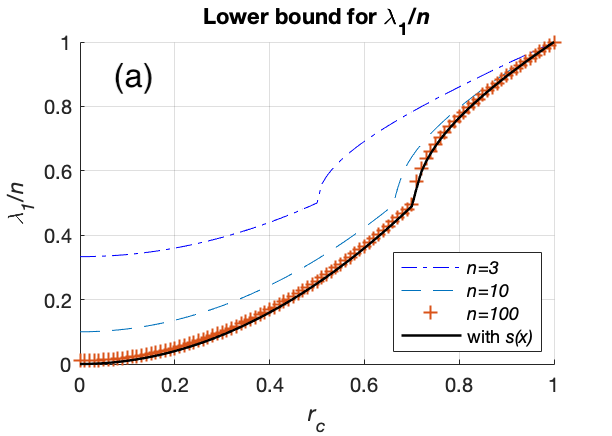}
		\includegraphics[width=.9\linewidth]{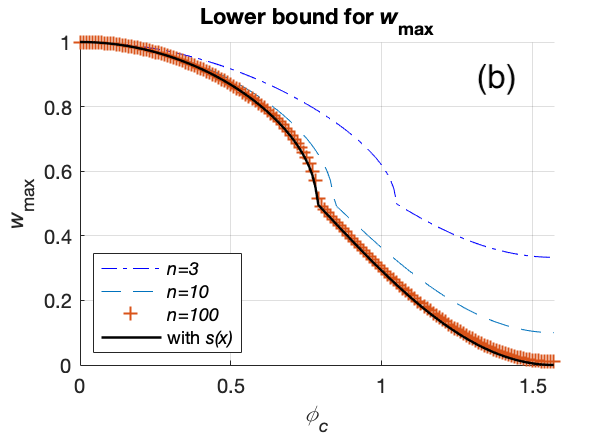}
\caption{(a): the lower bounds for the largest eigenvalue given by \eqref{eq:l_1_spherical}. (b): the bound for the highest weight defined in  \eqref{eq:w_max_spherical} for different $n$ values. We note that one has $c>0$ for $\phi_c<\pi/2$.} \label{fig:spherical}
\end{figure}

\section{Proof of  Main Results: Theorem~\ref{th:l_1_w_max}} \label{sec:proofs_Th1}
	The proof of Theorem~\ref{th:l_1_w_max} will be done in two steps. First we will prove the estimates for \(\lambda_1\) and \(w_{\text{max}}\) in Sec.~\ref{Subsec:l1wmax}. Next we will prove the estimates for \(w_1\) in Sec.~\ref{Subsec:w1Estimates}. Theorem~\ref{th:l_1_w_max} follows in principle from Lemma~\ref{Lem:l1wmaxestimates}, Lemma~\ref{Lem:l1wmaxEstimates}, Corollary~\ref{Cor:w1Estimate}  and Lemma~\ref{Lem:w1EstimatePerturbation}.
	Before we start with the proofs we study some properties of the scaling functions \(g_n\) and \(s_n\).
	Recall the definitions \(g_n(x):=\frac{n-1}{n}x+\frac{1}{n}\)
	and 
	\begin{equation}\label{Eq:DefinitionOfS2}
		s(x)=\begin{cases}
			\frac{1}{2}\left(1+\sqrt{2x-1}\right)&, \text{ if } x\geq \frac{1}{2}\\
			x&, \text{ else.}
		\end{cases}
	\end{equation}
	Furthermore, we put \(s_n(x)=s(g_n(x))\).	We observe the scalings.

\begin{lemma}\label{Lem:gnsnestimate}
For the functions $g_n$, $s_n$ and $s$ 
one has
\begin{equation} \label{eq:scaling_gn}
  g_1(x)\geq  g_{2}(x)\geq \ldots \geq x 
\end{equation}
and
\begin{equation} \label{eq:scaling_sn}
  s_1(x)\geq  s_{2}(x)\geq \ldots \geq s(x),
\end{equation}
for any \(x\in[0,1]\) and \(n\geq 1\). Furthermore, one has \(\lim_{n\to\infty}g_n(x)=x\) uniformly in \(x\) on \([0,1]\). 
\end{lemma}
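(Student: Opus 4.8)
The plan is to reduce both scaling chains to two elementary observations: an explicit affine rewriting of \(g_n\) and the monotonicity of \(s\). First I would write
\begin{equation}
g_n(x) = x + \frac{1-x}{n},
\end{equation}
which is immediate from the definition \eqref{eq:DefinitionOfGn}. Since \(1-x\geq 0\) for \(x\in[0,1]\), the correction term \((1-x)/n\) is non-negative and non-increasing in \(n\); this yields the chain \eqref{eq:scaling_gn}, namely \(g_1(x)\geq g_2(x)\geq\cdots\geq x\), together with \(\lim_{n\to\infty}g_n(x)=x\) for each fixed \(x\). For the uniformity claim I would simply note that \(|g_n(x)-x| = (1-x)/n \leq 1/n\) for all \(x\in[0,1]\), so that \(\sup_{x\in[0,1]}|g_n(x)-x|\to 0\).

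For the chain \eqref{eq:scaling_sn} the key preliminary is that \(s\) is non-decreasing on \([0,1]\). On \([0,\tfrac12)\) one has \(s(x)=x\), and on \([\tfrac12,1]\) one has \(s(x)=\tfrac12(1+\sqrt{2x-1})\) with derivative \(s'(x)=1/(2\sqrt{2x-1})>0\); since both branches take the value \(\tfrac12\) at \(x=\tfrac12\), the function \(s\) is continuous and non-decreasing on all of \([0,1]\). I would also record that \(g_n\) maps \([0,1]\) into itself, since the bound \(x\leq g_n(x)\leq 1\) follows at once from the affine form above.

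With these facts in hand, the chain \eqref{eq:scaling_sn} follows purely by composition: applying the order-preserving map \(s\) to the already established chain \(g_1(x)\geq g_2(x)\geq\cdots\geq x\) gives \(s(g_1(x))\geq s(g_2(x))\geq\cdots\geq s(x)\), which is exactly \(s_1(x)\geq s_2(x)\geq\cdots\geq s(x)\) by the definition \(s_n=s\circ g_n\).

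The only genuine, and rather mild, obstacle is verifying that \(s\) is monotone across its junction at \(x=\tfrac12\); once continuity and the positivity of the derivative on the upper branch are checked, every remaining assertion is a direct consequence of the affine form of \(g_n\) and the order-preserving property of \(s\). In particular, no properties beyond monotonicity of \(s\) are required, so I expect the write-up to be short.
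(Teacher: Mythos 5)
Your proof is correct and takes essentially the same route as the paper's: the affine rewriting \(g_n(x)=x+(1-x)/n\) gives both the chain \eqref{eq:scaling_gn} and the uniform bound \(\sup_{x\in[0,1]}|g_n(x)-x|\leq 1/n\), and the chain \eqref{eq:scaling_sn} then follows from the monotonicity of \(s\). Your explicit check that \(s\) is continuous and non-decreasing across the junction at \(x=\tfrac{1}{2}\) is a detail the paper leaves implicit ("the monotony of \(s\)"), but the argument is the same.
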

\begin{proof}

The proof of \eqref{eq:scaling_gn} follows from  $\left(    g_{n}(x)-g_{n+1}(x)   \right)=(1-x)/(n^2+n) \geq 0$ and $g_n(x)=x+(1-x)/n \geq x$. Inequality \eqref{eq:scaling_sn} follows then with the monotony of $s$. Furthermore, we have \(\sup_{x\in[0,1]}|g_n(x)-x|\leq \frac{1}{n}\) which goes to zero when \(n\) tends to infinity.
\end{proof}
 
 Lemma~\ref{Lem:gnsnestimate} immediately verifies how Corollary~\ref{cor:l_1_w_max} follows from Theorem~\ref{th:l_1_w_max}.
	 
	 Throughout this section
	let \(C\neq \text{Id}_n\) be an \(n\times n\) correlation matrix, \(n\geq 2\),  with mean correlation \(c\) and standard deviation  \(\sigma\).  Let \(\lambda_1\geq \ldots\geq \lambda_n\) be the eigenvalues of \(C\) and \(v_1,\ldots v_n \in\R^n\) an orthonormal basis of \(\R^n\) consisting of eigenvectors for \(C\) such that \(Cv_j=\lambda_jv_j\), \(1\leq j\leq n\). For any \(1\leq j\leq n\) put \(w_j:=\langle v_j,\delta_n\rangle^2\) where \(\delta_n=(1,\ldots,1)/\sqrt{n}\) is the normalised diagonal vector in \(\R^n\) and \(w_{\max}=\max_{1\leq j\leq n}w_j\). Furthermore,  we define \(\lambda,w\in\R^n\) by \(\lambda:=(\lambda_1,\ldots,\lambda_n)\) and \(w:=(w_1,\ldots,w_n)\).
%
	\subsection{Estimates for \(\lambda_1\) and \(w_\text{max}\)}\label{Subsec:l1wmax}
		
	\begin{lemma}\label{Lem:l1wmaxestimates}
		We have
		\(\frac{\lambda_1}{n}\geq g_n(c)\) and 
		\(w_{\operatorname{max}}\geq g_n(c)\).
	\end{lemma}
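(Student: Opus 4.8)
The plan is to read off both inequalities directly from the scalar-product identity \eqref{eq:geom_sclar_prod} of the Characteristic Lemma, combined with the two normalisations \eqref{eq:weights_norm} and \eqref{eq:lamd_norm}. The pivotal observation is that the single identity
\begin{equation}
g_n(c)=\langle\tilde{\lambda},w\rangle=\sum_{i=1}^n\tilde{\lambda}_iw_i
\end{equation}
admits two symmetric readings: it exhibits $g_n(c)$ as a convex combination of the normalised eigenvalues $\tilde{\lambda}_i$ with the non-negative coefficients $w_i$ (which sum to $1$ by \eqref{eq:weights_norm}), and simultaneously as a convex combination of the weights $w_i$ with the non-negative coefficients $\tilde{\lambda}_i$ (which sum to $1$ by \eqref{eq:lamd_norm}). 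Since a convex combination never exceeds its largest entry, each reading yields one of the two claimed bounds.

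First I would establish $\tilde{\lambda}_1\geq g_n(c)$. Because the eigenvalues are ordered as $\lambda_1\geq\cdots\geq\lambda_n\geq 0$, the largest normalised eigenvalue is $\tilde{\lambda}_1=\max_{1\leq i\leq n}\tilde{\lambda}_i$. Using $w_i\geq 0$ together with $\sum_i w_i=1$ from \eqref{eq:weights_norm}, I estimate
\begin{equation}
g_n(c)=\sum_{i=1}^n\tilde{\lambda}_iw_i\leq\tilde{\lambda}_1\sum_{i=1}^nw_i=\tilde{\lambda}_1,
\end{equation}
which is precisely $\lambda_1/n\geq g_n(c)$.

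Symmetrically, to obtain $w_{\operatorname{max}}\geq g_n(c)$ I would exchange the roles of the two vectors, using $\tilde{\lambda}_i\geq 0$ and $\sum_i\tilde{\lambda}_i=1$ from \eqref{eq:lamd_norm} and bounding each $w_i$ by $w_{\operatorname{max}}=\max_{1\leq i\leq n}w_i$:
\begin{equation}
g_n(c)=\sum_{i=1}^n\tilde{\lambda}_iw_i\leq w_{\operatorname{max}}\sum_{i=1}^n\tilde{\lambda}_i=w_{\operatorname{max}}.
\end{equation}

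Since each case is a one-line majorisation, there is essentially no obstacle here; the only point meriting attention is that every quantity entering the two convex combinations is genuinely non-negative and that both completeness relations hold. The non-negativity of the $\tilde{\lambda}_i$ is supplied by positive semi-definiteness (condition (iii)), the non-negativity of the $w_i=\langle v_i,\delta_n\rangle^2$ is automatic, and the normalisations \eqref{eq:weights_norm} and \eqref{eq:lamd_norm} have already been recorded in the excerpt. The content of the statement is thus entirely carried by the Characteristic Lemma.
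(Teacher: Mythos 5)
Your proposal is correct and coincides with the paper's own argument: both read off the two bounds from the identity $g_n(c)=\sum_{j=1}^n\tilde{\lambda}_jw_j$ of the Characteristic Lemma, majorising once by $\tilde{\lambda}_1$ using $\sum_j w_j=1$ and once by $w_{\max}$ using $\sum_j\tilde{\lambda}_j=1$, together with the non-negativity of both families. There is nothing to add or correct.
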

	\begin{proof}
		By Lemma~\ref{lem:Characteristic_Lemma} we have
		\(ng_n(c)=\sum_{j=1}^n\lambda_j w_j\). Using \(\sum_{j=1}^nw_j=1\), \(\sum_{j=1}^n\lambda_j=n\) and \(\lambda_1\geq \lambda_j\geq 0\), \(w_\text{max}\geq w_j\geq 0\) for all \(1\leq j\leq n\) we find
		\begin{equation}g_n(c)\leq \frac{\lambda_1}{n}\sum_{j=1}^nw_j=\frac{\lambda_1}{n}\,\,\,\text{ and }\,\,\,g_n(c)\leq w_\text{max}\sum_{j=1}^n\frac{\lambda_j}{n}=w_\text{max}.\end{equation}
	\end{proof}
	Note that the estimate 
	\(\lambda_1\geq (n-1)c+1\) was already shown by Nicewander\cite{NICEWANDER_1974} and Meyer\cite{Meyer_1975} using the Min-Max principle for eigenvalues.
	
	To obtain the estimates for \(\lambda_1\) and \(w_\text{max}\) in Theorem~\ref{th:l_1_w_max} it remains to show the two inequalities
	\begin{eqnarray}\label{Eq:l1wmaxEstimates}
		\frac{\lambda_1}{n}\geq s_n(c^2+\sigma^2), \,\,\,w_{\operatorname{max}}\geq s_n\left(\frac{c^2}{c^2+\sigma^2}\right).
	\end{eqnarray}
	In order to deduce those estimates from Lemma~\ref{lem:Characteristic_Lemma} we need to estimate the maximum norm of a vector \(p=(p_1,\ldots,p_n)\in\R^n\) with \(0\leq p_j\leq 1\) and \(\sum_{j=1}^{n}p_j=1\) from below when its Euclidean norm \(\|p\|\) is fixed.
	We find the following lower bound. 
	\begin{lemma}\label{Lem:DegenerationEstimate}
		Fix \(n\in\N\) with \(n\geq 2\) and \(\alpha\in[0,1]\).
		Given \(p=(p_1,\ldots,p_n)\in\R^n\) with \(p_j\in[0,1]\) for \(1\leq j\leq n\) and \(\sum_{j=1}^np_j=1\) satisfying \(\|p\|^2\geq \alpha\), we have \begin{equation}\max_{1\leq j\leq n}p_j\geq s(\alpha).\end{equation}
	\end{lemma}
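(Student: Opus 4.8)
The plan is to reduce this $n$-dimensional problem to a one-variable estimate by bounding the Euclidean norm $\|p\|^2$ from above \emph{solely} in terms of $m:=\max_{1\le j\le n}p_j$, and then inverting the resulting inequality to obtain a lower bound on $m$. The entire argument rests on two elementary consequences of the simplex constraints $p_j\ge 0$ and $\sum_{j}p_j=1$, so no optimisation over the full configuration space is actually needed.

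First I would record the crude bound. Since $0\le p_j\le m$ for every $j$, one has $p_j^2\le m\,p_j$, and summing yields $\|p\|^2=\sum_{j}p_j^2\le m\sum_{j}p_j=m$. Combined with the hypothesis this already gives $m\ge \|p\|^2\ge\alpha$. In the regime $\alpha<\tfrac12$, where $s(\alpha)=\alpha$ by definition, this is exactly the claim $m\ge s(\alpha)$, so that case is settled immediately.

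For the regime $\alpha\ge\tfrac12$ the crude bound is too weak, so next I would isolate a maximal entry: fix an index $j_0$ with $p_{j_0}=m$. Since the remaining coordinates are nonnegative and sum to $1-m$, the elementary inequality $\sum_j x_j^2\le\bigl(\sum_j x_j\bigr)^2$ for nonnegative reals gives $\sum_{j\ne j_0}p_j^2\le(1-m)^2$. Hence $\|p\|^2=m^2+\sum_{j\ne j_0}p_j^2\le m^2+(1-m)^2$, and therefore $\alpha\le 2m^2-2m+1$.

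Finally I would invert this quadratic inequality. The parabola $f(m)=2m^2-2m+1$ attains its minimum value $\tfrac12$ at $m=\tfrac12$ and is increasing on $[\tfrac12,1]$; solving $f(m)\ge\alpha$ on that branch produces $m\ge\tfrac12\bigl(1+\sqrt{2\alpha-1}\bigr)=s(\alpha)$. The one genuinely subtle point, and the only real obstacle, is that the inequality $f(m)\ge\alpha$ on its own also admits the spurious small-$m$ branch $m\le\tfrac12\bigl(1-\sqrt{2\alpha-1}\bigr)$. This is precisely where the crude bound from the first step re-enters: when $\alpha\ge\tfrac12$ it forces $m\ge\|p\|^2\ge\alpha\ge\tfrac12$, which places $m$ on the increasing branch and excludes the spurious root. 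Combining the two regimes then gives $m\ge s(\alpha)$ in all cases.
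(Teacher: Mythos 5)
Your proposal is correct and follows essentially the same route as the paper: the crude bound $\|p\|^2\le m$ (which the paper packages in contrapositive form as Lemma~\ref{Lem:FundamentalLemma}, using the identical inequality $p_j^2\le(\text{bound})\,p_j$) settles the case $\alpha<\tfrac12$ and excludes the spurious root, while the estimate $\|p\|^2\le m^2+(1-m)^2$ and inversion of the resulting quadratic handle $\alpha\ge\tfrac12$ exactly as in the paper's proof of Lemma~\ref{Lem:DegenerationEstimate}. The only difference is presentational, including your correct handling of the boundary case $\alpha=\tfrac12$ via monotonicity of the parabola on $[\tfrac12,1]$.
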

	For the proof of Lemma~\ref{Lem:DegenerationEstimate} the following lemma is crucial.
	\begin{lemma}\label{Lem:FundamentalLemma}
		For any \(n\geq 2\), \(\alpha\in(0,1]\) and \(p=(p_1,\ldots,p_n)\in\R^n\) with \(0\leq p_j<\alpha\) for \(1\leq j\leq n\) and \(\sum_{j=1}^np_j=1\) we have \(\|p\|^2< \alpha\).
	\end{lemma}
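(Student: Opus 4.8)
The plan is to exploit the strict pointwise inequality $p_j^2 < \alpha\, p_j$ that is available whenever $0 < p_j < \alpha$, and then sum over $j$, using the normalisation $\sum_{j=1}^n p_j = 1$ to convert the sum of the linear terms into the constant $\alpha$. No compactness, continuity, or optimisation arguments are needed; the whole statement reduces to a termwise estimate followed by a single summation.

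First I would record the basic pointwise bound. For every index $j$ the hypothesis $0 \leq p_j < \alpha$ gives, upon multiplying the inequality $p_j < \alpha$ by the nonnegative factor $p_j$,
\begin{equation}
p_j^2 = p_j \cdot p_j \leq \alpha\, p_j,
\end{equation}
with equality precisely when $p_j = 0$. This is the only nontrivial ingredient, and it is entirely elementary. Next, since $\sum_{j=1}^n p_j = 1 > 0$, the coordinates cannot all vanish, so there exists at least one index $k$ with $p_k > 0$; for that index the bound is strict, $p_k^2 < \alpha\, p_k$. Summing the pointwise estimates over all $j$ and inserting the normalisation then yields
\begin{equation}
\|p\|^2 = \sum_{j=1}^n p_j^2 < \alpha \sum_{j=1}^n p_j = \alpha,
\end{equation}
which is exactly the assertion of the lemma.

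The only point requiring care is preserving the strict inequality in the final step. The aggregated estimate $\sum_{j=1}^n p_j^2 \leq \alpha \sum_{j=1}^n p_j = \alpha$ would degenerate to an equality only if every $p_j$ were zero, and this is ruled out by the constraint $\sum_{j=1}^n p_j = 1$. Thus the main (and essentially sole) obstacle is verifying that at least one strictly positive coordinate is present to supply a strict term, which the normalisation guarantees automatically. The ambient bound $\alpha \leq 1$ plays no role in the argument beyond fixing the setting, so the proof goes through verbatim for any $\alpha \in (0,1]$.
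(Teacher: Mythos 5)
Your proof is correct and follows essentially the same route as the paper: the termwise bound $p_j^2 \leq \alpha p_j$, made strict at one necessarily positive coordinate (the paper picks $p_{\max}$, you pick any positive $p_k$), then summation against the normalisation $\sum_{j=1}^n p_j = 1$. Your closing remark that $\alpha \leq 1$ is never used also matches the paper's argument, which likewise relies only on $\alpha > 0$.
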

	\begin{proof}
		Since \( 0\leq p_j<\alpha\)  we have \(p_j^2\leq \alpha p_j \) for \(1\leq j\leq n\). Put \(p_{\max} =\max_{1\leq j\leq n}p_j\). 
		Since \(\sum_{j=1}^np_j=1\) we have \(0<p_{\max}\) and hence \(p_{\max}^2<\alpha p_{\max}\).
		Without loss of generality we can assume \(p_1=p_{\max}\). Then
		\begin{eqnarray*}
		\|p\|^2&=&p_1^2+\sum_{j=2}^np_j^2\leq p_1^2+\alpha \sum_{j=2}^np_j \\
		&<&\alpha p_1+\alpha \sum_{j=2}^np_j=\alpha\sum_{j=1}^np_j=\alpha.
		\end{eqnarray*}
	\end{proof}
	
	\begin{proof}[\textbf{Proof of Lemma \ref{Lem:DegenerationEstimate}}]
		Without loss of generality assume \(p_1\geq\ldots\geq p_n\) that is \(p_1=\max_{1\leq j\leq n}p_j\). Assuming \(p_1<\alpha\)  immediately leads to \(\|p\|^2<\alpha\) by Lemma~\ref{Lem:FundamentalLemma}. Hence we must have \(p_1\geq \alpha\) which in particular shows \(p_1\geq s(\alpha)\) for \(0\leq \alpha< \frac{1}{2}\). 
		Now assume \(\alpha\geq \frac{1}{2}\).  We have \(p_2+\ldots+p_n=1-p_1\) and hence \(\|p\|^2=p_1^2+p_2^2+\ldots+p_n^2\leq p_1^2+(p_2+\ldots+p_n)^2=p_1^2+(1-p_1)^2\) which implies \(\alpha\leq p_1^2+(1-p_1)^2\). Rearranging this inequality  leads to
		\begin{equation}\frac{1}{4}\left(2\alpha-1\right)\leq \left(p_1-\frac{1}{2}\right)^2.\end{equation}
		Using \(\alpha\geq \frac{1}{2}\) we find
		\begin{equation}\frac{1}{2}\sqrt{2\alpha-1}\leq \left|p_1-\frac{1}{2}\right|\end{equation}
		which is equivalent to
		\begin{equation}p_1\geq \frac{1}{2}+\frac{1}{2}\sqrt{2\alpha-1}\,\, \vee \,\, p_1\leq \frac{1}{2}-\frac{1}{2}\sqrt{2\alpha-1}.\end{equation}
		Since  \(p_1\geq \alpha\geq\frac{1}{2}\) we conclude 
		\begin{equation}p_1\geq \frac{1}{2}+\frac{1}{2}\sqrt{2\alpha-1}=s(\alpha)\end{equation}
		for \(\alpha \geq \frac{1}{2}\).
	\end{proof}

	In order to derive an estimate for the Euclidean norm of the weight vector from Lemma~\ref{lem:Characteristic_Lemma} the following Lemma is needed.
	\begin{lemma}\label{Lem:CorrelationCoefEstimate}
		Let \(\delta_n=(1,1,\ldots,1)/\sqrt{n}\in\R^n\) be the diagonal vector. For any two vectors \(x,y\in\R^n\) we have
		\begin{equation}\left|\langle x,y\rangle-\langle x,\delta_n\rangle\langle y,\delta_n\rangle\right|^2\leq \left(\|x\|^2-|\langle x,\delta_n\rangle|^2\right)\left(\|y\|^2-|\langle y,\delta_n\rangle|^2\right).\end{equation}
	\end{lemma}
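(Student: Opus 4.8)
The plan is to recognise the left-hand side as the Cauchy--Schwarz inequality in disguise, applied to the orthogonal projections of $x$ and $y$ onto the hyperplane orthogonal to $\delta_n$. Since $\|\delta_n\|=1$, the map $P$ given by $Pz:=z-\langle z,\delta_n\rangle\delta_n$ is the orthogonal projection onto $\delta_n^{\perp}$. First I would expand $\langle Px,Py\rangle$ by bilinearity, using $\langle\delta_n,\delta_n\rangle=1$; the two cross terms coincide and cancel one copy of the product term, leaving
\begin{equation}
\langle Px,Py\rangle=\langle x,y\rangle-\langle x,\delta_n\rangle\langle y,\delta_n\rangle,
\end{equation}
which is exactly the quantity inside the absolute value on the left-hand side of the claim.

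Next I would specialise the same computation to $y=x$ and to $x=y$ to read off the norms of the projected vectors, namely $\|Px\|^2=\|x\|^2-|\langle x,\delta_n\rangle|^2$ and $\|Py\|^2=\|y\|^2-|\langle y,\delta_n\rangle|^2$. These are precisely the two factors appearing on the right-hand side. With these three identities in place, the asserted inequality is nothing but the standard bound $|\langle Px,Py\rangle|^2\leq \|Px\|^2\,\|Py\|^2$ after substitution, so the whole statement reduces to Cauchy--Schwarz for the pair $(Px,Py)$ in $\R^n$.

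There is essentially no genuine obstacle here: the only real decision is to introduce the correct projection, after which every identity is a one-line expansion and the inequality is immediate. The mild bookkeeping point to watch is ensuring that the cross terms combine correctly, which relies on $\delta_n$ being a \emph{unit} vector; this is where $\|\delta_n\|^2=n\cdot\tfrac1n=1$ enters. An equivalent route would be to apply Cauchy--Schwarz directly to a $2\times 2$ Gram-determinant expression in $x$, $y$ and $\delta_n$, but the projection formulation is cleaner and additionally makes the equality case transparent, namely that equality holds precisely when $Px$ and $Py$ are linearly dependent.
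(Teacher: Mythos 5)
Your proposal is correct and is essentially the same argument as the paper's: the paper also applies the Cauchy--Schwarz (Cauchy--Bunjakowski--Schwarz) inequality to the vectors $\tilde{x}=x-\langle x,\delta_n\rangle\delta_n$ and $\tilde{y}=y-\langle y,\delta_n\rangle\delta_n$, which are exactly your $Px$ and $Py$. Your write-up merely makes the projection identities and the role of $\|\delta_n\|=1$ explicit, which the paper leaves implicit.
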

	\begin{proof}
		The statement follows immediately from the Cauchy-Bunjakowski-Schwarz inequality applied to the vectors \(\tilde{x}=x-\langle x,\delta_n\rangle\delta_n\) and \(\tilde{y}=y-\langle y,\delta_n\rangle\delta_n\). 
	\end{proof}
	We are now ready to prove the estimates for \(\lambda_1\) and \(w_{\text{max}}\).
	\begin{lemma}\label{Lem:l1wmaxEstimates}
		We have that \eqref{Eq:l1wmaxEstimates} is valid.
	\end{lemma}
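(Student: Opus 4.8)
The plan is to prove the two inequalities in \eqref{Eq:l1wmaxEstimates} separately, in each case feeding the two scalar identities of the Characteristic Lemma~\ref{lem:Characteristic_Lemma} into the degeneration estimate of Lemma~\ref{Lem:DegenerationEstimate}. Recall that both \(\tilde\lambda=(\lambda_1,\ldots,\lambda_n)/n\) and \(w=(w_1,\ldots,w_n)\) have non-negative components bounded by \(1\) and summing to \(1\) by \eqref{eq:lamd_norm} and \eqref{eq:weights_norm}, so both are legitimate inputs for Lemma~\ref{Lem:DegenerationEstimate}.

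For the bound on \(\lambda_1\) I would apply Lemma~\ref{Lem:DegenerationEstimate} directly to \(\tilde\lambda\). Identity \eqref{eq:geom_norm} gives \(\|\tilde\lambda\|^2=g_n(c^2+\sigma^2)=:\alpha\in[1/n,1]\), so the lemma yields \(\max_j\tilde\lambda_j\geq s(\alpha)=s_n(c^2+\sigma^2)\); since \(\lambda_1\) is the largest eigenvalue, this is precisely \(\lambda_1/n\geq s_n(c^2+\sigma^2)\).

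The estimate for \(w_{\max}\) is the part that requires real work, since the Characteristic Lemma does not directly supply \(\|w\|^2\). The key step is therefore to bound \(\|w\|^2\) from below, and for this I would invoke Lemma~\ref{Lem:CorrelationCoefEstimate} with \(x=\tilde\lambda\) and \(y=w\). Both vectors have the same diagonal projection \(\langle\tilde\lambda,\delta_n\rangle=\langle w,\delta_n\rangle=1/\sqrt n\) (because both sum to \(1\)), while \eqref{eq:geom_sclar_prod} and \eqref{eq:geom_norm} give \(\langle\tilde\lambda,w\rangle=g_n(c)\) and \(\|\tilde\lambda\|^2=g_n(c^2+\sigma^2)\). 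Substituting these and using the identity \(g_n(x)-\tfrac1n=\tfrac{n-1}{n}x\) reduces the inequality of Lemma~\ref{Lem:CorrelationCoefEstimate} to
\[
\left(\frac{n-1}{n}\,c\right)^2 \leq \frac{n-1}{n}\,(c^2+\sigma^2)\left(\|w\|^2-\frac1n\right).
\]
Since \(C\neq\operatorname{Id}_n\) forces \(c^2+\sigma^2>0\), I may divide by the positive factor \(\tfrac{n-1}{n}(c^2+\sigma^2)\), and rearranging leaves \(\|w\|^2\geq g_n\bigl(c^2/(c^2+\sigma^2)\bigr)\).

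With this lower bound in hand, applying Lemma~\ref{Lem:DegenerationEstimate} to \(w\) with \(\alpha=g_n\bigl(c^2/(c^2+\sigma^2)\bigr)\in[1/n,1]\) gives \(w_{\max}\geq s(\alpha)=s_n\bigl(c^2/(c^2+\sigma^2)\bigr)\), which completes \eqref{Eq:l1wmaxEstimates}. I expect the middle step to be the main obstacle: the insight is that the Cauchy--Schwarz-type inequality of Lemma~\ref{Lem:CorrelationCoefEstimate}, applied to \(\tilde\lambda\) and \(w\) whose diagonal projections are both pinned to \(1/\sqrt n\), is exactly the device that converts the two scalar identities of the Characteristic Lemma into the sought lower bound on \(\|w\|^2\). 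Once that bound is secured, both halves of \eqref{Eq:l1wmaxEstimates} are routine instances of Lemma~\ref{Lem:DegenerationEstimate}.
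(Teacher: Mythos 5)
Your proposal is correct and follows essentially the same route as the paper: Lemma~\ref{Lem:DegenerationEstimate} applied to \(\tilde\lambda\) with \(\alpha=g_n(c^2+\sigma^2)\) for the eigenvalue bound, and Lemma~\ref{Lem:CorrelationCoefEstimate} with \(x=\tilde\lambda\), \(y=w\) (using the pinned diagonal projections \(1/\sqrt n\) and the identities of Lemma~\ref{lem:Characteristic_Lemma}) to obtain \(\|w\|^2\geq g_n\bigl(c^2/(c^2+\sigma^2)\bigr)\), followed by a second application of Lemma~\ref{Lem:DegenerationEstimate} to \(w\). The only difference is cosmetic: the paper clears denominators by multiplying through by \(n^2\) before dividing, whereas you keep the factors \(\tfrac{n-1}{n}\) explicit; both manipulations are equivalent.
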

	\begin{proof}
		With the notations above we need to prove the two inequalities
		\begin{eqnarray}
			\frac{\lambda_1}{n}&\geq& s_n(c^2+\sigma^2), \label{Eq:MainThmEq1}\\
			w_{\max}&\geq& s_n\left(\frac{c^2}{c^2+\sigma^2}\right). \label{Eq:MainThmEq2}
		\end{eqnarray}
		Put \(\tilde{\lambda}=(\tilde{\lambda}_1,\ldots,\tilde{\lambda}_n):=\frac{1}{n}\lambda\). We have \(\|\tilde{\lambda}\|^2=g_n(c^2+\sigma^2)\) by Lemma~\ref{lem:Characteristic_Lemma} . Since \(0\leq \tilde{\lambda}_j\leq 1\) for \(1\leq j\leq n\) and \(\sum_{j=1}^n\tilde{\lambda}_j=1\) we obtain \begin{equation}\frac{\lambda_1}{n}=\tilde{\lambda}_1\geq s(g_n(c^2+\sigma^2))=s_n(c^2+\sigma^2)\end{equation} from  Lemma~\ref{Lem:DegenerationEstimate} with \(\alpha=g_n(c^2+\sigma^2)\) and \(p=\tilde{\lambda}\). This shows that \eqref{Eq:MainThmEq1} is valid. In order to prove \eqref{Eq:MainThmEq2} we first observe \(\sqrt{n}\langle w,\delta_n\rangle=\sum_{j=1}^n w_j=1\) and \(\sqrt{n}\langle \tilde{\lambda},\delta_n\rangle=\sum_{j=1}^n\tilde{\lambda}_j=1\). Hence we obtain
		\begin{eqnarray}\label{Eq:ApplyCauchySchwarz}
			\left|\langle\tilde{\lambda},w\rangle-\frac{1}{n}\right|^2\leq (\|\tilde{\lambda}\|^2-\frac{1}{n})\left(\|w\|^2-\frac{1}{n}\right),
		\end{eqnarray}
		from Lemma~\ref{Lem:CorrelationCoefEstimate}.
		Using Lemma~\ref{lem:Characteristic_Lemma} we find \(n\langle\tilde{\lambda},w\rangle-1=(n-1)c\) and \(n\|\tilde{\lambda}\|^2-1=(n-1)(c^2+\sigma^2)\).  
		Plugging these identities into \eqref{Eq:ApplyCauchySchwarz} multiplied by \(n^2\) we get
		\begin{eqnarray}
			(n-1)^2c^2\leq(n-1)(c^2+\sigma^2)\left(n\|w\|^2-1\right).
		\end{eqnarray}
		The assumption \(C\neq \text{Id}_N\) ensures \(c^2+\sigma^2\neq 0\). Hence we  obtain after division
		\begin{equation}\|w\|^2\geq g_n\left(\frac{c^2}{c^2+\sigma^2}\right).\end{equation}
		Since \(0\leq w_j\leq 1\) for \(1\leq j\leq n\) and \(\sum_{j=1}^nw_j=1\) we can apply Lemma~\ref{Lem:DegenerationEstimate} with \(\alpha= g_n\left(\frac{c^2}{c^2+\sigma^2}\right)\) and \(p=w\) and find
		\begin{equation}w_{\max}\geq s\left(g_n\left(\frac{c^2}{c^2+\sigma^2}\right)\right)=s_n\left(\frac{c^2}{c^2+\sigma^2}\right).\end{equation}
	\end{proof}
	\begin{remark} \label{rem:symmetric_matr}
		Given an arbitrary symmetric positive semi-definite \(n\times n\)-matrix \(A\neq 0\) with eigenvalues \(\alpha_1\geq\ldots\geq\alpha_n\) we have \(\alpha_1\geq \operatorname{Tr}(A)s\left(\frac{\|A\|^2_F}{\operatorname{Tr}(A)^2}\right)\) where \(\|A\|_F\) denotes the Frobenius norm\cite{Horn_1990,golub_1996} of \(A\). This follows with the same methods as above since \(\alpha_j\geq0\) for \(1\leq j\leq n\), \(\sum_{j=1}^n\alpha_j=\operatorname{Tr}(A)\neq 0\) and \(\|A\|^2_F=\operatorname{Tr}(AA^T)=\operatorname{Tr}(A^2)=\sum_{j=1}^n\alpha_j^2\). We suspect that such an estimate was known before but for our best knowledge we do not know any such studies, except Ref.\cite{TARAZAGA_1990}.
			\end{remark}
	\subsection{Estimates for \(w_1\)}\label{Subsec:w1Estimates}
	\begin{lemma}\label{Lem:w1Estimate}
		Assume \(na\leq\lambda_1\leq nb\) for some numbers $n\geq 2$ and \(a,b\geq0\) such that  \(a+b>1\). Then we have 
		\begin{equation}w_1\geq \frac{g_n(c)+a-1}{b+a-1}.\end{equation}
	\end{lemma}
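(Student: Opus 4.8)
The plan is to exploit the single scalar identity furnished by the Characteristic Lemma~\ref{lem:Characteristic_Lemma}, namely $g_n(c)=\langle\tilde{\lambda},w\rangle=\sum_{j=1}^n\tilde{\lambda}_j w_j$ with $\tilde{\lambda}_j=\lambda_j/n$ as in \eqref{eq:geom_sclar_prod}, and to convert the two-sided control $a\leq\tilde{\lambda}_1\leq b$ on the leading normalised eigenvalue into a single linear inequality in $w_1$. The guiding observation is that $g_n(c)$ is a weighted average of the $\tilde{\lambda}_j$ with weights $w_j$, so isolating the $j=1$ term and uniformly controlling the rest should pin $w_1$ from below.

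First I would split the identity as $g_n(c)=\tilde{\lambda}_1 w_1+\sum_{j\geq 2}\tilde{\lambda}_j w_j$ and bound the leading term from above using the hypothesis $\tilde{\lambda}_1\leq b$, which gives $\tilde{\lambda}_1 w_1\leq b\,w_1$. The decisive step is to bound each remaining normalised eigenvalue: for every $j\geq 2$ the trace normalisation \eqref{eq:trace}, together with $\tilde{\lambda}_k\geq 0$, yields $\tilde{\lambda}_j\leq 1-\tilde{\lambda}_1$, and the lower hypothesis $\tilde{\lambda}_1\geq a$ then upgrades this to the uniform bound $\tilde{\lambda}_j\leq 1-a$ for all $j\geq 2$.

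With this uniform bound and the weight normalisation \eqref{eq:weights_norm} (using $w_j\geq 0$), I would estimate $\sum_{j\geq 2}\tilde{\lambda}_j w_j\leq(1-a)\sum_{j\geq 2}w_j=(1-a)(1-w_1)$. Combining the two estimates produces $g_n(c)\leq b\,w_1+(1-a)(1-w_1)=(1-a)+(b+a-1)w_1$. Since the hypothesis $a+b>1$ guarantees $b+a-1>0$, I can divide and rearrange to obtain exactly $w_1\geq\frac{g_n(c)+a-1}{b+a-1}$.

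The algebra is routine once the two inequalities are in place; the only genuinely substantive point, and the step I would isolate most carefully, is the derivation $\tilde{\lambda}_j\leq 1-a$ for $j\geq 2$, since it is precisely here that the \emph{lower} bound $\tilde{\lambda}_1\geq a$ enters and produces the $1-a$ that becomes the constant term after rearrangement. Everything else reduces to solving a single linear inequality, so I do not anticipate further obstacles.
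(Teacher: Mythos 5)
Your proof is correct and follows essentially the same route as the paper: both start from the Characteristic Lemma identity $g_n(c)=\sum_j \tilde{\lambda}_j w_j$, isolate the $j=1$ term, bound the tail by $(1-a)(1-w_1)$, and rearrange the resulting linear inequality using $a+b>1$. The only cosmetic difference is that you uniformly bound the eigenvalues $\tilde{\lambda}_j\leq 1-a$ before summing the weights, whereas the paper bounds the weights $w_j\leq 1-w_1$ before summing the eigenvalues and then applies $\tilde{\lambda}_1\geq a$; the two orderings yield the identical intermediate estimate.
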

	\begin{proof}
		By Lemma~\ref{lem:Characteristic_Lemma} we have
		\(ng_n(c)=\sum_{j=1}^n\lambda_j w_j\).
		Hence
		\begin{eqnarray*}
		g_n(c)&=&\frac{\lambda_1}{n}w_1+\frac{1}{n}\sum_{j=2}^n\lambda_j w_j\\ &\leq&\frac{\lambda_1}{n}w_1+(1-\frac{\lambda_1}{n})(1-w_1)\\
		&\leq& bw_1+(1-a)(1-w_1)=w_1(a+b-1)+1-a
		\end{eqnarray*}
		which yields
		\(g_n(c)+a-1\leq w_1(a+b-1)\).
		 Since \(a+b>1\) the claim follows from dividing by \(a+b-1\).
	\end{proof}
\begin{corollary}\label{Cor:w1Estimate}
	We have \(\displaystyle w_1\geq 1-\frac{1-g_n(c)}{s_n(c^2+\sigma^2)}\).
\end{corollary}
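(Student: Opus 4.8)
The plan is to deduce Corollary~\ref{Cor:w1Estimate} directly from Lemma~\ref{Lem:w1Estimate} by feeding it the two bounds on $\lambda_1/n$ that are already available. On the one hand, the general eigenvalue range $0\leq \lambda_i\leq n$ gives $\lambda_1\leq n$, so I may take $b=1$. On the other hand, Lemma~\ref{Lem:l1wmaxEstimates} (inequality \eqref{Eq:l1wmaxEstimates}) supplies the lower bound $\lambda_1/n\geq s_n(c^2+\sigma^2)$, so I may take $a=s_n(c^2+\sigma^2)$. With these choices the two-sided hypothesis $na\leq\lambda_1\leq nb$ of Lemma~\ref{Lem:w1Estimate} is satisfied.

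Next I would check the admissibility condition $a+b>1$ required by Lemma~\ref{Lem:w1Estimate}, which here reduces to $s_n(c^2+\sigma^2)>0$. Since $g_n(x)=\frac{(n-1)x+1}{n}\geq \frac{1}{n}>0$ for every $x\in[0,1]$, we have $g_n(c^2+\sigma^2)>0$, and because $s(y)>0$ for all $y>0$ it follows that $s_n(c^2+\sigma^2)=s(g_n(c^2+\sigma^2))>0$. Hence $a+b=s_n(c^2+\sigma^2)+1>1$ and the lemma applies.

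Finally I would substitute $a=s_n(c^2+\sigma^2)$ and $b=1$ into the conclusion $w_1\geq \frac{g_n(c)+a-1}{b+a-1}$ of Lemma~\ref{Lem:w1Estimate}. The denominator collapses to $b+a-1=s_n(c^2+\sigma^2)$, and rewriting the numerator as $g_n(c)+a-1=s_n(c^2+\sigma^2)-\bigl(1-g_n(c)\bigr)$ gives
\[
w_1\geq \frac{s_n(c^2+\sigma^2)-\bigl(1-g_n(c)\bigr)}{s_n(c^2+\sigma^2)}=1-\frac{1-g_n(c)}{s_n(c^2+\sigma^2)},
\]
which is exactly the claimed bound. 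I expect no genuine obstacle here: the argument is a routine specialisation of Lemma~\ref{Lem:w1Estimate}, and the only point deserving a moment's care is confirming that the denominator $s_n(c^2+\sigma^2)$ is strictly positive so that the division is legitimate and the stated bound is well defined, which is guaranteed by $g_n\geq 1/n>0$ as verified above.
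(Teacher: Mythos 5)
Your proposal is correct and matches the paper's own proof essentially verbatim: the paper also applies Lemma~\ref{Lem:w1Estimate} with \(a=s_n(c^2+\sigma^2)\) (from Lemma~\ref{Lem:l1wmaxEstimates}) and \(b=1\) (from \(\lambda_1\leq n\)). Your explicit check that \(a+b>1\), via \(g_n(x)\geq 1/n>0\) and hence \(s_n(c^2+\sigma^2)>0\), is a detail the paper leaves implicit, but it is the same argument.
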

\begin{proof}
	Since \(ns_n(c^2+\sigma^2)\leq \lambda_1\leq n\) by Lemma~\ref{Lem:l1wmaxEstimates} and standard properties of correlation matrices the claim follows from Lemma~\ref{Lem:w1Estimate} with \(a=s_n(c^2+\sigma^2)\) and \(b=1\).
\end{proof}
To complete the proof of the estimate for \(w_1\) in Theorem~\ref{th:l_1_w_max} we will use the following lemma from perturbation theory in linear algebra. Note that the following version of that lemma actually follows from a deep result for generalised eigenvalue problems due to Stewart\cite{STEWART197969}. We will give a proof for the simple case we need to consider.
	\begin{lemma}\label{Lem:Wieland}
		Let \(A\) and \(B\) be to symmetric real \(n\times n\)-matrices. Let \(\alpha_1, \ldots,\alpha_n\) be the eigenvalues of \(A\) and \(\xi_1,\ldots,\xi_n\) be an orthonormal basis of corresponding eigenvectors that is \(A\xi_j=\alpha_j\xi_j\), \(1\leq j\leq n\). Given an eigenvalue \(\lambda\) of \(B\) with eigenvector \(v\), \(\|v\|=1\), put \(\eta_k:=\min\{|\lambda-\alpha_j|\mid j\neq k\}\) for every \(1\leq j\leq n\). Then 
		\begin{equation}\eta_k^2(1-\langle v,\xi_k\rangle^2)\leq \|B-A\|^2_F,\text{ for all } 1\leq k\leq n.\end{equation}
	\end{lemma}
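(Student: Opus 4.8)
The plan is to feed the eigenvector $v$ of $B$ directly into the difference $B-A$ and expand the result in the orthonormal eigenbasis $\xi_1,\ldots,\xi_n$ of $A$. First I would write $v=\sum_{j=1}^n\langle v,\xi_j\rangle\,\xi_j$, which is legitimate since the $\xi_j$ form an orthonormal basis of $\R^n$. Using $Bv=\lambda v$ together with $A\xi_j=\alpha_j\xi_j$, the residual vector takes the diagonal form
\[
(B-A)v=\lambda v-Av=\sum_{j=1}^n(\lambda-\alpha_j)\langle v,\xi_j\rangle\,\xi_j,
\]
so that orthonormality of the $\xi_j$ immediately gives
\[
\|(B-A)v\|^2=\sum_{j=1}^n(\lambda-\alpha_j)^2\langle v,\xi_j\rangle^2.
\]

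Next I would isolate the $k$-th summand and bound the rest from below. By definition $\eta_k=\min_{j\neq k}|\lambda-\alpha_j|$, hence $(\lambda-\alpha_j)^2\geq\eta_k^2$ for every $j\neq k$; discarding the (nonnegative) $k$-th term then yields
\[
\|(B-A)v\|^2\geq\sum_{j\neq k}(\lambda-\alpha_j)^2\langle v,\xi_j\rangle^2\geq\eta_k^2\sum_{j\neq k}\langle v,\xi_j\rangle^2.
\]
Parseval's identity $\sum_{j=1}^n\langle v,\xi_j\rangle^2=\|v\|^2=1$ converts the remaining sum into $1-\langle v,\xi_k\rangle^2$, so that $\|(B-A)v\|^2\geq\eta_k^2\bigl(1-\langle v,\xi_k\rangle^2\bigr)$.

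Finally I would close the chain by bounding the left-hand side with the Frobenius norm. For any matrix $M$ and any unit vector $v$, applying the Cauchy--Schwarz inequality row by row gives $\|Mv\|^2\leq\|M\|_F^2\,\|v\|^2$; with $M=B-A$ and $\|v\|=1$ this reads $\|(B-A)v\|^2\leq\|B-A\|_F^2$, and combining the two estimates proves the claim. I do not anticipate a genuine obstacle: the argument is essentially bookkeeping in the eigenbasis of $A$, and the only point demanding care is that the bound $\eta_k^2\leq(\lambda-\alpha_j)^2$ is applied solely to indices $j\neq k$, which is precisely where the ``$j\neq k$'' in the definition of $\eta_k$ is used. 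In this way the deep generalised eigenvalue result of Stewart is sidestepped entirely, and the symmetric special case we need is settled by an elementary expansion.
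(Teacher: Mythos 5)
Your proposal is correct and follows essentially the same route as the paper's own proof: expand $v$ in the eigenbasis of $A$, identify $(B-A)v=\sum_j(\lambda-\alpha_j)\langle v,\xi_j\rangle\xi_j$, bound the terms with $j\neq k$ below by $\eta_k^2$, and control $\|(B-A)v\|$ by the Frobenius norm. The only cosmetic difference is the order of the estimates (you bound by $\eta_k^2$ before invoking the Frobenius bound, the paper does it the other way around), which changes nothing of substance.
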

	\begin{proof}
		We  rewrite \(v=\sum_{j=1}^na_j\xi_j\) where \(a_j=\langle v,\xi_j\rangle\). It follows
		\begin{equation}\sum_{j=1}^n(\lambda-\alpha_j)a_j\xi_j=Bv-Av= (B-A)v\end{equation}
		Taking the norm of both sides using \(\|(A-B)v\|\leq \|A-B\|_F\|v\|=\|A-B\|_F\), we find
		\begin{equation}\sum_{j=1}^n(\lambda-\alpha_j)^2a_j^2 \leq \|A-B\|^2_F.\end{equation} With
		\begin{equation}(1-a_k^2)\eta_k^2=\sum_{j\neq k}\eta_k^2a_j^2\leq \sum_{j\neq k}(\lambda-\alpha_j)^2a_j^2\leq \sum_{j=1}^n(\lambda-\alpha_j)^2a_j^2\end{equation}
		and \(a_k=\langle v,\xi_k\rangle\) the claim follows.
	\end{proof}
We now apply Lemma~\ref{Lem:Wieland} to the correlation matrix case.
	\begin{lemma}\label{Lem:w1EstimatePerturbation}
		Assume \(c>0\).
		We have \(\displaystyle w_1\geq 1-\frac{n-1}{n}\frac{\sigma^2}{c^2}\).
	\end{lemma}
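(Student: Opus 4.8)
The plan is to derive the bound by comparing $C$ with a reference matrix whose spectral decomposition is known exactly and which has $\delta_n$ as an eigenvector, and then invoking the perturbation result Lemma~\ref{Lem:Wieland}. The natural reference is the constant-coefficient matrix from Example~\ref{ex:sigma_0} with off-diagonal entries equal to $c$, namely
\begin{equation}
C_0 := (1-c)\operatorname{Id}_n + c\,n\,\delta_n\delta_n^T .
\end{equation}
The point of this choice is that $C_0$ agrees with $C$ on the diagonal and has $\delta_n$ as an eigenvector, so the perturbation estimate will directly control the quantity $w_1=\langle v_1,\delta_n\rangle^2$.

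First I would record the spectrum of $A:=C_0$. A direct computation gives $C_0\delta_n=(1+(n-1)c)\delta_n=ng_n(c)\delta_n$, while $C_0u=(1-c)u$ for every $u\perp\delta_n$. Hence $C_0$ has the eigenvalue $\alpha_1=1+(n-1)c$ with eigenvector $\xi_1=\delta_n$, together with the $(n-1)$-fold eigenvalue $1-c$ on $\delta_n^\perp$, and any orthonormal basis $\xi_2,\ldots,\xi_n$ of $\delta_n^\perp$ completes an eigenbasis. Next I would compute the Frobenius distance. Since $C$ and $C_0$ share the diagonal, only off-diagonal terms survive, and using the definitions \eqref{eq:def_mean_c} and \eqref{eq:def_variance} one obtains
\begin{equation}
\|C-C_0\|_F^2=2\sum_{i>j}(C_{ij}-c)^2=n(n-1)\sigma^2,
\end{equation}
because the cross term cancels exactly against $\sum_{i>j}C_{ij}=\tfrac{n(n-1)}{2}c$ while $\sum_{i>j}C_{ij}^2=\tfrac{n(n-1)}{2}(c^2+\sigma^2)$. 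This identity is the computational heart of the argument.

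Finally I would apply Lemma~\ref{Lem:Wieland} with $A=C_0$, $B=C$, index $k=1$, and $\lambda=\lambda_1$, $v=v_1$. Since the only eigenvalues $\alpha_j$ with $j\neq 1$ equal $1-c$, the relevant gap is $\eta_1=|\lambda_1-(1-c)|$, and $\langle v_1,\xi_1\rangle^2=w_1$, so the lemma gives $\eta_1^2(1-w_1)\leq n(n-1)\sigma^2$. To turn this into the stated inequality I would bound the gap from below using Lemma~\ref{Lem:l1wmaxestimates}: from $\lambda_1\geq ng_n(c)=1+(n-1)c$ and $c>0$ it follows that $\eta_1=\lambda_1-(1-c)\geq nc>0$, whence $\eta_1^2\geq n^2c^2$. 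Substituting yields
\begin{equation}
1-w_1\leq \frac{n(n-1)\sigma^2}{n^2c^2}=\frac{n-1}{n}\frac{\sigma^2}{c^2},
\end{equation}
which is exactly the claim.

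All steps are elementary once the preceding lemmas are in hand; the one genuinely delicate choice is the reference matrix. It must match the diagonal of $C$ so that the Frobenius distance collapses to $n(n-1)\sigma^2$ rather than a quantity still involving $c$, and its spectrum must place the nearest competing eigenvalue $1-c$ a distance of order $nc$ below $\lambda_1$. The hypothesis $c>0$ is precisely what forces this gap to be positive and, via Lemma~\ref{Lem:l1wmaxestimates}, of the right size; without it the separation $\eta_1$ can vanish and the bound degenerates.
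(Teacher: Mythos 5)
Your proposal is correct and is essentially identical to the paper's own proof: the same reference matrix $C_0$ with constant off-diagonal entries $c$, the same Frobenius-distance identity $\|C-C_0\|_F^2=n(n-1)\sigma^2$, the same application of Lemma~\ref{Lem:Wieland} with $\xi_1=\delta_n$, and the same gap bound $\eta_1\geq nc$ via Lemma~\ref{Lem:l1wmaxestimates}. Nothing is missing; the argument stands as written.
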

	\begin{proof}
		Write \(C=C_0+(C-C_0)\) where
		\begin{equation}C_0=
\begin{pmatrix}
 1 & c & \cdots & c\\ 
 c & 1  & \ddots   & \vdots \\ 
 \vdots  & \ddots  & \ddots   & c\\ 
 c & \cdots  & c  & 1
\end{pmatrix}.
\end{equation}
		We know that the eigenvalues \(\alpha_1, \ldots , \alpha_n\) of \(C_0\) are given by \(\alpha_1=ng_n(c)\) and \(\alpha_j=1-c\) for \(j\geq 2\) (see Example~\ref{ex:sigma_0}). Since \(c>0\) we have \(\alpha_1>\alpha_j\) for all \(j\geq 2\). Furthermore, the vector \(\xi_1=\frac{1}{\sqrt{n}}(1,\ldots,1)=\delta_n\) is an normalised eigenvector for the single eigenvalue \(\alpha_1\). From the definition of \(\sigma\) we find \(\|C-C_0\|^2_F=n(n-1)\sigma^2\). Now let \(\lambda_1\) be the largest eigenvalue of \(C\) and \(v_1\) a normalised eigenvector for \(\lambda_1\) that is \(C v_1=\lambda_1 v_1\) and \(\|v_1\|=1\). Applying Lemma~\ref{Lem:Wieland} with \(B=C\) and \(A=C_0\) yields
		\begin{equation}\eta_1^2(1-\langle v_1,\xi_1\rangle^2)\leq n(n-1)\sigma^2\end{equation}
		with \(\eta_1=|\lambda_1-(1-c)|\). Since \(\lambda_1\) is the largest eigenvalue of \(C\) we have \(\lambda_1\geq ng_n(c)=(n-1)c+1>1-c\) by Lemma~\ref{Lem:l1wmaxestimates} and hence \(\eta_1\geq (n-1)c+1-1+c=nc\). It follows
		\begin{equation}1-\langle v_1,\xi_1\rangle^2\leq \frac{n-1}{n}\frac{\sigma^2}{c^2}.\end{equation}
We finish the proof with the observation  \(\langle v_1,\xi_1\rangle^2=\langle v_1,\delta_n\rangle^2=w_1\). 
	\end{proof}
	
\section{Proof of  Main Results: Theorem~\ref{Thm:w1wmaxDomainNLarge}} \label{sec:proofs_T3}
	We will prove Theorem~\ref{Thm:w1wmaxDomainNLarge} in two steps. First we will prove a more general but rather technical version of Theorem~\ref{Thm:w1wmaxDomainNLarge}  which is given as follows.
\begin{theorem}\label{Thm:w1wmaxDomainGeneral}
		Let \(C\neq\operatorname{Id}_n\) be an \(n\times n\)-correlation matrix with mean correlation \(c>0\) and standard deviation \(\sigma\). Let \(\lambda_1\geq\ldots\geq \lambda_n\) be the eigenvalues of \(C\) and \(v_1,\ldots,v_n\) an orthonormal basis of corresponding eigenvectors with weights \(w_j=\langle v_j,\delta_n\rangle^2\), \(1\leq j\leq n\), and \(w_{\operatorname{max}}=\max_{1\leq j\leq n}w_j\). If at least one of the following three conditions
\begin{itemize}
			\item[(i)] \(\displaystyle g_n(c)>\frac{1}{2}\)
			\item[(ii)] \(\displaystyle\frac{g_n(c)^2}{g_n(c^2+\sigma^2)}>\frac{1}{2}\)
\item[(iii)]
			\(\displaystyle s_n\left(\frac{c^2}{c^2+\sigma^2}\right)>\frac{n-1}{n}\frac{\sigma^2}{c^2}\)
\end{itemize}
		is satisfied we have \(w_1>\frac{1}{2}\) and hence \(w_1=w_{\operatorname{max}}\). 
\end{theorem}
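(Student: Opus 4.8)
The plan is to reduce the whole statement to the single inequality \(w_1>\tfrac12\), because the implication \(w_1>\tfrac12\Rightarrow w_1=w_{\operatorname{max}}\) is immediate: by \(\sum_{j=1}^n w_j=1\) and \(w_j\geq0\), if \(w_1>\tfrac12\) then \(\sum_{j\neq1}w_j=1-w_1<\tfrac12\), so \(w_j<\tfrac12<w_1\) for every \(j\neq1\) and hence \(w_1\) is the strict maximum. It therefore suffices to derive \(w_1>\tfrac12\) from each of (i)--(iii). The tool I would set up first is the inequality coming from the characteristic identity \(g_n(c)=\sum_j\tilde{\lambda}_jw_j\) of Lemma~\ref{lem:Characteristic_Lemma}, combined with the ordering \(\tilde{\lambda}_1\geq\tilde{\lambda}_2\geq\cdots\geq0\) and \(\sum_{j\geq2}w_j=1-w_1\), namely
\begin{equation}
g_n(c)=\tilde{\lambda}_1 w_1+\sum_{j\geq 2}\tilde{\lambda}_j w_j\leq \tilde{\lambda}_1 w_1+\tilde{\lambda}_2(1-w_1),
\end{equation}
which is in the spirit of Lemma~\ref{Lem:w1Estimate} and which I would exploit in two different ways.

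For condition (i), Lemma~\ref{Lem:l1wmaxestimates} gives \(\tilde{\lambda}_1\geq g_n(c)>\tfrac12\), so \(\tilde{\lambda}_2\leq1-\tilde{\lambda}_1\) and the displayed bound becomes \(g_n(c)\leq\tilde{\lambda}_1w_1+(1-\tilde{\lambda}_1)(1-w_1)\). Solving for \(w_1\) (the coefficient \(2\tilde{\lambda}_1-1\) is positive) gives \(w_1\geq\frac{g_n(c)-(1-\tilde{\lambda}_1)}{2\tilde{\lambda}_1-1}\), and a one-line check shows this is \(\geq g_n(c)\) since the cross-multiplied difference factors as \((1-\tilde{\lambda}_1)(2g_n(c)-1)\geq0\); hence \(w_1\geq g_n(c)>\tfrac12\). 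For condition (ii) I would argue by contradiction: assuming \(w_1\leq\tfrac12\), the right-hand side of the displayed bound is increasing in \(w_1\), so it is at most \(\tfrac12(\tilde{\lambda}_1+\tilde{\lambda}_2)\); by the quadratic-mean/arithmetic-mean inequality this is \(\leq\frac{1}{\sqrt2}\sqrt{\tilde{\lambda}_1^2+\tilde{\lambda}_2^2}\leq\frac{1}{\sqrt2}\sqrt{\|\tilde{\lambda}\|^2}\), and \(\|\tilde{\lambda}\|^2=g_n(c^2+\sigma^2)\) by Lemma~\ref{lem:Characteristic_Lemma}. Thus \(g_n(c)\leq\sqrt{g_n(c^2+\sigma^2)/2}\), i.e.\ \(\frac{g_n(c)^2}{g_n(c^2+\sigma^2)}\leq\tfrac12\), contradicting (ii); therefore \(w_1>\tfrac12\).

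For condition (iii) I would combine two one-sided estimates through a ``sum exceeds one'' argument: Lemma~\ref{Lem:l1wmaxEstimates} gives \(w_{\operatorname{max}}\geq s_n\!\left(\frac{c^2}{c^2+\sigma^2}\right)\) and Lemma~\ref{Lem:w1EstimatePerturbation} gives \(w_1\geq1-\frac{n-1}{n}\frac{\sigma^2}{c^2}\). Condition (iii) states exactly that the first quantity strictly exceeds the second, so \(w_1+w_{\operatorname{max}}>1\); since two distinct weights cannot sum to more than \(1\), this forces \(w_{\operatorname{max}}=w_1\) and hence \(2w_1>1\). The main obstacle I anticipate is precisely this possible separation of \(w_1\) from \(w_{\operatorname{max}}\): estimates controlling \(\|w\|\) or \(w_{\operatorname{max}}\) do not by themselves control \(w_1\), because the largest weight need not sit at the largest eigenvalue — Theorem~\ref{Thm:AsymptoticCounterExamples} even produces matrices with \(w_1<w_{\operatorname{max}}\). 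The decisive input is thus always a genuine lower bound on \(w_1\) itself, furnished for (i) and (ii) by the ordering-based inequality above and for (iii) by the perturbation estimate of Lemma~\ref{Lem:w1EstimatePerturbation}, after which the ``sum exceeds one'' device upgrades them to \(w_1=w_{\operatorname{max}}>\tfrac12\).
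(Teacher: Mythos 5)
Your proposal is correct and follows essentially the same route as the paper: the paper proves (i) and (ii) as the contraposition of its Lemma~\ref{Lem:w1wmax12Gen}, starting from the very same inequality \(g_n(c)\leq \tilde{\lambda}_1w_1+\tilde{\lambda}_2(1-w_1)\) (with \((a+b)^2\leq 2a^2+2b^2\) playing the role of your QM--AM step), and proves (iii) as the contraposition of Lemma~\ref{Lem:w1wmax3General}, which combines \(w_1\leq 1-w_{\max}\) with exactly the two bounds from Theorem~\ref{th:l_1_w_max} that you invoke. Your direct phrasing (including the slightly stronger intermediate conclusion \(w_1\geq g_n(c)\) under (i)) is just the logically equivalent, non-contrapositive form of the paper's argument.
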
 

The domain given by (i)-(iii) in Theorem \ref{Thm:w1wmaxDomainGeneral} is shown in Fig.~\ref{Fig:technical_domains} (green+blue). This domain covers a slightly larger area than the domain from Theorem \ref{Thm:w1wmaxDomainNLarge} (blue), especially when  $n$ is small. However, both domains coincide when $n$ goes to infinity.
	
In the second part of the proof we will show that the conditions in Theorem~\ref{Thm:w1wmaxDomainNLarge} imply the conditions in Theorem~\ref{Thm:w1wmaxDomainGeneral}. Theorem~\ref{Thm:w1wmaxDomainGeneral} is in principal the contraposition of Lemma~\ref{Lem:w1wmax12Gen} and Lemma~\ref{Lem:w1wmax3General} below.
	\begin{lemma}\label{Lem:w1wmax12Gen}
		If \(w_1\leq \frac{1}{2}\) we have
		\begin{equation}g_n(c)\leq \frac{1}{2}\,\, \text{ and }\,\, g_n(c)^2\leq \frac{1}{2} g_n(c^2+\sigma^2).\end{equation}
	\end{lemma}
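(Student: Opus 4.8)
The plan is to derive both inequalities from a single clean upper bound on $g_n(c)=\langle\tilde\lambda,w\rangle$ that exploits the hypothesis $w_1\le\frac12$ together with the ordering $\tilde\lambda_1\ge\tilde\lambda_2\ge\ldots\ge\tilde\lambda_n\ge0$. Concretely, I would first establish the key estimate
\[
\langle\tilde\lambda,w\rangle\le\tfrac12(\tilde\lambda_1+\tilde\lambda_2).
\]
Throughout I use the Characteristic Lemma~\ref{lem:Characteristic_Lemma} to identify $\langle\tilde\lambda,w\rangle=g_n(c)$ and $\|\tilde\lambda\|^2=g_n(c^2+\sigma^2)$, so that the whole argument reduces to the two vectors $\tilde\lambda$ and $w$ and their defining constraints $\sum_j\tilde\lambda_j=\sum_jw_j=1$ with nonnegative entries.

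To prove the key estimate, I would split off the first term and bound the remainder by the second largest normalised eigenvalue: since $\tilde\lambda_j\le\tilde\lambda_2$ and $w_j\ge0$ for $j\ge2$ while $\sum_{j\ge2}w_j=1-w_1$, one obtains $\langle\tilde\lambda,w\rangle\le\tilde\lambda_1w_1+\tilde\lambda_2(1-w_1)=\tilde\lambda_2+w_1(\tilde\lambda_1-\tilde\lambda_2)$. Because $\tilde\lambda_1\ge\tilde\lambda_2$ the coefficient of $w_1$ is nonnegative, so the right-hand side is increasing in $w_1$; inserting $w_1\le\frac12$ yields the claimed bound $\frac12(\tilde\lambda_1+\tilde\lambda_2)$.

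From here both assertions follow by elementary manipulations. For the first, $g_n(c)\le\frac12(\tilde\lambda_1+\tilde\lambda_2)\le\frac12\sum_j\tilde\lambda_j=\frac12$, using only nonnegativity and the normalisation $\sum_j\tilde\lambda_j=1$. For the second, I would square the key estimate (legitimate since both sides are nonnegative) and absorb the cross term via $2\tilde\lambda_1\tilde\lambda_2\le\tilde\lambda_1^2+\tilde\lambda_2^2$, giving $g_n(c)^2\le\frac14(\tilde\lambda_1+\tilde\lambda_2)^2\le\frac12(\tilde\lambda_1^2+\tilde\lambda_2^2)\le\frac12\|\tilde\lambda\|^2=\frac12\,g_n(c^2+\sigma^2)$.

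The step I expect to be the only genuine subtlety is the second inequality. The tempting route of applying the Cauchy--Schwarz inequality directly to $\langle\tilde\lambda,w\rangle$ fails, because it would require controlling $\|w\|^2$, which can be large: the hypothesis restricts only $w_1$, whereas some later weight $w_j$ may be close to $1$. The resolution is precisely to avoid a global Cauchy--Schwarz and instead funnel everything through the bound $\frac12(\tilde\lambda_1+\tilde\lambda_2)$, which uses the eigenvalue ordering and the single constraint $w_1\le\frac12$; the closing inequality $(\tilde\lambda_1+\tilde\lambda_2)^2\le2(\tilde\lambda_1^2+\tilde\lambda_2^2)\le2\|\tilde\lambda\|^2$ is then immediate. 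I also note that the hypothesis $n\ge2$ guarantees that $\tilde\lambda_2$ exists, so no degenerate case arises.
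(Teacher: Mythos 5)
Your proof is correct and follows essentially the same route as the paper: the paper's proof also splits off the first term, bounds the tail by $\lambda_2\sum_{j\geq 2}w_j$, uses $w_1\leq\frac12$ and the ordering to get $ng_n(c)\leq\frac12(\lambda_1+\lambda_2)$, then concludes via $\lambda_1+\lambda_2\leq n$ and $(a+b)^2\leq 2a^2+2b^2$ exactly as you do. The only cosmetic difference is that you work with the normalised eigenvalues $\tilde{\lambda}$ throughout, and your implicit use of $g_n(c)=\langle\tilde{\lambda},w\rangle\geq 0$ when squaring is the same (harmless) implicit step the paper makes.
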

	\begin{proof}
		Since \(\lambda_1\geq\lambda_2\geq\ldots\geq\lambda_n\) we have
		\begin{eqnarray*}
		ng_n(c)&=& \lambda_1w_1+\sum_{j=2}^n\lambda_jw_j\leq\lambda_1w_1+\lambda_2\sum_{j=2}^nw_j\\
		&=&\lambda_1w_1+\lambda_2(1-w_1)=w_1(\lambda_1-\lambda_2)+\lambda_2.
		\end{eqnarray*}
		Since \(\lambda_1\geq\lambda_2\) the assumption \(w_1\leq \frac{1}{2}\) leads  to
		\begin{equation}ng_n(c)\leq \frac{1}{2}(\lambda_1-\lambda_2)+\lambda_2 =\frac{1}{2}(\lambda_1+\lambda_2).\end{equation}
		Using \(\lambda_1+\lambda_2\leq \sum_{j=1}^n\lambda_j=n\) we find \(g_n(c)\leq \frac{1}{2}\). Furthermore, we have
		\begin{equation}g_n(c)^2\leq \frac{1}{4}(\lambda_1/n+\lambda_2/n)^2\leq \frac{1}{2}((\lambda_1/n)^2+(\lambda_2/n)^2)\end{equation}
		where we used \((a+b)^2\leq 2 a^2+2b^2\) for any two real numbers \(a,b\in\R\). Since \begin{equation}(\lambda_1/n)^2+(\lambda_2/n)^2\leq \sum_{j=1}^{n}(\lambda_j/n)^2=g_n(c^2+\sigma^2)\end{equation} we conclude \(g_n(c)^2\leq \frac{1}{2}g_n(c^2+\sigma^2)\).
	\end{proof}
	\begin{lemma}\label{Lem:w1wmax3General}
		Given \(c>0\) and \(w_1\leq \frac{1}{2}\) we have \begin{equation}s_n\left(\frac{c^2}{c^2+\sigma^2}\right)\leq\frac{n-1}{n}\frac{\sigma^2}{c^2}\end{equation}
	\end{lemma}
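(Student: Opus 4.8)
The plan is to argue by contradiction, combining the two lower bounds on the weights that are already available at this point: the perturbation estimate $w_1\geq 1-\frac{n-1}{n}\frac{\sigma^2}{c^2}$ from Lemma~\ref{Lem:w1EstimatePerturbation} (which uses $c>0$) and the degeneration estimate $w_{\max}\geq s_n\!\left(\frac{c^2}{c^2+\sigma^2}\right)$ from Lemma~\ref{Lem:l1wmaxEstimates}. Writing $R:=\frac{n-1}{n}\frac{\sigma^2}{c^2}$ for brevity, I would assume the hypotheses $c>0$ and $w_1\leq\frac{1}{2}$ and suppose, towards a contradiction, that the asserted inequality fails, i.e.\ $s_n\!\left(\frac{c^2}{c^2+\sigma^2}\right)>R$.

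First I would feed the perturbation bound into the hypothesis $w_1\leq\frac{1}{2}$: from $w_1\geq 1-R$ one gets $R\geq 1-w_1\geq\frac{1}{2}\geq w_1$. Next, the contradiction hypothesis together with Lemma~\ref{Lem:l1wmaxEstimates} gives $w_{\max}\geq s_n\!\left(\frac{c^2}{c^2+\sigma^2}\right)>R\geq w_1$, so in particular $w_{\max}>w_1$. Consequently the maximal weight cannot be attained at the index $1$, and there is some $j\neq 1$ with $w_{\max}=w_j$.

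The decisive step is to play these two distinct indices off against the normalisation $\sum_{k}w_k=1$ and the non-negativity of the weights. Since $1$ and $j$ are distinct we have $w_1+w_j\leq\sum_k w_k=1$. On the other hand, combining $w_1\geq 1-R$ with $w_{\max}>R$ yields $w_1+w_j=w_1+w_{\max}>(1-R)+R=1$, which is the sought contradiction. Hence $s_n\!\left(\frac{c^2}{c^2+\sigma^2}\right)\leq R$, as claimed.

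The point one must not miss --- and the reason a more pedestrian approach fails --- is that one cannot prove the inequality by a direct functional comparison of $s_n\!\left(\frac{c^2}{c^2+\sigma^2}\right)$ with $R$ using only $R\geq\frac{1}{2}$: at the threshold $R=\frac{1}{2}$ the left-hand side already exceeds $\frac{1}{2}$, so $s_n\leq R$ is simply false as a pointwise bound under that hypothesis alone. What rescues the argument is the structural observation that once $w_{\max}>w_1$ the maximum lives at a coordinate different from the first, so that the simplex constraint $w_1+w_{\max}\leq 1$ becomes available and collides with the two independent lower bounds on $w_1$ and $w_{\max}$. Recognising this mechanism, rather than attempting to estimate $s_n$ analytically, is the crux of the proof.
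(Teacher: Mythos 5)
Your proof is correct and rests on exactly the same ingredients as the paper's: the perturbation bound $w_1 \geq 1-\frac{n-1}{n}\frac{\sigma^2}{c^2}$ from Lemma~\ref{Lem:w1EstimatePerturbation}, the bound $w_{\max}\geq s_n\bigl(\frac{c^2}{c^2+\sigma^2}\bigr)$ from Lemma~\ref{Lem:l1wmaxEstimates}, and the collision of these two with the constraint $w_1+w_{\max}\leq 1$. The paper runs the same mechanism directly---it first shows $w_1\leq 1-w_{\max}$ by a short case analysis on whether the maximal weight sits at index $1$ (the hypothesis $w_1\leq\frac{1}{2}$ settling that case) and then chains the two bounds---so your contradiction framing, which uses $w_1\leq\frac{1}{2}$ instead to force the maximum onto an index $j\neq 1$, is only a cosmetic reorganization of the paper's argument.
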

	\begin{proof}
		Choose \(1\leq k\leq n\) such that \(w_\text{max}=w_k\).
		If \(k\neq1\) we have \(w_1\leq 1-w_k\) since \(\sum_{j=1}^nw_j=1\) and \(0\leq w_j\leq 1\) for \(1\leq j\leq n\) holds. If \(k=1\) we find \(w_1\leq 1-w_k\) since \(w_1\leq \frac{1}{2}\). Hence we conclude that \(w_1\leq 1-w_k=1-w_{\text{max}}\) is satisfied. Using the estimates in Theorem~\ref{th:l_1_w_max} we obtain \begin{equation}1-\frac{n-1}{n}\frac{\sigma^2}{c^2}\leq 1-s_n\left(\frac{c^2}{c^2+\sigma^2}\right).\end{equation}
		Hence the claim follows.
	\end{proof}
We showed that Theorem~\ref{Thm:w1wmaxDomainGeneral} is valid. Then Theorem~\ref{Thm:w1wmaxDomainNLarge}  follows immediately from the three lemmata below.
	\begin{lemma}
		Given \(c\geq\frac{1}{2}\) we have \(g_n(c)> \frac{1}{2}\). 
	\end{lemma}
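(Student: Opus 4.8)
The plan is to use nothing more than the explicit formula for $g_n$ together with its monotonicity in the argument. Recall from \eqref{eq:DefinitionOfGn} that $g_n(x)=\frac{(n-1)x+1}{n}$ is an affine function of $x$ whose slope $\frac{n-1}{n}$ is strictly positive for every $n\geq 2$. Hence $g_n$ is strictly increasing, and it suffices to evaluate it at the single boundary point $x=\frac{1}{2}$ and then invoke monotonicity for all $c\geq\frac{1}{2}$.

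First I would compute the boundary value directly:
\[
g_n\!\left(\tfrac{1}{2}\right)=\frac{(n-1)\tfrac{1}{2}+1}{n}=\frac{n+1}{2n}.
\]
Since $\frac{n+1}{2n}-\frac{1}{2}=\frac{1}{2n}>0$ for every finite $n$, this already yields the strict inequality $g_n(\tfrac{1}{2})>\tfrac{1}{2}$. This is precisely where the strictness of the claimed bound originates: even the borderline case $c=\frac{1}{2}$ produces a value strictly above $\frac{1}{2}$, thanks to the additive correction $\frac{1}{2n}$.

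Then, for any $c\geq\frac{1}{2}$, strict monotonicity of $g_n$ gives $g_n(c)\geq g_n(\tfrac{1}{2})=\frac{n+1}{2n}>\frac{1}{2}$, which is the assertion. Equivalently, one may write $g_n(c)=c+\frac{1-c}{n}$, as in the proof of Lemma~\ref{Lem:gnsnestimate}, and observe that $c\geq\frac{1}{2}$ forces $g_n(c)\geq c\geq\frac{1}{2}$, with the nonnegative slack $\frac{1-c}{n}$ supplying strictness whenever $c<1$ while the trivial value $g_n(1)=1$ handles the remaining extremal case.

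There is essentially no obstacle here: the statement is a one-line consequence of the definition of $g_n$. The only point deserving a moment's care is ensuring the inequality is strict rather than merely $\geq$, and this is automatic from the explicit value $\frac{n+1}{2n}$ (or, alternatively, from the scaling inequality $g_n(x)\geq x$ recorded in Lemma~\ref{Lem:gnsnestimate} combined with the strictly positive correction term).
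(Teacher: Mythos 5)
Your proposal is correct and takes essentially the same approach as the paper: both are one-line consequences of the affine formula for \(g_n\). The paper gets strictness by splitting off the case \(c=1\) and writing \(ng_n(c)=(n-1)c+1>nc\geq n/2\) for \(\tfrac{1}{2}\leq c<1\), while you get it from the boundary value \(g_n(\tfrac{1}{2})=\tfrac{n+1}{2n}>\tfrac{1}{2}\) plus monotonicity; indeed your closing ``equivalently'' remark reproduces the paper's argument verbatim.
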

	\begin{proof}
		For \(c=1\) we have \(g_n(c)=1>\frac{1}{2}\). For \(\frac{1}{2}\leq c<1\) we have
		\(ng_n(c)=(n-1)c+1>nc\geq n/2\).
	\end{proof}
	\begin{lemma}
		Given \(\sigma\leq c-\frac{1}{\sqrt{n}}\) we have \(\frac{g_n(c)^2}{g_n(c^2+\sigma^2)}> \frac{1}{2}\). 
	\end{lemma}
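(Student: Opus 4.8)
The plan is to clear the (strictly positive) denominator and instead prove the equivalent polynomial inequality $2g_n(c)^2 > g_n(c^2+\sigma^2)$, which turns the statement into an elementary estimate in the two real variables $c$ and $\sigma$. Here $g_n(c^2+\sigma^2)\geq 1/n>0$ since $c^2+\sigma^2\geq 0$, so this reduction is legitimate. Because $g_n$ is affine, I would first record the two identities $g_n(c)^2 = g_n(c^2)-\frac{n-1}{n^2}(c-1)^2$ and $g_n(c^2+\sigma^2)=g_n(c^2)+\frac{n-1}{n}\sigma^2$. Subtracting, the difference collapses to $2g_n(c)^2-g_n(c^2+\sigma^2)=g_n(c^2)-\frac{2(n-1)}{n^2}(c-1)^2-\frac{n-1}{n}\sigma^2$, so that $\sigma$ enters only through the single term $-\frac{n-1}{n}\sigma^2$.

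Next I would bring in the hypothesis. Together with $\sigma\geq 0$, the assumption $\sigma\leq c-1/\sqrt n$ forces $c\geq 1/\sqrt n>0$ and yields $\sigma^2\leq (c-1/\sqrt n)^2$. Since $\sigma^2$ appears with a negative coefficient, replacing it by this upper bound can only decrease the expression, so it suffices to establish the inequality after the substitution $\sigma^2=(c-1/\sqrt n)^2$. A short computation — in which the leading $\frac{n-1}{n}c^2$ coming from $g_n(c^2)$ is cancelled by the matching term of $\frac{n-1}{n}(c-1/\sqrt n)^2$ — then reduces the goal, after multiplying through by $n^2$ and rearranging, to the single-variable inequality
\begin{equation}
2(n-1)\,f(c)<1,\qquad f(c):=c^2-(2+\sqrt n)\,c+1.
\end{equation}

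Finally I would analyse $f$ on the admissible range $c\in[1/\sqrt n,1]$. Its vertex sits at $c=(2+\sqrt n)/2>1$, so $f$ is strictly decreasing on this interval and attains its maximum at the left endpoint, where $f(1/\sqrt n)=1/n-2/\sqrt n=(1-2\sqrt n)/n<0$ for every $n\geq 2$. Hence $f(c)\leq f(1/\sqrt n)<0$, and a fortiori $2(n-1)f(c)<0<1$, which is exactly the required inequality. The only point needing care is the algebraic bookkeeping of the reduction; once the identities for $g_n$ are organised cleanly, the estimate is essentially free. In particular, the worst case of the comparison quadratic is realised at $c=1/\sqrt n$ rather than in the interior, and the correlation-matrix constraint $c\leq 1$ keeps us strictly below the vertex $(2+\sqrt n)/2$, so no further analytic difficulty arises.
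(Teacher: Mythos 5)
Your proof is correct, but it follows a genuinely different route from the paper's. The paper keeps the quotient and compares numerator and denominator separately against $c^2$: from $0\leq\sigma\leq c-\tfrac{1}{\sqrt{n}}$ it deduces $\sigma^2\leq\left(c-\tfrac{1}{\sqrt{n}}\right)\left(c+\tfrac{1}{\sqrt{n}}\right)=c^2-\tfrac{1}{n}$, hence $ng_n(c^2+\sigma^2)\leq 2nc^2-2c^2+\tfrac{1}{n}\leq 2nc^2-\tfrac{1}{n}<2nc^2$, and then concludes from $g_n(c)\geq c>0$ that the ratio exceeds $c^2/(2c^2)=\tfrac{1}{2}$. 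You instead clear the denominator, expand $2g_n(c)^2-g_n(c^2+\sigma^2)$ exactly using the affine structure of $g_n$ (your two identities are correct), substitute the worst case $\sigma^2=\left(c-\tfrac{1}{\sqrt{n}}\right)^2$, and finish by a monotonicity analysis of the quadratic $f(c)=c^2-(2+\sqrt{n})c+1$ on $\left[\tfrac{1}{\sqrt{n}},1\right]$; I checked the reduction to $2(n-1)f(c)<1$ and the endpoint evaluation $f(1/\sqrt{n})=(1-2\sqrt{n})/n<0$, and both are right. The two arguments rest on the same two ingredients: $c\geq \tfrac{1}{\sqrt{n}}$ (forced by $\sigma\geq 0$) and $c\leq 1$ — the paper uses the latter implicitly through $g_n(c)\geq c$, which only holds for $c\leq 1$, while you invoke it explicitly to keep the interval left of the vertex of $f$. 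Neither proof is more general than the other, and the restriction $c\leq 1$ is genuinely needed (e.g.\ for $n=4$, $c=4$, $\sigma=c-\tfrac{1}{2}$ the claimed inequality fails), so your explicit flagging of where $c\leq 1$ enters is a small expository advantage. What the paper's comparison buys is brevity — no polynomial bookkeeping; what your expansion buys is a quantitative margin, since you end with $2(n-1)f(c)<0<1$, showing the inequality holds with substantial slack rather than being nearly saturated.
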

	\begin{proof}
		Since \(\sigma\geq 0\) we can assume \(c\geq \frac{1}{\sqrt{n}}\). which implies
		\begin{equation}\sigma^2\leq \left(c-\frac{1}{\sqrt{n}}\right)^2\leq \left(c-\frac{1}{\sqrt{n}}\right)\left(c+\frac{1}{\sqrt{n}}\right)=c^2-\frac{1}{n}.\end{equation}
		Hence we find
		\begin{eqnarray*}
		ng_n(c^2+\sigma^2)&\leq& (n-1)(2c^2-\frac{1}{n})+1=2nc^2+\frac{1}{n}-2c^2\\
		&\leq& 2nc^2-\frac{1}{n} <2nc^2.
		\end{eqnarray*}
		Since \(g_n(c)\geq c>0\) we conclude
		\begin{equation}\frac{g_n(c)^2}{g_n(c^2+\sigma^2)}>\frac{c^2}{2c^2}=\frac{1}{2}.\end{equation}
	\end{proof}
	\begin{lemma}
		Given \(c>0\) and \(\sigma\leq \frac{1}{\sqrt[4]{2}}c\) we have \begin{equation}s_n\left(\frac{c^2}{c^2+\sigma^2}\right)>\frac{n-1}{n}\frac{\sigma^2}{c^2}.\end{equation} 
	\end{lemma}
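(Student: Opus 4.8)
The plan is to pass to the single variable \(\alpha:=\frac{c^2}{c^2+\sigma^2}\in(0,1]\), the argument of \(s_n\) on the left-hand side, which is well defined since \(c>0\). A direct computation gives \(\frac{\sigma^2}{c^2}=\frac{1-\alpha}{\alpha}\), so the claim is equivalent to \(s_n(\alpha)>\frac{n-1}{n}\cdot\frac{1-\alpha}{\alpha}\). The hypothesis \(\sigma\leq \frac{1}{\sqrt[4]{2}}c\) squares to \(\frac{\sigma^2}{c^2}\leq\frac{1}{\sqrt2}\), which rearranges to \(\alpha\geq\frac{1}{1+1/\sqrt2}=2-\sqrt2\). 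Thus it suffices to prove the inequality for every \(\alpha\in[2-\sqrt2,1]\), a statement no longer referring to \(c\) and \(\sigma\) individually.

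First I would dispose of the degenerate case \(\sigma=0\): then \(\alpha=1\), the right-hand side vanishes, while \(s_n(1)=s(g_n(1))=s(1)=1>0\). For \(\sigma>0\) we have \(\frac{1-\alpha}{\alpha}>0\) together with \(\frac{n-1}{n}<1\), so \(\frac{n-1}{n}\cdot\frac{1-\alpha}{\alpha}<\frac{1-\alpha}{\alpha}\); it then suffices to establish the non-strict, \(n\)-independent bound
\begin{equation}
s(\alpha)\geq \frac{1-\alpha}{\alpha}, \qquad \alpha\in[2-\sqrt2,1].
\end{equation}
Indeed, by \eqref{eq:scaling_gn} we have \(g_n(\alpha)\geq\alpha\), and since \(s\) is non-decreasing (clear from \eqref{Eq:DefinitionOfS2}) this yields \(s_n(\alpha)=s(g_n(\alpha))\geq s(\alpha)\); combining with the displayed bound and the strict factor \(\frac{n-1}{n}<1\) gives the claim.

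It remains to verify the displayed inequality, and I would argue purely by monotonicity. For \(\alpha\geq 2-\sqrt2>\tfrac12\) we sit on the square-root branch of \eqref{Eq:DefinitionOfS2}, so \(s(\alpha)=\tfrac12\bigl(1+\sqrt{2\alpha-1}\bigr)\) is strictly increasing, while \(\frac{1-\alpha}{\alpha}=\tfrac1\alpha-1\) is strictly decreasing; hence the difference \(s(\alpha)-\frac{1-\alpha}{\alpha}\) is increasing on \([2-\sqrt2,1]\) and it is enough to check nonnegativity at the left endpoint. There one uses the exact identity \(2\alpha-1=3-2\sqrt2=(\sqrt2-1)^2\) to obtain \(s(2-\sqrt2)=\tfrac12\bigl(1+(\sqrt2-1)\bigr)=\tfrac{1}{\sqrt2}\), and likewise \(\frac{1-\alpha}{\alpha}=\frac{\sqrt2-1}{2-\sqrt2}=\tfrac{1}{\sqrt2}\), so the two sides agree and the difference is nonnegative throughout.

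The main obstacle is the sharpness of the estimate: the bound is tight exactly at \(\alpha=2-\sqrt2\), i.e.\ at \(\sigma=2^{-1/4}c\), where \(s(\alpha)\) and \(\frac{1-\alpha}{\alpha}\) coincide. This is precisely why the constant \(\sqrt[4]{2}\) enters the hypothesis, and why the strictness must be recovered from the factor \(\frac{n-1}{n}<1\) rather than from the \(n\)-independent estimate itself. The delicate points to get right are confirming that one stays on the square-root branch of \(s\) across the whole range \(\alpha\in[2-\sqrt2,1]\) and checking the endpoint identity \(3-2\sqrt2=(\sqrt2-1)^2\) exactly; once these are in place the monotonicity argument closes the proof.
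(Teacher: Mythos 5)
Your proof is correct, but it takes a genuinely different route from the paper's. The paper splits into two cases according to whether $\frac{n-1}{n}\frac{\sigma^2}{c^2}$ lies below or above $\frac{1}{2}$: the first case is settled by noting that $\sigma\le c$ forces $s_n\left(\frac{c^2}{c^2+\sigma^2}\right)\ge\frac{1}{2}$, and the second by converting the hypothesis into the polynomial inequality $\sigma^2(c^4-2\sigma^4)\ge 0$, rewriting it as $\frac{c^2}{c^2+\sigma^2}\ge 2\left(\frac{\sigma^2}{c^2}-\frac{1}{2}\right)^2+\frac{1}{2}$, and then inverting the square-root branch of $s$; strictness there comes from $\frac{\sigma^2}{c^2}>\frac{n-1}{n}\frac{\sigma^2}{c^2}$. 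You instead reduce everything to the single variable $\alpha=\frac{c^2}{c^2+\sigma^2}$, isolate the sharp, $n$-free inequality $s(\alpha)\ge\frac{1-\alpha}{\alpha}$ on $[2-\sqrt{2},1]$, prove it by monotonicity of the two sides plus an exact evaluation at the left endpoint (where both sides equal $\frac{1}{\sqrt{2}}$), and recover strictness from the factor $\frac{n-1}{n}<1$, with $\sigma=0$ treated separately --- a necessary precaution, since at $\alpha=1$ the factor argument alone gives nothing. Both arguments rest on the same ingredients, namely $g_n(x)\ge x$, the monotonicity of $s$, and the fact that $\frac{n-1}{n}<1$ is the only source of strictness, and neither is sharper than the other; what yours buys is transparency about the constant $\sqrt[4]{2}$, which is revealed as exactly the threshold at which the $n$-independent bound $s(\alpha)\ge\frac{1-\alpha}{\alpha}$ degenerates to an equality, whereas the paper's case split avoids identifying the tightness point and instead works directly with the quadratic $2\left(t-\frac{1}{2}\right)^2+\frac{1}{2}$ that inverts $s$ on its square-root branch.
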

	\begin{proof}
		Since \(c>0\) we find \(\sigma\leq c\) and hence \(g_n\left(\frac{c^2}{c^2+\sigma^2}\right)\geq \frac{c^2}{c^2+\sigma^2} \geq \frac{1}{2}\) which implies
		\begin{equation}\label{Eq:DomainCor3Eq1}
		s_n\left(\frac{c^2}{c^2+\sigma^2}\right)=\frac{1}{2}\left(1+\sqrt{2g_n\left(\frac{c^2}{c^2+\sigma^2}\right)-1}\right)\geq \frac{1}{2}.
		\end{equation}
			Assuming that \(\frac{n-1}{n}\frac{\sigma^2}{c^2}< \frac{1}{2}\) immediately leads to the conclusion by \eqref{Eq:DomainCor3Eq1}. So let us assume \(\frac{n-1}{n}\frac{\sigma^2}{c^2}\geq \frac{1}{2}\).
		 From \(\sigma\leq \frac{1}{\sqrt[4]{2}}c\) we obtain \(\sigma^2(c^4-2\sigma^4)\geq 0\). Since
		\begin{eqnarray*}
		\sigma^2(&c^4&-2\sigma^4)\\
		&=& c^6-2\sigma^4c^2-2\sigma^6+2\sigma^2c^4+2\sigma^4c^2-c^6-\sigma^2c^4\\
		&=& c^6-2\sigma^4(c^2+\sigma^2)+2\sigma^2c^2(c^2+\sigma^2)-c^4(c^2\sigma^2)
		\end{eqnarray*}
		and \(c^4(c^2+\sigma^2)>0\) we find
		\begin{equation}0\leq \frac{\sigma^2(c^4-2\sigma^4)}{c^4(c^2+\sigma^2)}=\frac{c^2}{c^2+\sigma^2}-2\frac{\sigma^4}{c^4} +2\frac{\sigma^2}{c^2}-1\end{equation}
		which leads to
		\begin{equation}\label{Eq:DomainCor3Eq2}
		\frac{c^2}{c^2+\sigma^2}\geq 2\frac{\sigma^4}{c^4} -2\frac{\sigma^2}{c^2}+1=2\left(\frac{\sigma^2}{c^2}-\frac{1}{2}\right)^2+\frac{1}{2}.
		\end{equation}
		From \(\frac{n-1}{n}\frac{\sigma^2}{c^2}\geq \frac{1}{2}\) we find \(\sigma>0\) and hence \(\frac{\sigma^2}{c^2}>\frac{n-1}{n}\frac{\sigma^2}{c^2}\geq \frac{1}{2}\). Moreover, we have \(g_n\left(\frac{c^2}{c^2+\sigma^2}\right)\geq\frac{c^2}{c^2+\sigma^2}\). Then \eqref{Eq:DomainCor3Eq2} implies
		\begin{equation}g_n\left(\frac{c^2}{c^2+\sigma^2}\right)>2\left(\frac{n-1}{n}\frac{\sigma^2}{c^2}-\frac{1}{2}\right)^2+\frac{1}{2}\end{equation} which is equivalent to
		\begin{equation}\frac{1}{2}\left(1+\sqrt{2g_n\left(\frac{c^2}{c^2+\sigma^2}\right)-1}\right)>\frac{n-1}{n}\frac{\sigma^2}{c^2}.\end{equation}
		By \eqref{Eq:DomainCor3Eq1} the claim follows.
	\end{proof}
	\section{Proof of  Main Results: Theorem~~\ref{Thm:AsymptoticCounterExamples}} \label{sec:ProofTheorem4}

	\begin{figure}[t] 
\includegraphics[width=.49\linewidth]{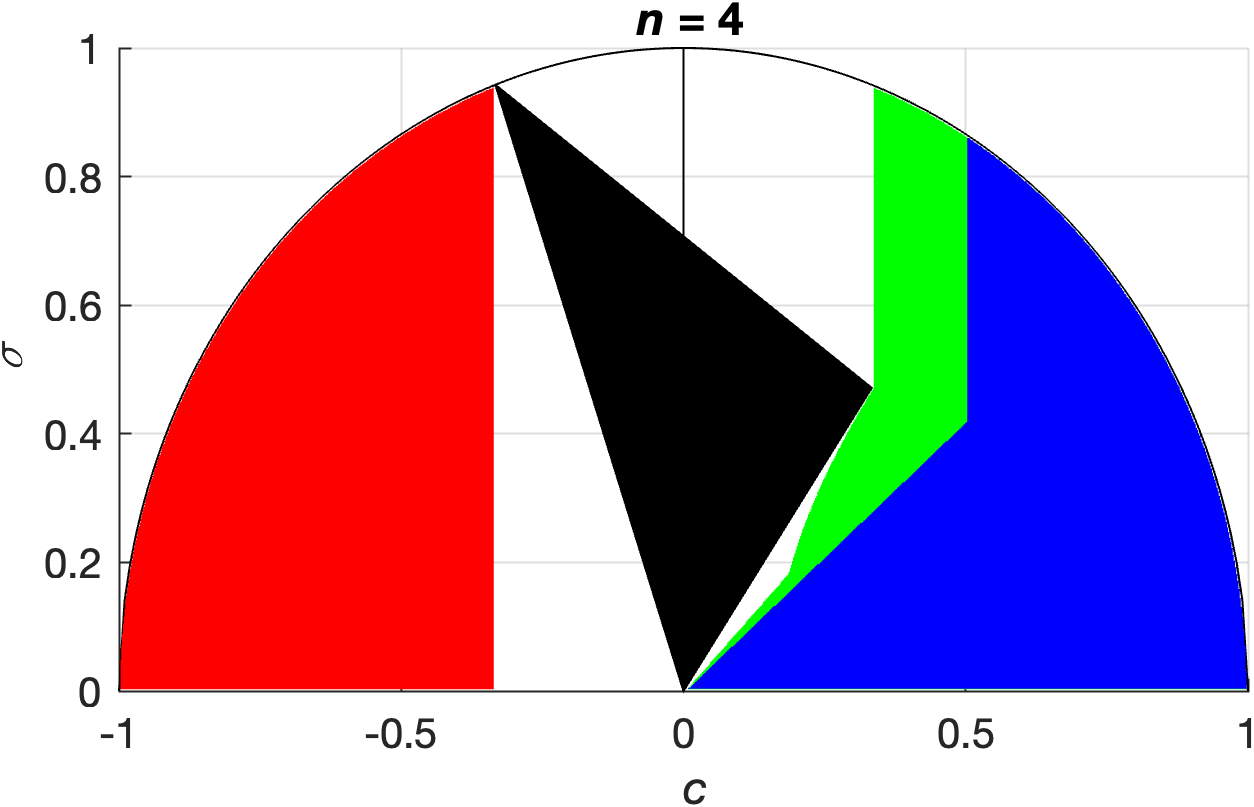}
\includegraphics[width=.49\linewidth]{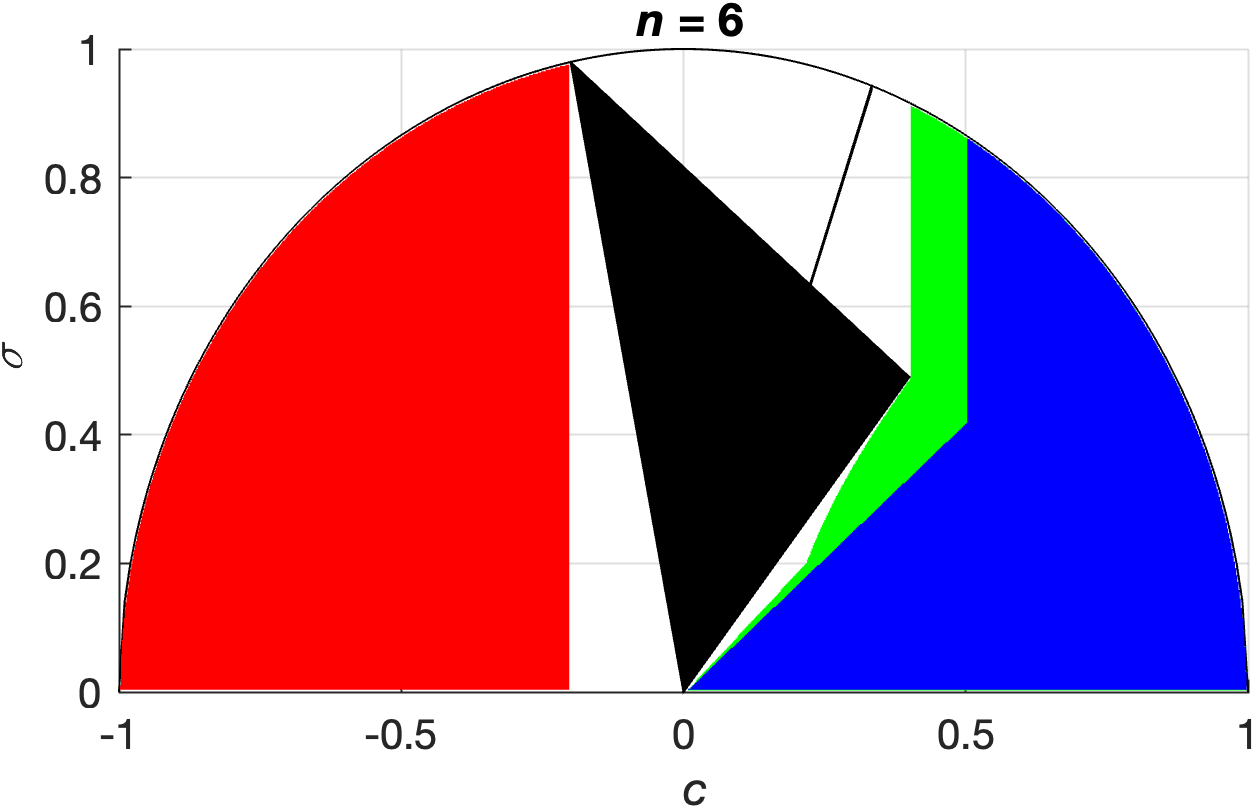}
\includegraphics[width=.49\linewidth]{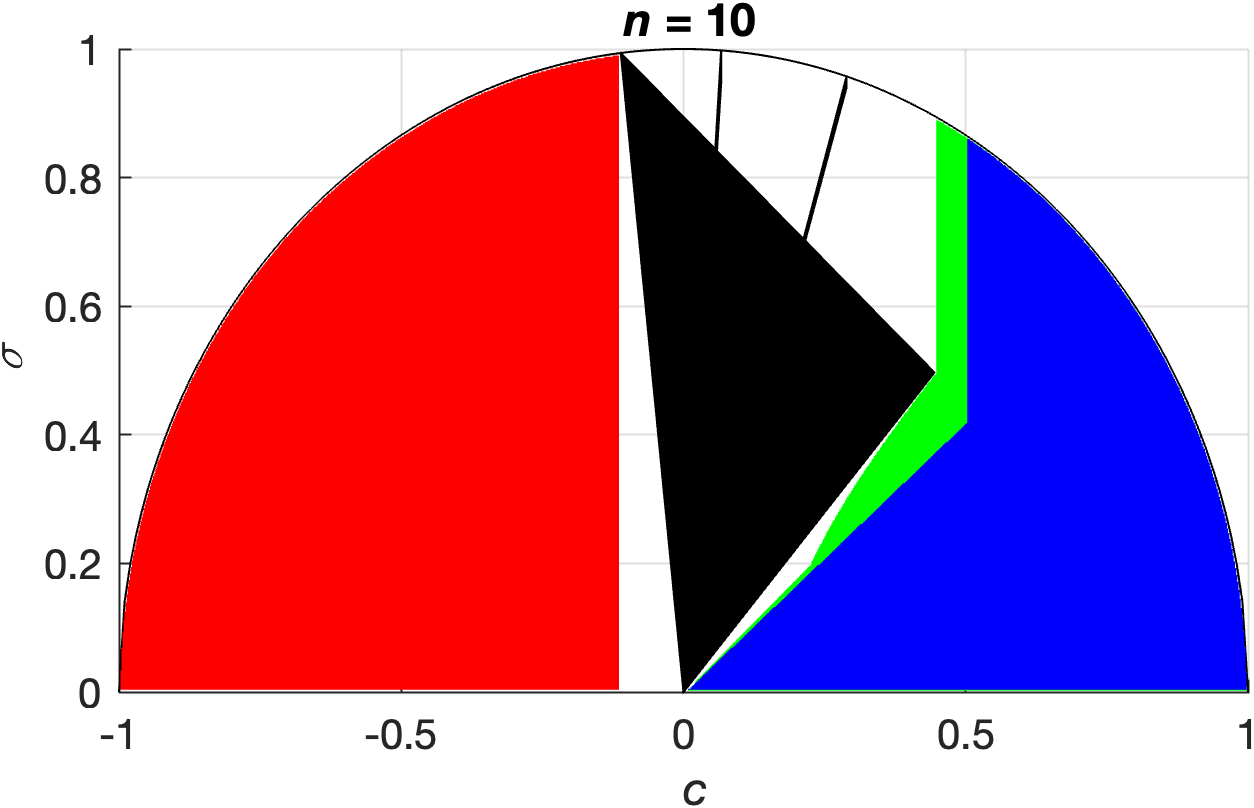}
\includegraphics[width=.49\linewidth]{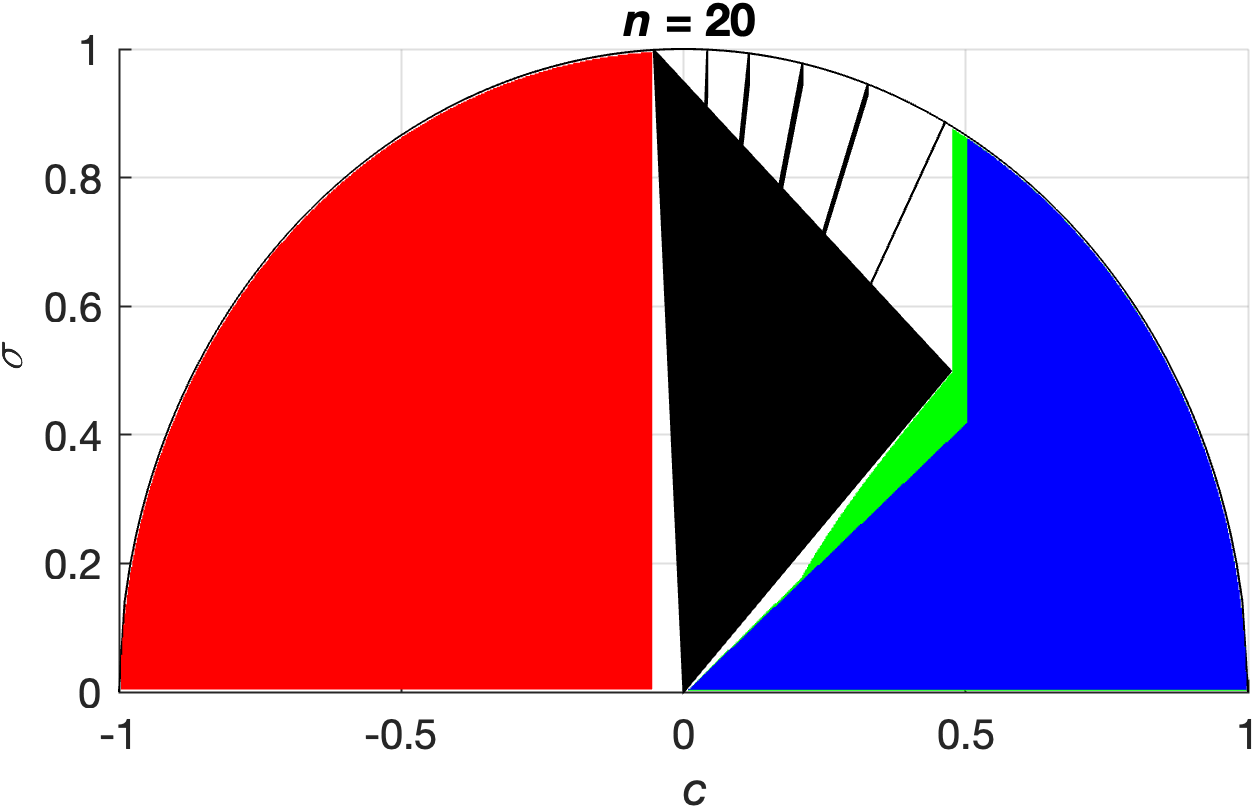}
\includegraphics[width=.49\linewidth]{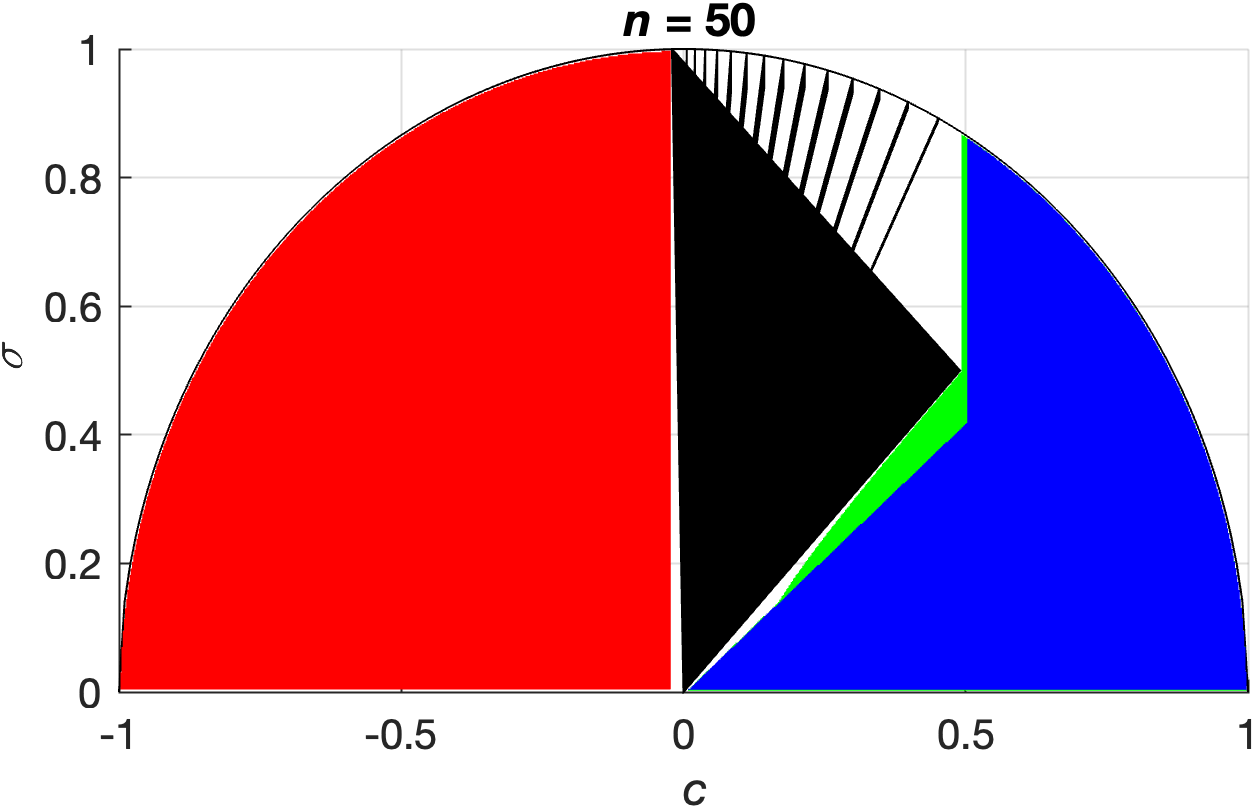}
\includegraphics[width=.49\linewidth]{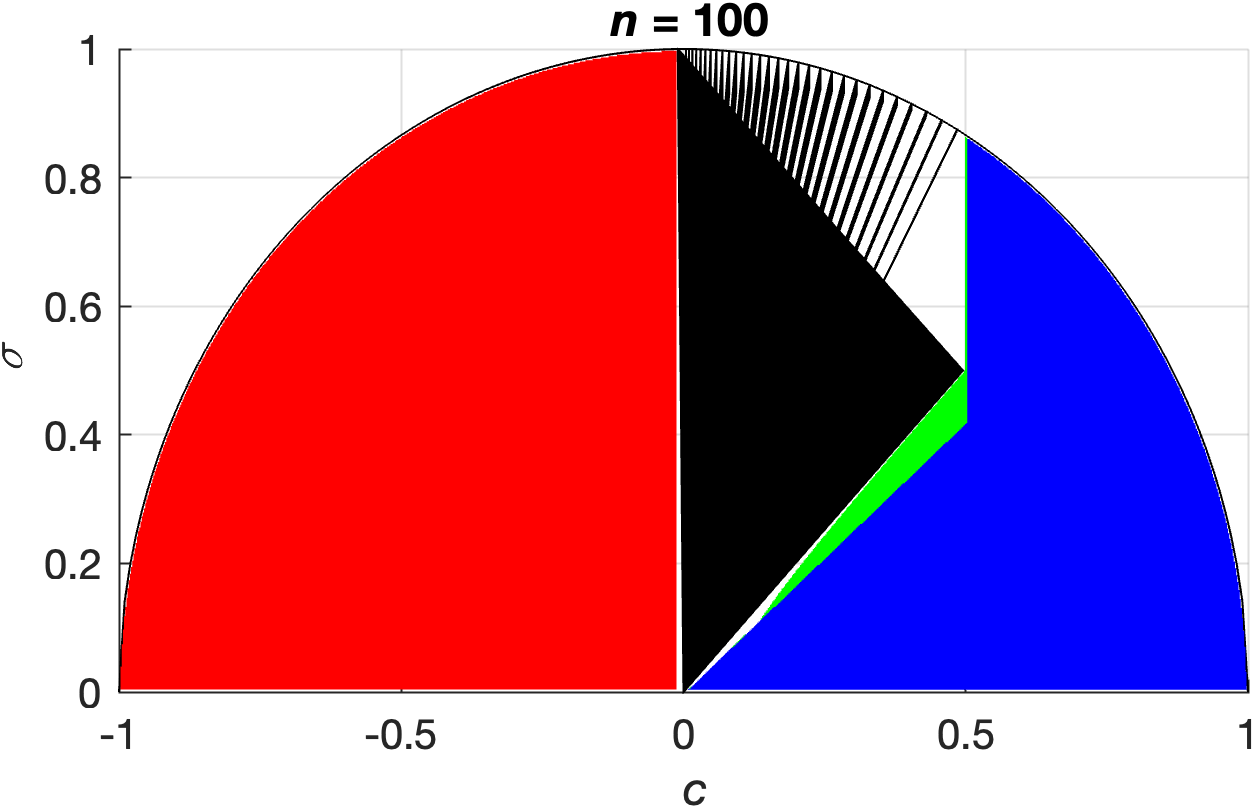}
\includegraphics[width=.49\linewidth]{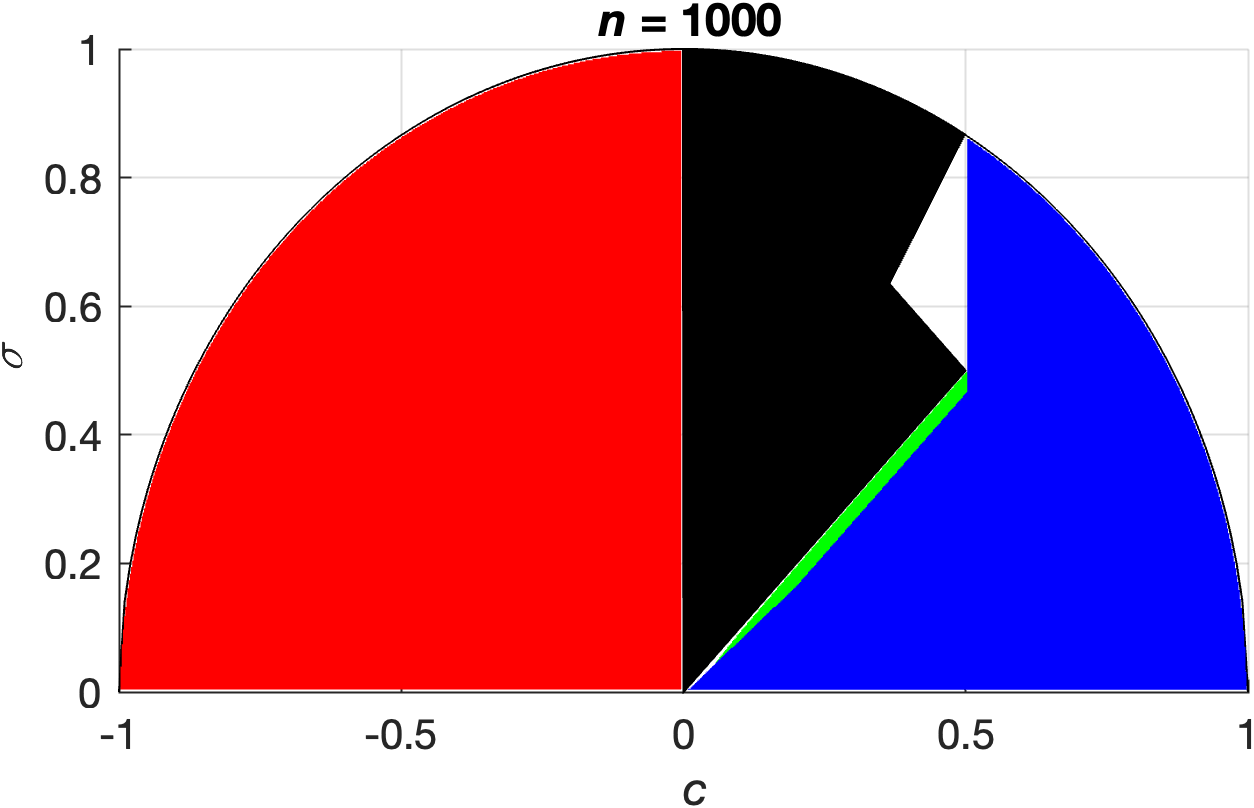}
\includegraphics[width=.49\linewidth]{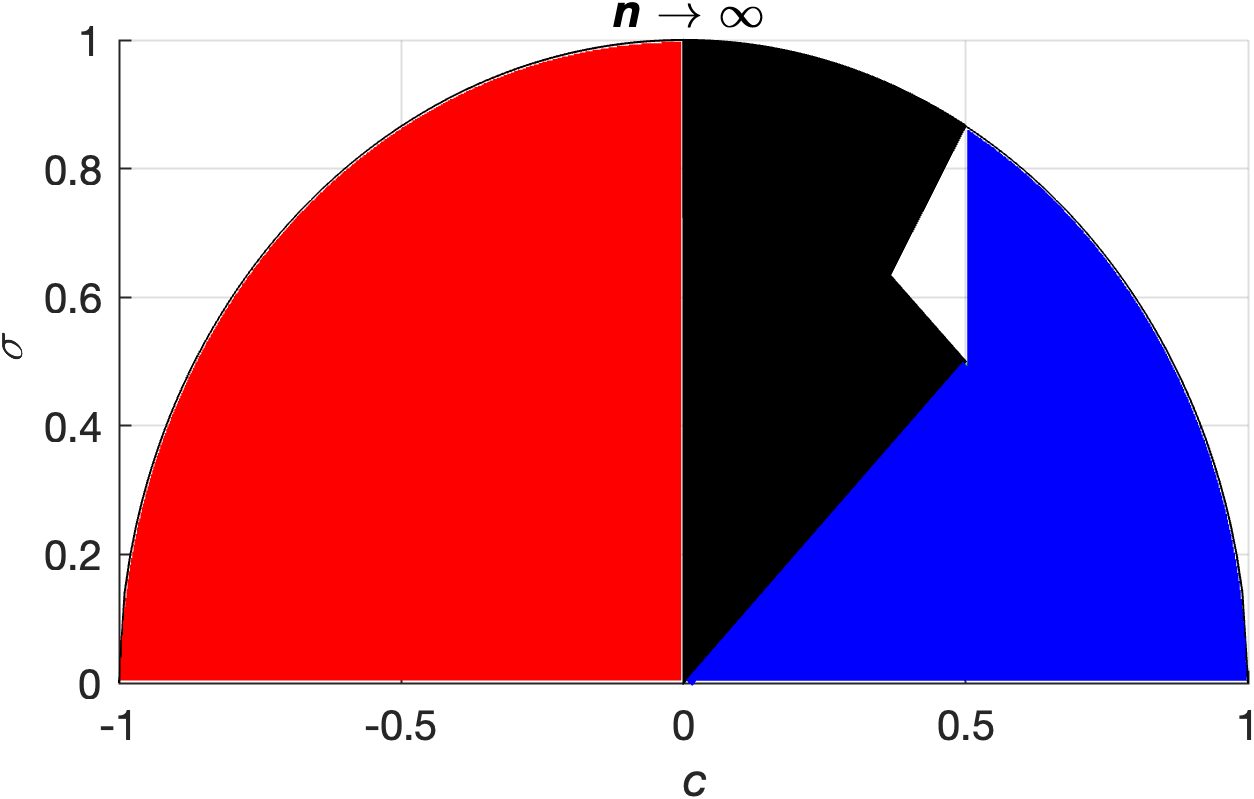}
\caption{The scaling behaviour of the domains from: Corollary~\ref{cor:smalles_mean_c} (red);   Theorem~\ref{Thm:w1wmaxDomainNLarge} (blue); Theorem~\ref{Thm:w1wmaxDomainGeneral} (green+blue); Lemma~\ref{Lem:CounterExampleB1Even} and Proposition~ \ref{Lem:PerturbationOfRankOneCorrelation} (black).
 Red: there are no $n \times n$ correlation matrices with characteristic $(c,\sigma)$ in the red domain. Blue+Green: any $n \times n$ correlation matrices with characteristic $(c,\sigma)$ in the green or blue domain have  $w_1=w_{\max}$. Black: for any $(c,\sigma)$ in the black domain there exists an $n \times n$ correlation matrix with   $w_1<w_{\max}$.} \label{Fig:technical_domains}
\end{figure}

	\begin{definition}
		Let \(C\) be an \(n\times n\) correlation matrix and let  \(\lambda_1,\ldots,\lambda_n\) be its eigenvalues with \(\lambda_1\geq \lambda_j\) for \(1\leq j\leq n\). We say that \(C\) satisfies \(w_1<w_{\text{max}}\) if for any orthonormal basis \(v_1,\ldots,v_n\) such that \(Cv_j=\lambda_jv_j\) and \(w_j:=\langle v_j,\delta_n\rangle^2\), \(1\leq j\leq n\), we have \(w_1<\max_{2\leq j\leq n}w_j\). 
	\end{definition} 
	
	In order to prove Theorem~\ref{Thm:AsymptoticCounterExamples} we write the domain \(A_2\) as the a union \(A_2=B_1\cup B_2\) of domains
	\begin{eqnarray*}
	B_1&=&\{(c,\sigma)\mid c<\sigma<1-c,\,\,c,\sigma>0\}\\
	B_2&=&\{(c,\sigma)\mid \sigma>\sqrt{3}c,\,\,c,\sigma>0,\,\,c^2+\sigma^2<1\}.
	\end{eqnarray*}
We then construct correlation matrices satisfying \(w_1<w_{\text{max}}\) with characteristic \((c,\sigma)\in B_1\) or \((c,\sigma)\in B_2\) separately in Theorem~\ref{Thm:LocalResultB1} and Theorem~\ref{Thm:LocalResultB2}. The final proof of Theorem~\ref{Thm:AsymptoticCounterExamples} can be found at the very end of this section. 
	
	For \(B_1\) we will use tensor products (see Lemma~\ref{Lem:TensorCorrelation} below) and embedding methods (see Lemma~\ref{Lem:EmbeddingMethod}) starting from a well known class of correlation matrices from Example~\ref{ex:sigma_0}. For \(B_2\) we start with the class of rank one correlation matrices as in Example~\ref{ex:rankOne} and find the desired examples using techniques from perturbation theory (see Proposition~\ref{Lem:PerturbationOfRankOneCorrelation}).  In both cases we will abusively use convexity arguments which we are going to explain as next. 
	\begin{lemma}[Convexity Argument]\label{Lem:ConvexityArgument}
		The space of \(n\times n\) correlation matrices is convex in \(\R^{n\times n}\). In particular, given two \(n\times n\) correlation matrices \(A=(a_{ij})\) and \(B=(b_{ij})\) we have that for any \(\mu\in[0,1]\) the matrix \(C(\mu)=(1-\mu)A+\mu B\) is a correlation matrix. If \(c_a\) and \(c_b\) denote the mean correlation of \(A\) and \(B\) respectively we have \(c(\mu)=(1-\mu)c_a+\mu c_b\) where \(c(\mu)\) is the mean correlation of \(C(\mu)\). 
	\end{lemma}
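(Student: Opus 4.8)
The plan is to verify directly that the three defining properties (i)--(iii) of a correlation matrix from Sec.~\ref{sec:genertal_corr_mat} are stable under convex combinations, and then to read off the behaviour of the mean correlation from its defining formula \eqref{eq:def_mean_c}. So I would fix two \(n\times n\) correlation matrices \(A\) and \(B\), take \(\mu\in[0,1]\), set \(C(\mu)=(1-\mu)A+\mu B\), and check each property in turn.

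Checking (i) and (ii) is immediate and requires no restriction on \(\mu\): a real linear combination of symmetric matrices is symmetric, so \(C(\mu)^T=(1-\mu)A^T+\mu B^T=C(\mu)\), and the diagonal entries satisfy \(C(\mu)_{ii}=(1-\mu)A_{ii}+\mu B_{ii}=(1-\mu)+\mu=1\) since \(A_{ii}=B_{ii}=1\). The one condition that genuinely uses \(\mu\in[0,1]\) is positive semi-definiteness (iii), and this is the only place where convexity really enters. For any \(x\in\R^n\) one has \(x^TC(\mu)x=(1-\mu)\,x^TAx+\mu\,x^TBx\); since the coefficients \(1-\mu\geq 0\) and \(\mu\geq 0\) are non-negative and each quadratic form \(x^TAx\), \(x^TBx\) is non-negative by (iii) for \(A\) and \(B\), the sum is non-negative. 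Hence \(C(\mu)\) is positive semi-definite and therefore a correlation matrix, which is exactly the convexity claim.

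For the last assertion I would simply invoke linearity of the sum in \eqref{eq:def_mean_c}: substituting \(C(\mu)_{ij}=(1-\mu)A_{ij}+\mu B_{ij}\) and splitting the sum over \(i>j\) gives
\begin{equation}
c(\mu)=\frac{2}{n(n-1)}\sum_{i>j}C(\mu)_{ij}=(1-\mu)\,c_a+\mu\,c_b.
\end{equation}
None of these steps presents a real obstacle; the statement is essentially a structural observation. The only point worth flagging is that the non-negativity of the coefficients \(1-\mu\) and \(\mu\) is precisely what keeps (iii) stable, so that the argument is confined to the segment \(\mu\in[0,1]\) and would fail for affine combinations leaving it. This linearity of \(c(\mu)\) in \(\mu\) is what makes the lemma a useful interpolation tool in the later constructions over the domains \(B_1\) and \(B_2\).
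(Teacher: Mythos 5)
Your proposal is correct and follows essentially the same route as the paper's own proof: verifying properties (i)--(iii) directly, with symmetry and the unit diagonal being linear identities, positive semi-definiteness following from the non-negativity of the coefficients \(1-\mu\) and \(\mu\) applied to the quadratic forms, and the linearity of \(c(\mu)\) read off from the defining sum. No gaps; the only cosmetic difference is that the paper sums over \(i\neq j\) rather than \(i>j\), which is equivalent by symmetry.
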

	\begin{proof}
		Fix \(\mu\in[0,1]\) and write \(C(\mu)=(c_{ij})\).
		Since the linear combination of two symmetric matrices is again symmetric we have that \(C(\mu)\) is symmetric. For any \(v\in\R^n\) we have \(\langle v,Av\rangle,\langle v,Bv\rangle\geq0\) and hence
		\begin{equation}\langle v,C(\mu)v\rangle=(1-\mu)\langle v,Av\rangle+\mu\langle v,Bv\rangle\geq 0\end{equation}
		which shows that \(C(\mu)\) is positive semi-definite. Furthermore, we have
		\begin{equation}c_{ii}=(1-\mu)a_{ii}+\mu b_{ii}=1-\mu+\mu=1.\end{equation}
		for any \(1\leq i\leq n\). It follows that \(C(\mu)\) is a correlation matrix. From 
		\begin{eqnarray*}
		n(n-1)c(\mu)&=&\sum_{i\neq j} c_{ij}=(1-\mu)\sum_{i\neq j} a_{ij}+\mu\sum_{i\neq j} b_{ij}\\
		&=&n(n-1)((1-\mu)c_a+\mu c_b)
		\end{eqnarray*} we obtain the second part of the claim.
	\end{proof}
	\begin{corollary}\label{Cor:ConvexityWithId}
		Let \(C\) be an \(n\times n\) correlation matrix with characteristic \((c,\sigma)\), eigenvalues \(\lambda_1\geq\ldots\geq \lambda_n\) and an orthonormal eigenbasis of respective eigenvectors \(v_1,\ldots,v_n\). For any \(\mu\in[0,1]\) we have that \(C(\mu):=(1-\mu)C+\mu\operatorname{Id}_n\) is correlation matrix with characteristic \begin{equation}(c(\mu),\sigma(\mu))=((1-\mu)c,(1-\mu)\sigma).\end{equation} Furthermore, for any \(1\leq j\leq n\) and \(\mu\in[0,1]\) we have  \(C(\mu)v_j=\lambda_j(\mu)v_j\) with \(\lambda_j(\mu)=(1-\mu)\lambda_j+\mu\). 
	\end{corollary}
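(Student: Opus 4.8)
The plan is to recognise Corollary~\ref{Cor:ConvexityWithId} as a direct specialisation of the convexity argument in Lemma~\ref{Lem:ConvexityArgument} supplemented by three short computations. First I would observe that \(\operatorname{Id}_n\) is itself a correlation matrix, satisfying (i)--(iii) with \(c(\operatorname{Id}_n)=\sigma(\operatorname{Id}_n)=0\), so that \(C(\mu)=(1-\mu)C+\mu\operatorname{Id}_n\) is a convex combination of two correlation matrices. Applying Lemma~\ref{Lem:ConvexityArgument} with \(A=C\) and \(B=\operatorname{Id}_n\) then immediately yields that \(C(\mu)\) is a correlation matrix for every \(\mu\in[0,1]\), and that its mean correlation is \(c(\mu)=(1-\mu)c+\mu\cdot 0=(1-\mu)c\), since the off-diagonal entries of \(\operatorname{Id}_n\) vanish.

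The only quantity not already furnished by Lemma~\ref{Lem:ConvexityArgument} is the standard deviation, so I would compute \(\sigma(\mu)\) next. Here I would use that the off-diagonal entries scale linearly, \((C(\mu))_{ij}=(1-\mu)C_{ij}\) for \(i\neq j\). By the definitions \eqref{eq:def_mean_c} and \eqref{eq:def_variance} together with \eqref{eq:circle}, the second moment of the coefficients therefore satisfies \(\frac{2}{n(n-1)}\sum_{i>j}(C(\mu))_{ij}^2=(1-\mu)^2(c^2+\sigma^2)\). Subtracting \(c(\mu)^2=(1-\mu)^2c^2\) gives \(\sigma(\mu)^2=(1-\mu)^2\sigma^2\), and taking the nonnegative square root, which is permissible since \(1-\mu\geq 0\), yields \(\sigma(\mu)=(1-\mu)\sigma\).

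Finally I would dispatch the spectral part, which is the quickest step. For each eigenvector \(v_j\) with \(Cv_j=\lambda_j v_j\), applying \(C(\mu)\) directly gives
\begin{equation}
C(\mu)v_j=(1-\mu)Cv_j+\mu v_j=\bigl((1-\mu)\lambda_j+\mu\bigr)v_j,
\end{equation}
so the prescribed orthonormal eigenbasis of \(C\) is simultaneously an eigenbasis of \(C(\mu)\) with the claimed eigenvalues \(\lambda_j(\mu)=(1-\mu)\lambda_j+\mu\). There is essentially no obstacle in this corollary: every assertion reduces either to Lemma~\ref{Lem:ConvexityArgument} or to a one-line linear computation. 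The only point deserving a moment of care is the variance identity, where one must use the full second moment \(c^2+\sigma^2\) rather than \(\sigma^2\) alone before subtracting the shifted mean. It is worth noting that the affine form \(\lambda_j(\mu)=(1-\mu)\lambda_j+\mu\) preserves the ordering of the eigenvalues, which is precisely what makes this corollary a convenient interpolation tool for the constructions in Sec.~\ref{sec:ProofTheorem4}.
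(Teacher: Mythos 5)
Your proposal is correct and follows essentially the same route as the paper: applying Lemma~\ref{Lem:ConvexityArgument} with \(B=\operatorname{Id}_n\), using the scaling \((C(\mu))_{ij}=(1-\mu)C_{ij}\) of the off-diagonal entries to obtain the characteristic \(((1-\mu)c,(1-\mu)\sigma)\), and verifying the eigenvalue shift by direct computation. Your explicit second-moment calculation for \(\sigma(\mu)\) merely spells out what the paper compresses into the phrase ``scaling behaviour of mean value and standard deviation,'' so there is nothing to add.
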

	\begin{proof}
		 Putting \(B=\text{Id}_n\) in Lemma~\ref{Lem:ConvexityArgument} we find that \(C(\mu)\) is a correlation matrix. Writing \(C(\mu)=(c_{ij}(\mu))\) and \(C=(c_{ij})\) we find \(c_{ij}(\mu)=(1-\mu)c_{ij}\) and hence  \((c(\mu),\sigma(\mu))=((1-\mu)c,(1-\mu)\sigma)\) by the scaling behaviour of mean value and standard deviation. Furthermore, we observe that any eigenvector \(v\) of \(C\) for some eigenvalue \(\lambda\) is an eigenvector of \(C(\mu)\) for the eigenvalue \((1-\mu)\lambda+\mu\).
	\end{proof}
\begin{remark}\label{Rmk:ConvexityWithId}
	One easily checks that if a correlation \(C\) satisfies \(w_1<w_{\text{max}}\) the same holds true for \(C(\mu)\), \(0\leq\mu < 1\), defined in Corollary~\ref{Cor:ConvexityWithId}.
\end{remark}

 As we mentioned we will use tensor products to construct correlation matrices with specific features. We first recall the basic notations and facts from multilinear algebra.
	Let \((V,\langle\cdot,\cdot\rangle_V)\) and \((W,\langle\cdot,\cdot\rangle_W)\) two finite dimensional Hilbert spaces. Then an inner product \(\langle\cdot,\cdot\rangle_{V\otimes W}\) on the tensor product \(V\otimes W\) is defined  by \(\langle v\otimes w,v'\otimes w'\rangle_{V\otimes W}:=\langle v,v'\rangle_{V}\langle w,w'\rangle_{V}\) for \(v,v'\in V\), \(w,w'\in W\). Given to linear maps \(A\colon V\to V\), \(B\colon W\to W\), we denote by \(A\otimes B\colon V\otimes W\to V\otimes W\) the linear map defined by \(A\otimes B(v\otimes w):=Av\otimes Bw\) for \(v\in V\) and \(w\in W\). Given eigenvectors \(v\) of \(A\) and \(w\) of \(B\) with respective eigenvalues \(\alpha\) and \(\beta\) we have that \(v\otimes w\) is an eigenvector of \(A\otimes B\) for the eigenvalue \(\alpha\beta\).
	Let \(e_1,\ldots,e_n\) be the standard basis of \(\R^n\), \(f_1,\ldots,f_m\) be the standard basis of \(\R^m\) and \(d_1,\ldots,d_{nm}\) the standard basis of \(\R^{nm}\).  We identify \(\R^{nm}\) with the tensor product \(\R^n\otimes\R^m\) by putting \(e_{i}\otimes f_{j}\mapsto d_{m(i-1)+j}\) for \(1\leq i\leq n\) and \(1\leq j\leq m\). By fixing a basis on a finite dimensional vector space we have a one to one correspondence between linear maps and matrices. Hence given an \(n\times n\)-matrix \(A=(a_{ij})\) and an \(m\times m\)-matrix \(B=(b_{ij})\) we can identify the tensor product of \(A\) and \(B\) (or more precisely the tensor product of the corresponding linear maps) with an \(nm\times nm\)-matrix  \(C=A\otimes B\) where for \(C=(c_{ij})\) we have
	\(c_{m(i-1)+k,m(j-1)+l}=a_{ij}b_{kl}\). Considering the tensor product of correlation matrices under this notation we have the following. 
	\begin{lemma}\label{Lem:TensorCorrelation}
		Let \(C_1\) be an \(n\times n\)-matrix and \(C_2\) be an \(m\times m\)-matrix. If \(C_1\) and \(C_2\) are correlation matrices then the \(mn\times mn\)-matrix \(C_3:=C_1\otimes C_2\) is a  correlation matrix. Furthermore, if \((c_j,\sigma_j)\), \(j=1,2,3\), is the characteristic of \(C_j\) we have the identities
		\begin{eqnarray*}
		g_{nm}(c_3)&=&g_n(c_1)g_m(c_2),\\ g_{nm}(c_3^2+\sigma_3^2)&=&g_n(c_1^2+\sigma_1^2)g_m(c_2^2+\sigma_2^2).
		\end{eqnarray*}
	\end{lemma}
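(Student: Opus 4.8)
The plan is to verify the three defining properties (i)--(iii) of a correlation matrix for $C_3$ directly from the entrywise description $c_{m(i-1)+k,\,m(j-1)+l}=a_{ij}b_{kl}$, and then to reduce both characteristic identities to Lemma~\ref{lem:sum_and_gn} by exploiting the fact that sums of entries over a tensor product factorise into a product of sums.

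First I would check that $C_3$ is a correlation matrix. Symmetry follows from symmetry of $C_1$ and $C_2$, since $(C_1\otimes C_2)^T=C_1^T\otimes C_2^T=C_1\otimes C_2$. For the diagonal, the entry with $i=j$ and $k=l$ equals $a_{ii}b_{kk}=1\cdot 1=1$. Positive semi-definiteness I would read off from the eigenvalue rule recalled just above the lemma: running over orthonormal eigenbases of $C_1$ and $C_2$, the tensors $v\otimes w$ form an orthonormal eigenbasis of $C_1\otimes C_2$ whose eigenvalues are the products $\alpha\beta$ of nonnegative eigenvalues, hence all nonnegative. Thus $C_3$ satisfies (i)--(iii) and is a correlation matrix.

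For the two $g$-identities, the central tool is Lemma~\ref{lem:sum_and_gn}, which expresses $g_N(c(X))$ for any symmetric $X$ with unit diagonal as $N^{-2}\sum_{p,q}X_{pq}$. Since the map sending $(i,k,j,l)$ to $(p,q)$ with $p=m(i-1)+k$ and $q=m(j-1)+l$ is a bijection of index sets, any entrywise sum over $C_3$ splits as a product of a sum over $C_1$ and a sum over $C_2$. Applying this to the entries themselves gives $\sum_{p,q}(C_3)_{pq}=\bigl(\sum_{i,j}a_{ij}\bigr)\bigl(\sum_{k,l}b_{kl}\bigr)$, and dividing by $(nm)^2$ together with Lemma~\ref{lem:sum_and_gn} applied to each factor yields $g_{nm}(c_3)=g_n(c_1)g_m(c_2)$. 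For the second identity I would apply Lemma~\ref{lem:sum_and_gn} in its squared form, that is, to the matrix with entries $(C_j)_{pq}^2$, which again has unit diagonal because $(C_j)_{pp}=1$ and whose mean correlation is $c_j^2+\sigma_j^2$ by \eqref{eq:circle}. Using $(a_{ij}b_{kl})^2=a_{ij}^2b_{kl}^2$, the sum of squared entries of $C_3$ factorises in exactly the same way, and dividing by $(nm)^2$ gives $g_{nm}(c_3^2+\sigma_3^2)=g_n(c_1^2+\sigma_1^2)g_m(c_2^2+\sigma_2^2)$.

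The computation is short, so there is no serious obstacle; the only point requiring care is the index bookkeeping that establishes the bijection $p=m(i-1)+k$, $q=m(j-1)+l$ and hence the factorisation of the double sums. The one conceptual step worth isolating is that both identities are instances of the same fact applied to two different unit-diagonal symmetric matrices, namely $C_j$ and its entrywise square; the latter is admissible precisely because the diagonal of a correlation matrix equals $1$ and $1^2=1$, and the product structure $c_{m(i-1)+k,\,m(j-1)+l}=a_{ij}b_{kl}$ of the tensor product is preserved under squaring entries.
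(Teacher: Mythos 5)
Your proof is correct, but for the two characteristic identities you take a genuinely different (and more elementary) route than the paper. The paper establishes the correlation-matrix properties exactly as you do (symmetry, unit diagonal entrywise, positive semi-definiteness via the product eigenvalues $\alpha_i\beta_j\geq 0$), but it then derives the identities from tensor-algebra structure: for the first it uses $\delta_{nm}=\delta_n\otimes\delta_m$ and multiplicativity of the tensor inner product to get $\langle\delta_{nm},C_3\delta_{nm}\rangle=\langle\delta_n,C_1\delta_n\rangle\langle\delta_m,C_2\delta_m\rangle$, and for the second it uses $C_3^2=C_1^2\otimes C_2^2$ together with $\operatorname{Tr}(A\otimes B)=\operatorname{Tr}(A)\operatorname{Tr}(B)$, invoking Lemma~\ref{lem:Characteristic_Lemma} to translate $\langle\delta_n,C\delta_n\rangle$ and $\operatorname{Tr}(C^2)$ into $g_n$-values. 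You instead apply Lemma~\ref{lem:sum_and_gn} directly to $C_3$ and to its entrywise square, using only the factorisation of entrywise sums over the product index set; the observation that the entrywise square of a correlation matrix is again symmetric with unit diagonal and has mean correlation $c^2+\sigma^2$ by \eqref{eq:circle} is exactly right and makes both identities instances of one fact. What your version buys is uniformity and economy: no quadratic forms, no matrix squares, no trace identities, just index bookkeeping. What the paper's version buys is that the tensor identities it exercises ($\delta_{nm}=\delta_n\otimes\delta_m$, eigenvectors and weights multiplying under $\otimes$) are precisely the tools reused immediately afterwards, e.g.\ in the discussion following the lemma and in Lemma~\ref{Lem:CEx1Correlation}, where the weights of $C_3$ are needed as pairwise products of the weights of the factors; your entrywise argument does not by itself yield that eigenvector-level information.
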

	\begin{proof}
		Given \(x,x'\in\R^n\) and \(y,y'\in\R^{m}\) it follows from the symmetry of \(C_1\) and \(C_2\) that 
		\begin{eqnarray*}
		\langle x\otimes y, C_1\otimes C_2 x'\otimes y' \rangle&=&\langle x\otimes y, C_1x'\otimes C_2  y' \rangle\\
		&=&\langle x, C_1x'\rangle \langle y, C_2y'\rangle\\
		&=&\langle C_1x, x'\rangle \langle C_2y, y'\rangle\\
		&=&\langle C_1\otimes C_2 x\otimes y,  x'\otimes y' \rangle.
		\end{eqnarray*}
		By linearity it follows that \(C_3\) is symmetric. Writing \(C_j=(c^{(j)})_{lk}\), \(j=1,2,3\), we have
		\begin{equation}c^{(3)}_{m(i-1)+k,m(i-1)+k}=c^{(1)}_{ii}c^{(2)}_{kk}=1.\end{equation} 
		Let \(\alpha_1,\ldots,\geq \alpha_n\geq0\) and \(\beta_1,\ldots,\beta_m\geq 0\) be the eigenvalues of \(C_1\) and \(C_2\) respectively. Given  respective eigenbases \(a_1,\ldots,a_n\) and \(b_1,\ldots,b_m\) we find that \(\{a_i\otimes b_j\}_{1\leq i\leq n,1\leq j\leq m}\) is a basis for \(\R^{nm}\) with \(C_1\otimes C_2 a_i\otimes b_j=C_1a_i\otimes C_2b_j=\alpha_i\beta_ja_i\otimes b_j\) for all \(1\leq i\leq n,1\leq j\leq m\). Hence any eigenvalue of \(C_3\) can be written as \(\alpha_i\beta_j\geq 0\) for some \( 1\leq i\leq n,1\leq j\leq m\) which shows that \(C_3\) is positive semi-definite. We have shown that \(C_3\) is a correlation matrix.
		We have \(\delta_{nm}=\delta_n\otimes\delta_n\) and hence
		\begin{equation}\langle \delta_{nm}, C_3 \delta_{nm}\rangle=\langle \delta_{n}, C_1 \delta_{n}\rangle\langle \delta_{m}, C_2 \delta_{m}\rangle.\end{equation}
		Since \(\langle \delta_{n}, C_1 \delta_{n}\rangle=g_n(c_1)\) and \(\langle \delta_{m}, C_2 \delta_{m}\rangle=g_n(c_2)\)   (see Lemma~\ref{lem:Characteristic_Lemma}) we conclude \(g_{nm}(c_3)=g_n(c_1)g_m(c_2)\). We have \(C_3^2=C_1^2\otimes C_2^2\) and \(\text{Tr}(A\otimes B)=\text{Tr}(A)\text{Tr}( B)\) for arbitrary matrices \(A\) and \(B\) . By Lemma~\ref{lem:Characteristic_Lemma} we find
		\begin{eqnarray*}
		(nm)^2g_{nm}(c_3^2+\sigma_3^2)&=&\text{Tr}(C^2_3)
		=\text{Tr}(C_1^2)\text{Tr}(C_2^2)\\
		&=&n^2g_n(c_1^2+\sigma_1^2)m^2g_m(c_2^2+\sigma_2^2)
		\end{eqnarray*}
	\end{proof}
	Note that given two correlation matrices \(C_1\) and \(C_2\) of size \(n\) and \(m\) respectively we have that the eigenvalues of the \(nm\times nm\) correlation matrix \(C_3=C_1\otimes C_2\) are given by the pairwise products of the eigenvalues of \(C_1\) and \(C_2\). In particular, given an orthonormal eigenbasis \(\{v^{(j)}_k\}\) for \(C_j\), \(j=1,2\), we have that \(\{v^{(1)}_k\otimes v^{(2)}_l \mid 1\leq k\leq n,\,\,1\leq l\leq m\}\) is an orthonormal eigenbasis for \(C_3\). Since  \(\delta_{nm}=\delta_n\otimes \delta_m\) and hence \(\langle\delta_{nm},a\otimes b\rangle=\langle\delta_{n},a\rangle\langle\delta_{m},b\rangle\) for \(a\in \R^n\) and \(b\in\R^m\) we have that the weights for \(C_3\) with respect to that eigenbasis are given by the pairwise products of the weights of \(C_1\) and \(C_2\). Using those techniques  we can construct correlation matrices for large \(n\) with certain properties from well known examples (see Example~\ref{ex:sigma_0}). Therefore, let us consider the following class of correlation matrices.

	\begin{lemma}\label{Lem:CEx1Correlation}
		Let \(n\geq 2\) be even and \(0<\varepsilon<1\). Consider a symmetric \(n\times n\)-matrix \(C\) defined by
		\begin{equation}C=\begin{pmatrix}
		1\!\!\!1 & -\varepsilon 1\!\!\!1\\
		-\varepsilon 1\!\!\!1 & 1\!\!\!1 \\
		\end{pmatrix}\end{equation}
		where \(1\!\!\!1\) is the \(n/2\times n/2\)-matrix with all entries equal to one. Then \(C\) is a correlation matrix with \(w_1<w_{\operatorname{max}}\) such that its characteristic \((c,\sigma)\) satisfies \(g_n(c)=(1-\varepsilon)/2\), \(g_n(\sigma^2+c^2)=(1+\varepsilon^2)/2\) and \(\sigma=\sqrt{\frac{n-2}{n}}(1-c)\). 
	\end{lemma}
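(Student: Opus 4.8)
The plan is to exploit the Kronecker (tensor) product structure of $C$. Writing $m = n/2$ and letting $u = (1,\ldots,1)^{T} \in \R^m$, the all-ones block equals $uu^{T}$, so that
\begin{equation}
C = A \otimes (uu^{T}), \qquad A = \begin{pmatrix} 1 & -\varepsilon \\ -\varepsilon & 1 \end{pmatrix}.
\end{equation}
Both factors are correlation matrices: $A$ is a $2\times 2$ correlation matrix with $c(A) = -\varepsilon$ and $\sigma(A) = 0$, its eigenvalues $1\pm\varepsilon$ being positive since $0 < \varepsilon < 1$, and $uu^{T}$ is the $m\times m$ all-ones matrix, i.e.\ the matrix $C_0$ of Example~\ref{ex:sigma_0} with $c_0 = 1$, hence $c(uu^{T}) = 1$ and $\sigma(uu^{T}) = 0$. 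Therefore Lemma~\ref{Lem:TensorCorrelation} immediately gives that $C$ is a correlation matrix. Moreover, since $g_2(-\varepsilon) = (1-\varepsilon)/2$, $g_m(1) = 1$ and $g_2(\varepsilon^2) = (1+\varepsilon^2)/2$, the same lemma yields
\begin{equation}
g_n(c) = g_2(c(A))\, g_m(c(uu^{T})) = \frac{1-\varepsilon}{2},
\end{equation}
\begin{equation}
g_n(c^2+\sigma^2) = g_2(c(A)^2+\sigma(A)^2)\, g_m(c(uu^{T})^2+\sigma(uu^{T})^2) = \frac{1+\varepsilon^2}{2},
\end{equation}
which are the first two asserted identities.

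Next I would establish $w_1 < w_{\max}$ from the spectral data. The eigenpairs of $A$ are $(1-\varepsilon, \delta_2)$ and $(1+\varepsilon, (1,-1)^{T}/\sqrt 2)$, while $uu^{T}$ has eigenpairs $(m, \delta_m)$ together with the eigenvalue $0$ on the eigenspace $u^{\perp}$. As recalled after Lemma~\ref{Lem:TensorCorrelation}, the eigenvalues of $C$ are the pairwise products, and the corresponding weights with respect to $\delta_n = \delta_2\otimes\delta_m$ are the pairwise products of the weights of the two factors. Since the weight of $\delta_2$ equals $1$, the weight of $(1,-1)^{T}/\sqrt 2$ equals $0$, the weight of $\delta_m$ equals $1$ and the weights of vectors in $u^{\perp}$ vanish, the only non-vanishing weight is carried by $\delta_2 \otimes \delta_m = \delta_n$, which is an eigenvector for the eigenvalue $m(1-\varepsilon)$. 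The largest eigenvalue is $\lambda_1 = m(1+\varepsilon)$, belonging to the eigenvector $(1,-1)^{T}/\sqrt 2 \otimes \delta_m$, which is orthogonal to $\delta_n$; hence $w_1 = 0$. Crucially, because $0 < \varepsilon < 1$ both $m(1+\varepsilon)$ and $m(1-\varepsilon)$ are simple eigenvalues, strictly separated from each other and from the $(n-2)$-fold eigenvalue $0$. Thus for every orthonormal eigenbasis the top eigenvector is $\pm(1,-1)^{T}/\sqrt 2 \otimes \delta_m$ and the second one is $\pm\delta_n$, so that $w_1 = 0 < 1 = \max_{2\leq j\leq n}w_j$, which is exactly the required statement in the sense of the preceding Definition.

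Finally I would derive $\sigma = \sqrt{(n-2)/n}\,(1-c)$ by inverting $g_n$. Writing $P := g_n(c) = (1-\varepsilon)/2$ and $Q := g_n(c^2+\sigma^2) = (1+\varepsilon^2)/2$ and using $g_n(x) = ((n-1)x+1)/n$, one has $c = (nP-1)/(n-1)$ and $c^2+\sigma^2 = (nQ-1)/(n-1)$, so that
\begin{equation}
\sigma^2 = \frac{(n-1)(nQ-1) - (nP-1)^2}{(n-1)^2} = \frac{n^2(Q-P^2) - n(Q-2P+1)}{(n-1)^2}.
\end{equation}
The two key simplifications are $Q - P^2 = (1+\varepsilon)^2/4$ and $Q - 2P + 1 = (1+\varepsilon)^2/2$, which make the numerator factor as $\tfrac{1}{4}(1+\varepsilon)^2 n(n-2)$, giving $\sigma^2 = (1+\varepsilon)^2 n(n-2)/[4(n-1)^2]$. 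On the other hand $1-c = n(1-P)/(n-1) = n(1+\varepsilon)/[2(n-1)]$, whence $(1-c)^2 = n^2(1+\varepsilon)^2/[4(n-1)^2]$, and dividing yields $\sigma^2/(1-c)^2 = (n-2)/n$. Since $c<1$ forces $1-c>0$, taking square roots proves the last identity. I expect the main obstacle to be conceptual rather than computational: one must verify that $w_1 = 0$ and $w_{\max} = 1$ persist for \emph{every} admissible eigenbasis, which is guaranteed precisely because the two relevant eigenvalues $m(1\pm\varepsilon)$ are simple for $0<\varepsilon<1$; the characteristic identities and the algebraic relation for $\sigma$ are then routine bookkeeping once the tensor structure is in place.
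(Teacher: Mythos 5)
Your proposal is correct and follows essentially the same route as the paper's proof: the tensor decomposition \(C = A \otimes 1\!\!\!1\), Lemma~\ref{Lem:TensorCorrelation} together with Example~\ref{ex:sigma_0} to obtain the correlation-matrix property and the spectral data, the observation that all weight sits on \(\pm\delta_n\) (an eigenvector for \(n(1-\varepsilon)/2\), not for \(\lambda_1\)) so that \(w_1=0<1=w_{\max}\), and the same algebra yielding \(\sigma=\sqrt{(n-2)/n}\,(1-c)\). The only differences are cosmetic and welcome: you compute \(g_n(c)\) and \(g_n(c^2+\sigma^2)\) via the multiplicativity identities of Lemma~\ref{Lem:TensorCorrelation} rather than via Lemma~\ref{lem:Characteristic_Lemma}, and you make explicit the simplicity of the two nonzero eigenvalues, which the paper uses only implicitly to guarantee that \(w_1<w_{\max}\) holds for every choice of orthonormal eigenbasis.
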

	\begin{proof}
		We have \(C=C'\otimes 1\!\!\!1\) with 
		\begin{equation}C'=\begin{pmatrix}
		1&-\varepsilon\\
		-\varepsilon & 1
		\end{pmatrix}.\end{equation}
		By Example~\ref{ex:sigma_0} and Lemma~\ref{Lem:TensorCorrelation} we have that \(C\) is a correlation matrix. Let \(\lambda_1\geq\ldots\geq \lambda_n\) denote the eigenvalues of \(C\) and let \(w_1,\ldots,w_n\), \(w_j:=\langle v_j,\delta_n\rangle^2\) for \(1\leq j\leq n\), be the weights with respect to some orthonormal basis \(v_1,\ldots,v_n\) of respective eigenvectors that is, \(Cv_j=\lambda_j\), \(1\leq j\leq n\). Then it follows from Example~\ref{ex:sigma_0} and the considerations above that \(\lambda_1=n(1+\varepsilon)/2\), \(\lambda_2=n(1-\varepsilon)/2\), \(\lambda_j=0\) for \(j\geq 3\). Furthermore, we have \(v_1=\pm (1,-1)/\sqrt{2}\otimes \delta_{n/2}=\pm (1,\ldots,1,-1,\ldots,-1)/\sqrt{n}\), \(v_2=\pm \delta_2\otimes \delta_{n/2}=\pm\delta_n\) and \(v_j\perp \delta_n\) for \(j\geq 3\). Hence for the weights we obtain \(w_j=0\) for \(j\neq 2\) and \(w_2=1\). This proves \(w_1<w_2=w_{\text{max}}\). From Lemma~\ref{lem:Characteristic_Lemma} we obtain in addition that \(g_n(c)=\lambda_2/n=(1-\varepsilon)/2\) and \(g_n(c^2+\sigma^2)=(\lambda_1/n)^2+(\lambda_2/n)^2=(1+\varepsilon^2)/2\). Hence we have \(c=\frac{n(1-\varepsilon)-2}{2(n-1)}\) and \(c^2+\sigma^2=\frac{n(1+\varepsilon^2)-2}{2(n-1)}\). It follows
\begin{eqnarray*}
		4(n-1)^2\sigma^2&=&2(n-1)n(1+\varepsilon^2)-4(n-1)-((1-\varepsilon)n-2)^2\\
		&=&n(n-2)(1+\varepsilon)^2.
\end{eqnarray*}
		Since \(\varepsilon=1-2g_n(c)\) we find by the definition of \(g_n\) that \(1+\varepsilon=2(n-1)(1-c)/n\). In conclusion we have \begin{equation}\sigma^2=\frac{n-2}{n}(1-c)^2\end{equation} which finishes the proof of the statement.
\end{proof}
Using the convexity argument we can construct a lot of examples for correlation matrices  with even dimension satisfying \(w_1<w_{\text{max}}\) from Lemma~\ref{Lem:CEx1Correlation}.

\begin{lemma}\label{Lem:CounterExampleB1Even}
		Let \(n\geq 4\) be even and \(c,\sigma\) two real numbers such that
\begin{equation} \label{eq:triangle_domain}
\max\left\{-\sqrt{n(n-2)}c,\sqrt{\frac{n}{n-2}}c\right\}< \sigma \leq \sqrt{\frac{n-2}{n}}(1-c)
\end{equation} holds.
Then there exists an \(n\times n\) correlation matrix with characteristic \((c,\sigma)\) 
		 such that \(w_1<w_{\operatorname{max}}\). Furthermore, one can choose such a correlation matrix with no eigenvalue equal to one provided that \(c>0\) holds.  
	\end{lemma}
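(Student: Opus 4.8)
The plan is to realise every admissible characteristic \((c,\sigma)\) as a convex combination of one of the explicit matrices from Lemma~\ref{Lem:CEx1Correlation} with the identity matrix. The matrices of Lemma~\ref{Lem:CEx1Correlation}, parametrised by \(\varepsilon\in(0,1)\), all satisfy \(w_1<w_{\max}\) and have characteristic lying on the curve \(\sigma=\sqrt{(n-2)/n}\,(1-c)\), which is exactly the upper boundary appearing in \eqref{eq:triangle_domain}. First I would translate the relation \(g_n(c)=(1-\varepsilon)/2\) into the range of mean correlation swept out as \(\varepsilon\) runs through \((0,1)\); since \(g_n\) is increasing this yields the open interval \(c_0\in\left(\tfrac{1}{1-n},\tfrac{n-2}{2(n-1)}\right)\), so the base matrices cover precisely the relative interior of the top edge of the region.

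Next I would apply Corollary~\ref{Cor:ConvexityWithId}: for a base matrix \(C\) with characteristic \((c_0,\sigma_0)\) the matrix \(C(\mu)=(1-\mu)C+\mu\operatorname{Id}_n\) is again a correlation matrix with characteristic \((t c_0,t\sigma_0)\), where \(t:=1-\mu\in(0,1]\), and by Remark~\ref{Rmk:ConvexityWithId} it still satisfies \(w_1<w_{\max}\) whenever \(t>0\). Thus the two-parameter family \((\varepsilon,t)\mapsto(t c_0,t\sigma_0)\) produces correlation matrices with \(w_1<w_{\max}\), and the geometric heart of the argument is to check that it sweeps out exactly the region \eqref{eq:triangle_domain}. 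Here I would observe that the two lower bounding lines \(\sigma=\sqrt{n/(n-2)}\,c\) and \(\sigma=-\sqrt{n(n-2)}\,c\) both pass through the origin and meet the upper curve precisely at the endpoints \(c=\tfrac{n-2}{2(n-1)}\) and \(c=\tfrac{1}{1-n}\) of the interval above. Hence the region is the half-open triangle with vertices at the origin and these two endpoints, and every point of it lies on a unique ray from the origin that meets the open top edge in a single point \((c_0,\sigma_0)\); the upper-bound inequality \(\sigma\le\sqrt{(n-2)/n}\,(1-c)\) then guarantees that the radial factor \(t=\|(c,\sigma)\|/\|(c_0,\sigma_0)\|\) lies in \((0,1]\), so the required \((\varepsilon,t)\) exist.

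For the final assertion I would compute the spectrum of \(C(\mu)\) explicitly. From Lemma~\ref{Lem:CEx1Correlation} the base eigenvalues are \(n(1+\varepsilon)/2\), \(n(1-\varepsilon)/2\) together with \(n-2\) zeros, so by Corollary~\ref{Cor:ConvexityWithId} the eigenvalues of \(C(\mu)\) are \(t\,n(1\pm\varepsilon)/2+(1-t)\) and \(1-t\) with multiplicity \(n-2\). Because \(t>0\), such an eigenvalue equals one only if the corresponding base eigenvalue equals one; the only candidate is \(n(1-\varepsilon)/2=1\), that is \(\varepsilon=1-2/n\), which by \(g_n(c_0)=(1-\varepsilon)/2\) forces \(c_0=0\). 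When the target satisfies \(c>0\) one has \(c_0=c/t>0\), hence \(\varepsilon<1-2/n\), so this value is avoided, while the eigenvalues \(1-t\) are never one since \(t>0\). Therefore no eigenvalue equals one, as claimed.

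The main obstacle I anticipate is the bookkeeping in the covering argument of the second paragraph: verifying that the endpoints of the \(\varepsilon\)-interval correspond exactly to the intersections of the two lower lines with the upper curve, and that the strict versus non-strict inequalities in \eqref{eq:triangle_domain} match the open/closed structure of the swept set, with the top edge attained at \(t=1\) and the lower edges and origin excluded because \(\varepsilon\) stays in the open interval and \(t\) stays positive. The remaining steps are essentially direct applications of the already-established convexity lemma (Lemma~\ref{Lem:ConvexityArgument}), Corollary~\ref{Cor:ConvexityWithId}, and Remark~\ref{Rmk:ConvexityWithId}.
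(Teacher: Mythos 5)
Your proposal is correct and follows essentially the same route as the paper's proof: both start from the matrices of Lemma~\ref{Lem:CEx1Correlation} realising the open top edge \(\sigma=\sqrt{(n-2)/n}\,(1-c)\) with \(0<g_n(c)<\tfrac12\), then sweep the rest of the triangle by mixing with \(\operatorname{Id}_n\) along rays from the origin via Corollary~\ref{Cor:ConvexityWithId} and Remark~\ref{Rmk:ConvexityWithId}, and handle the ``no eigenvalue equal to one'' claim by noting that \(c>0\) forces \(\varepsilon<1-2/n\). The only cosmetic difference is that you treat the boundary case \(t=1\) and the interior case uniformly through the radial parameter, whereas the paper separates them into two steps.
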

The triangle defined by \eqref{eq:triangle_domain} is contained back coloured area in Fig.~\ref{Fig:technical_domains}.
	\begin{proof} [Proof of Lemma \ref{Lem:CounterExampleB1Even}]
		Let us first assume that \(\sigma=\sqrt{\frac{n-2}{n}}(1-c)\) holds. 
		In that case we just need to show that \(\varepsilon:=1-2g_n(c)\) satisfies \(\varepsilon\in(0,1)\). Because then the existence of an \(n\times n\) correlation matrix with  characteristic \((c,\sigma)\) and \(w_1<w_{\text{max}}\) follows immediately from Lemma~\ref{Lem:CEx1Correlation}. From the assumptions on \(c\) and \(\sigma\) we find on the one hand
		\begin{equation}\sqrt{\frac{n}{n-2}}c<\sqrt{\frac{n-2}{n}}(1-c)\end{equation}
		which leads to \(nc<(n-2)(1-c)\) and hence \(2(n-1)c+2<n\). Dividing by \(2n\) shows \(g_n(c)<\frac{1}{2}\) which leads to \(\varepsilon>0\).  On the other hand we have
		\begin{equation}-\sqrt{n(n-2)}c<\sqrt{\frac{n-2}{n}}(1-c)\end{equation}
		which leads to \(-ng_n(c)=-(n-1)c-1<0\) and hence to \(g_n(c)>0\). It follows that \(\varepsilon<1\) is valid. Furthermore, for \(c>0\) we find \(\varepsilon<1-\frac{2}{n} \). In that case the eigenvalues \(\lambda_1\geq\ldots\geq \lambda_n\) of the correlation matrix in Lemma~\ref{Lem:CEx1Correlation} satisfy \(\lambda_1> \lambda_2=n(1-\varepsilon)/2>1\) and \(\lambda_j=0<1\) for \(3\leq j\leq n\).  Now let us assume that
		\begin{equation}\label{Eq:TriangleCondition}
		\max\left\{-\sqrt{n(n-2)}c,\sqrt{\frac{n}{n-2}}c\right\}< \sigma < \sqrt{\frac{n-2}{n}}(1-c)
		\end{equation}
		is satisfied. We observe that the condition \eqref{Eq:TriangleCondition} defines an open triangle \(\Delta\) in the \(c,\sigma\)-plane with one vertex at the origin and its  opposite edge \(S\) defined by (see also the considerations above)
		\begin{eqnarray}
		\sigma=\sqrt{\frac{n-2}{n}}(1-c)\,\, \text{ and } 0\leq g_n(c)\leq \frac{1}{2}.
		\end{eqnarray}
		Given a point \((c,\sigma)\in \Delta\) we find that the line through the origin and the point \((c,\sigma)\) intersects \(S\) in a point \((c',\sigma')\). Since \(\Delta\) is open it turns out that \(0<g_n(c')<\frac{1}{2}\) is valid. Then (as shown before) there exists an \(n\times n\) correlation matrix \(C'\) with characteristic \((c',\sigma')\) and \(w_1<w_{\text{max}}\). Putting \(C(\mu)=(1-\mu)C'+\mu\text{Id}\) we find by Corollary~\ref{Cor:ConvexityWithId} and Remark~\ref{Rmk:ConvexityWithId} that there exists a \(\mu\in(0,1)\) such that \(C(\mu)\) has characteristic \((c,\sigma)\) and \(w_1<w_{\text{max}}\). Furthermore, assuming \(c>0\)  leads to \(c'>0\). Hence we can choose \(C'\) such that no eigenvalue is equal to one. Then it follows from Corollary~\ref{Cor:ConvexityWithId} that no eigenvalue of \(C(\mu)\), \(0<\mu<1\), is equal to one which proves the second part of the statement.
	\end{proof}
In order to construct correlation matrices with odd dimension with similar properties as the correlation matrices in Lemma~\ref{Lem:CounterExampleB1Even} we need the following embedding method in combination with Lemma~\ref{Lem:CharacteristicForEmbedding}.
	\begin{lemma}\label{Lem:EmbeddingMethod}
		Let \(C'\) be an \(n\times n\) correlation matrix with characteristic \((c',\sigma')\). Then 
		\begin{equation}C:=\begin{pmatrix}
		C' & 0\\
		0 & 1
		\end{pmatrix}\end{equation}
		is an \((n+1)\times (n+1)\) correlation matrix with characteristic \((c,\sigma)\) such that \(c=\frac{n-1}{n+1}c'\) and \(c^2+\sigma^2=\frac{n-1}{n+1}(c'^2+\sigma'^2)\). Furthermore, if \(C'\) satisfies \(w_1<w_{\operatorname{max}}\) the same holds for \(C\) provided that no eigenvlaue of \(C'\) is equal to one.
	\end{lemma}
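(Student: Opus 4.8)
The plan is to verify the three assertions in turn: that $C$ is a correlation matrix, that the stated identities for the characteristic hold, and finally the eigenvector statement, which is the only non-routine part. That $C$ is a correlation matrix follows directly from the block structure: symmetry and unit diagonal are immediate since $C'$ is symmetric with unit diagonal and the new corner entry equals $1$, while for positive semi-definiteness I would write an arbitrary vector as $x=(y,t)$ with $y\in\R^n$, $t\in\R$ and observe $x^TCx=y^TC'y+t^2\geq 0$ because $C'$ is positive semi-definite.

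For the characteristic I would apply Lemma~\ref{lem:sum_and_gn} twice. Since all off-diagonal blocks of $C$ vanish, $\sum_{i,j}C_{ij}=\sum_{i,j}C'_{ij}+1$ and $\sum_{i,j}C_{ij}^2=\sum_{i,j}(C'_{ij})^2+1$. By Lemma~\ref{lem:sum_and_gn} (applied once to $C',C$ and once to their entrywise squares, exactly as in the proof of Lemma~\ref{lem:Characteristic_Lemma}) these read $(n+1)^2 g_{n+1}(c)=n^2 g_n(c')+1$ and $(n+1)^2 g_{n+1}(c^2+\sigma^2)=n^2 g_n(c'^2+\sigma'^2)+1$. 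The point is then the elementary identity: whenever $(n+1)^2 g_{n+1}(y)=n^2 g_n(x)+1$, expanding $g_m(t)=((m-1)t+1)/m$ gives $n(n+1)y=n(n-1)x$, hence $y=\tfrac{n-1}{n+1}x$. Applying this with $(x,y)=(c',c)$ and with $(x,y)=(c'^2+\sigma'^2,\,c^2+\sigma^2)$ yields both claimed relations at once.

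The substantive step is the weight statement. I would first record the spectral picture: the eigenvalues of $C$ are exactly those of $C'$ together with one extra eigenvalue $1$, with eigenvectors $(v',0)$ for each eigenvector $v'$ of $C'$ and $e_{n+1}=(0,\dots,0,1)$ for the extra $1$. Here the hypothesis that no eigenvalue of $C'$ equals $1$ enters: since $\sum_j\lambda'_j=n$ forces $\lambda'_1\geq 1$, the assumption upgrades this to $\lambda'_1>1$ (equality would force $C'=\operatorname{Id}_n$), so the top eigenvalue of $C$ is $\lambda_1=\lambda'_1>1$ with eigenspace $E'_{\lambda'_1}\times\{0\}$, and the eigenvalue $1$ of $C$ is simple with eigenspace $\R e_{n+1}$. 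Next I would record how weights transform: for a unit vector $v'\in\R^n$ one computes $\langle (v',0),\delta_{n+1}\rangle=\sqrt{n/(n+1)}\,\langle v',\delta_n\rangle$, so the $C$-weight of $(v',0)$ equals $\tfrac{n}{n+1}$ times the $C'$-weight of $v'$, while the weight of $e_{n+1}$ equals $\tfrac{1}{n+1}$. With this, fix any orthonormal eigenbasis $u_1,\dots,u_{n+1}$ of $C$ ordered by decreasing eigenvalue; because eigenvalue $1$ is simple, exactly one $u_k=\pm e_{n+1}$, and $k\neq 1$ since $\lambda_1>1$, while the remaining vectors are $u_j=(v'_{(j)},0)$. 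Deleting $u_k$ produces an orthonormal eigenbasis $\{v'_{(j)}\}_{j\neq k}$ of $C'$ ordered by decreasing eigenvalue whose first element is $u_1=(v'_{(1)},0)$, so the hypothesis $w_1<w_{\max}$ for $C'$ gives $\langle v'_{(1)},\delta_n\rangle^2<\max_{j\neq k,\,j\neq 1}\langle v'_{(j)},\delta_n\rangle^2$. Scaling by $\tfrac{n}{n+1}$ and comparing with the weight $\tfrac{1}{n+1}$ of $u_k$, the first weight of $C$ satisfies $W_1=\tfrac{n}{n+1}\langle v'_{(1)},\delta_n\rangle^2<\tfrac{n}{n+1}\max_{j\neq k,1}\langle v'_{(j)},\delta_n\rangle^2\leq\max_{2\leq j\leq n+1}W_j$, and since the eigenbasis was arbitrary, $C$ satisfies $w_1<w_{\max}$.

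The main obstacle, and the reason for the hypothesis, is controlling the quantifier ``for every orthonormal eigenbasis'' in the presence of eigenvalue degeneracy. If some $\lambda'_i=1$, the eigenvalue $1$ of $C$ would cease to be simple and $e_{n+1}$ could be mixed with genuine $C'$-eigenvectors, so neither the clean form $u_1=(v'_{(1)},0)$ nor the reduction to a $C'$-eigenbasis would survive. Excluding eigenvalue $1$ removes exactly this coupling and makes the reduction to the statement for $C'$ legitimate.
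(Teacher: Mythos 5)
Your proof is correct and follows essentially the same route as the paper's: block structure gives the correlation-matrix property and the characteristic identities (the paper computes the off-diagonal sums directly, you route through Lemma~\ref{lem:sum_and_gn}, which is equivalent), and the weight statement is proved exactly as in the paper by noting that the hypothesis makes the eigenvalue $1$ of $C$ simple with eigenspace $\R e_{n+1}$, so every eigenbasis of $C$ projects to an eigenbasis of $C'$ with weights rescaled by $\tfrac{n}{n+1}$. The only cosmetic difference is that you obtain $\lambda_1'>1$ from the trace plus the no-eigenvalue-one hypothesis, while the paper first deduces strict dominance $\lambda_1'>\lambda_j'$ from the $w_1<w_{\max}$ assumption; both are valid.
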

	\begin{proof}
		It is obvious that \(C\) is a correlation matrix. Write \(C=(c_{ij})\) and \(C'=(c'_{ij})\). Then \((n+1)nc=\sum_{i\neq j}c_{ij}=\sum_{i\neq j}c'_{ij}=n(n-1)c'\) and \((n+1)n(c^2+\sigma^2)=\sum_{i\neq j}c^2_{ij}=\sum_{i\neq j}c'^2_{ij}=n(n-1)(c'^2+\sigma'^2)\). Now let \(\lambda_1,\ldots , \lambda_{n}\) be the eigenvalues of \(C'\) with \(\lambda_{1}\geq\lambda_{j} \), \(1\leq j\leq n\). By the assumptions on \(C'\) we have \(\lambda_1>\lambda_j\), \(2\leq j\leq n\), and hence by the properties of correlation matrices \(\lambda_1>1\). Putting \(\lambda_{n+1}:=1\) we have by the structure of \(C\)  that \(\lambda_1,\ldots,\lambda_{n+1}\) are the eigenvalues of \(C\) with \(\lambda_1>\lambda_j\), \(2\leq j\leq n+1\). Now let \(v_1,\ldots,v_{n+1}\) be a respective orthonormal eigenbasis and denote by \(w_j=\langle v_j,\delta_{n+1}\rangle^2\), \(1\leq j\leq n+1\) the corresponding weights. By the structure of \(C\) and the assumptions that \(\lambda_j\neq 1\) for \(1\leq j\leq n\) we find  \(v_{n+1}=\pm(0,\ldots,0,1)\) and that the \((n+1)\)-th entry of \(v_j\) is zero for \(1\leq j\leq n\). We denote by \(\tilde{v}_j\in\R^n\) the projection of \(v_j\) onto its first \(n\) components and put \(\tilde{w}_j=\langle \tilde{v}_j,\delta_{n}\rangle^2\) for \(1\leq j\leq n\). Then \(\tilde{v}_1,\ldots,\tilde{v}_n\) is an orthonormal eigenbasis for \(C'\) with respect to its eigenvalues \(\lambda_1,\ldots,\lambda_n\) and hence \(\tilde{w}_1<\max_{1\leq j\leq n}\tilde{w}_j\). Since \(\tilde{w}_j=\frac{n+1}{n}w_j\) we obtain
		\begin{equation}w_1<\max_{1\leq j\leq n}w_j\leq \max_{1\leq j\leq n+1}w_j=:w_{\text{max}}\end{equation}
			which proves the statement.
	\end{proof}
	\begin{lemma}\label{Lem:CharacteristicForEmbedding}
		Given \(n\geq 2\), \(c,c',\sigma,\sigma'\geq 0\) such that \(c=\frac{n-1}{n+1}c'\) and \(c^2+\sigma^2=\frac{n-1}{n+1}(c'^2+\sigma'^2)\) is satisfied. If \(\sigma>0\) we have
		\begin{equation}c'=\frac{n+1}{n-1}c\,\,\,\text{ and }\,\,\,\sigma'=\sigma\sqrt{\frac{n+1}{n-1}}\sqrt{1-\frac{2}{n-1}\frac{c^2}{\sigma^2}}.\end{equation}
	\end{lemma}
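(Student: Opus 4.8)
The plan is purely algebraic: both identities should follow by inverting the two defining relations for the characteristic of the embedded matrix, exactly as they arise in Lemma~\ref{Lem:EmbeddingMethod}.

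First I would treat the relation $c=\frac{n-1}{n+1}c'$ as a linear equation in $c'$ and solve it directly, obtaining $c'=\frac{n+1}{n-1}c$. This settles the first claim and, more importantly, yields the explicit expression $c'^2=\left(\frac{n+1}{n-1}\right)^2c^2$, which I will feed into the second relation.

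Next I would rearrange the second hypothesis $c^2+\sigma^2=\frac{n-1}{n+1}(c'^2+\sigma'^2)$ into $c'^2+\sigma'^2=\frac{n+1}{n-1}(c^2+\sigma^2)$ and isolate $\sigma'^2=\frac{n+1}{n-1}(c^2+\sigma^2)-c'^2$. Substituting the expression for $c'^2$ and factoring out $\frac{n+1}{n-1}$ reduces the bracket to $\sigma^2+c^2\left(1-\frac{n+1}{n-1}\right)$. The one simplification worth spelling out is the identity $1-\frac{n+1}{n-1}=-\frac{2}{n-1}$, which converts this into $\frac{n+1}{n-1}\left(\sigma^2-\frac{2}{n-1}c^2\right)$. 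Since $\sigma>0$, I may factor $\sigma^2$ out of the parenthesis and pass to the nonnegative square root, legitimate because $\sigma'\geq 0$ is part of the hypotheses, arriving at the stated formula $\sigma'=\sigma\sqrt{\frac{n+1}{n-1}}\sqrt{1-\frac{2}{n-1}\frac{c^2}{\sigma^2}}$.

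I do not anticipate any genuine obstacle: the lemma is a bookkeeping inversion of the forward relations and requires no structural input about correlation matrices. The only two points demanding a moment's care are the hypothesis $\sigma>0$, which is precisely what permits dividing by $\sigma^2$ so as to express the answer in the prescribed form, and the sign of the square root, which is pinned down by the standing assumption $\sigma'\geq 0$.
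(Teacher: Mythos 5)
Your proposal is correct and follows essentially the same route as the paper: solve the first relation for \(c'\), substitute into the inverted second relation, simplify via \(1-\frac{n+1}{n-1}=-\frac{2}{n-1}\), factor out \(\sigma^2\) (using \(\sigma>0\)), and take the nonnegative square root. The only detail the paper makes slightly more explicit is that the identity itself guarantees the expression under the square root is non-negative (since it equals \(\frac{n-1}{n+1}\sigma'^2\geq 0\)), a point your argument also covers implicitly.
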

	\begin{proof}
	 Since \(n\geq 2\) we immediately observe \(c'=\frac{n+1}{n-1}c\) and \(c'^2+\sigma'^2=\frac{n+1}{n-1}(c^2+\sigma^2)\). Hence we obtain with \(\sigma>0\) that
	 \begin{eqnarray*}
	 \frac{n-1}{n+1}\sigma'^2&=&c^2+\sigma^2 - \frac{n+1}{n-1}c^2\\
	 &=&\sigma^2-\frac{2}{n-1}c^2
	 =\sigma^2\left(1-\frac{2}{n-1}\frac{c^2}{\sigma^2}\right)
	 \end{eqnarray*}
	 which ensures in addition that the right-hand side is non-negative. Then the statement follows from taking the square root on both sides.
	\end{proof}
	Now we are ready to prove the statement of Theorem~\ref{Thm:AsymptoticCounterExamples} in a local formulation for the domain \(B_1\).
	\begin{theorem}\label{Thm:LocalResultB1}
		For any point \(c_0,\sigma_0> 0\) with \(c_0<\sigma_0<1-c_0\) there is  an open neighborhood \(U\) around \((c_0,\sigma_0)\) and \(n_0\in\N\) such that for all
		\((c,\sigma)\in U\) and any \(n\geq n_0\) there exists an \(n\times n\) correlation matrix \(C\) with characteristic \((c,\sigma)\) and \(C\) satisfies \(w_1<w_{\operatorname{max}}\).
	\end{theorem}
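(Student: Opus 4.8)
The plan is to cover the open domain \(B_1=\{(c,\sigma)\mid c<\sigma<1-c,\ c,\sigma>0\}\) by exploiting the fact that the triangle domain from Lemma~\ref{Lem:CounterExampleB1Even} exhausts \(B_1\) as \(n\to\infty\). Writing out the defining inequalities of that triangle for even \(n\) with \(c>0\), namely \(\sqrt{\frac{n}{n-2}}\,c<\sigma\leq\sqrt{\frac{n-2}{n}}(1-c)\), one sees that both prefactors tend to \(1\), so the triangle converges to the region \(c<\sigma\leq 1-c\), i.e.\ to the closure of \(B_1\). Since \((c_0,\sigma_0)\) satisfies the strict inequalities \(c_0<\sigma_0<1-c_0\) and \(c_0>0\), I would first fix an open neighborhood \(U\) of \((c_0,\sigma_0)\) with compact closure \(\overline{U}\subset B_1\), so that there is a definite gap separating \(\overline{U}\) from the boundary of \(B_1\) and from \(\{\sigma=0\}\).

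For even \(n\) the argument is then immediate: by the uniform convergence of the two prefactors on the compact set \(\overline{U}\), there is an \(n_1\) so that for every even \(n\geq n_1\) and every \((c,\sigma)\in\overline{U}\) the triangle inequalities hold, and Lemma~\ref{Lem:CounterExampleB1Even} directly produces an \(n\times n\) correlation matrix with characteristic \((c,\sigma)\) and \(w_1<w_{\max}\).

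For odd \(n\) I would use the embedding of Lemma~\ref{Lem:EmbeddingMethod} to reduce to the even dimension \(m:=n-1\). Given a target \((c,\sigma)\in U\), Lemma~\ref{Lem:CharacteristicForEmbedding} (applied with \(m\) in place of \(n\)) prescribes the characteristic \((c',\sigma')\) that the \(m\times m\) matrix \(C'\) must have, namely \(c'=\frac{m+1}{m-1}c\) and \(\sigma'=\sigma\sqrt{\frac{m+1}{m-1}}\sqrt{1-\frac{2}{m-1}\frac{c^2}{\sigma^2}}\). As \(m\to\infty\) one has \((c',\sigma')\to(c,\sigma)\), and since \(\overline{U}\) is bounded away from \(\{\sigma=0\}\) this convergence is uniform on \(\overline{U}\); in particular the radicand stays close to \(1\) and \(c'>0\). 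Hence there is an \(n_2\) so that for every odd \(n\geq n_2\) and every \((c,\sigma)\in\overline{U}\) the point \((c',\sigma')\) lies in the triangle domain of Lemma~\ref{Lem:CounterExampleB1Even} for dimension \(m\). That lemma, using the clause \(c'>0\), yields an \(m\times m\) correlation matrix \(C'\) with characteristic \((c',\sigma')\), satisfying \(w_1<w_{\max}\) and having no eigenvalue equal to one; Lemma~\ref{Lem:EmbeddingMethod} then delivers the desired \(n\times n\) matrix \(C\) with characteristic \((c,\sigma)\) and \(w_1<w_{\max}\). Setting \(n_0:=\max\{n_1,n_2\}\) completes both parities.

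The main obstacle is the odd case: one must verify that the embedding relations genuinely send the target \((c,\sigma)\) into the interior of the triangle for dimension \(m=n-1\), rather than merely onto its boundary, and that this happens uniformly over \(U\). This hinges on the strictness of the inequalities \(c_0<\sigma_0<1-c_0\) together with \(\sigma_0>0\), which keeps \(c^2/\sigma^2\) bounded and thus guarantees a definite gap that survives the distortion \((c,\sigma)\mapsto(c',\sigma')\) for all sufficiently large \(n\).
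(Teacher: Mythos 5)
Your proposal is correct and takes essentially the same approach as the paper: even dimensions are handled directly by Lemma~\ref{Lem:CounterExampleB1Even}, and odd dimensions by pulling the target characteristic back through Lemma~\ref{Lem:CharacteristicForEmbedding} and embedding via Lemma~\ref{Lem:EmbeddingMethod}, with the threshold \(n_0\) obtained from the convergence of the distortion factors. The only difference is bookkeeping: you use a single neighborhood with compact closure and uniform convergence, whereas the paper uses two nested neighborhoods \(U\subset U'\); these serve the same purpose.
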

	\begin{proof}
		Since \(0<c_0<\sigma_0<1-c_0\) and \(\lim_{n\to\infty}n/(n-2)=1\) we can find \(n'\in\N\) and an open neighborhood \(U'\) around \((c_0,\sigma_0)\) such that \begin{equation}0<\sqrt{\frac{n}{n-2}}c< \sigma < \sqrt{\frac{n-2}{n}}(1-c)\end{equation} holds for all \(n\geq n'\) and all \((c,\sigma)\in U'\). Since \(\lim_{n\to\infty}(n+1)/(n-1)=1\), \(\lim_{n\to\infty}2/(n-1)=0\) and \(\sigma_0>0\)   we can choose \(n_0\geq n'\) and an open neighborhood \(U \subset U'\) around \((c_0,\sigma_0)\) such that
		\begin{equation}\left(\frac{n+1}{n-1}c,\sigma\sqrt{\frac{n+1}{n-1}}\sqrt{1-\frac{2}{n-1}\frac{c^2}{\sigma^2}}\right)\in U'\end{equation}
		for all \(n\geq n_0\) and all \((c,\sigma)\in U\). Then the claim follows from Lemma~\ref{Lem:CounterExampleB1Even}  when \(n\geq n_0\) is even and from Lemma~\ref{Lem:EmbeddingMethod} in combination with Lemma~\ref{Lem:CharacteristicForEmbedding} when \(n\geq n_0\) is odd. 
	\end{proof}

	In order to prove Theorem~\ref{Thm:AsymptoticCounterExamples} we have to show that a similar statement as in Theorem~\ref{Thm:LocalResultB1} also holds for the domain \(B_2\). As mentioned at the beginning of this section we will start with rank one correlation matrices (see also Example~\ref{ex:rankOne}).
	\begin{lemma}\label{Lem:RankOneCorrelation}
		Let \(C\) be an \(n\times n\) correlation matrix with characteristic \((c,\sigma)\). We have  \(c^2+\sigma^2=1\) if and only if \(C\) is of the form \(C=xx^T\) where \(x=(x_1,\ldots,x_n)\) is a vector with \(x_j=\pm1\) for \(1\leq j\leq n\). 
	\end{lemma}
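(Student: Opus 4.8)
The plan is to prove the two implications separately, with the reverse implication being immediate and the forward implication requiring the real work.

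First I would dispatch the direction \(C=xx^T\Rightarrow c^2+\sigma^2=1\). If \(x_j=\pm1\) then \(C_{ij}=x_ix_j\) has \(C_{ij}^2=1\) for all \(i,j\), and one checks directly that \(C\) is a correlation matrix: the diagonal is \(C_{ii}=x_i^2=1\), symmetry is clear, and positive semi-definiteness follows from \(y^TCy=\langle x,y\rangle^2\geq0\) for every \(y\in\R^n\). Feeding \(C_{ij}^2=1\) into \eqref{eq:circle} then gives \(c^2+\sigma^2=\frac{2}{n(n-1)}\sum_{i>j}C_{ij}^2=1\) in one line.

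For the converse I would start again from \eqref{eq:circle}: the hypothesis \(c^2+\sigma^2=1\) forces \(\sum_{i>j}C_{ij}^2\) to equal the number \(n(n-1)/2\) of its summands. Since \(|C_{ij}|\leq1\) by \eqref{eq:abs_cij_lower_1}, every summand is at most \(1\), so equality compels \(C_{ij}^2=1\), that is \(C_{ij}=\pm1\), for each off-diagonal pair. The remaining task, and the crux of the lemma, is to upgrade this entrywise \(\pm1\) condition to the global factorisation \(C=xx^T\). I would do this spectrally, following Example~\ref{ex:rankOne}: Lemma~\ref{lem:sum_and_gn} applied to \(X_{ij}=C_{ij}^2\) gives \(\operatorname{Tr}(C^2)=n^2g_n(c^2+\sigma^2)=n^2g_n(1)=n^2\), while \(\operatorname{Tr}(C)=\sum_i\lambda_i=n\) by \eqref{eq:trace} and the eigenvalues are non-negative. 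The inequality \(n^2=\sum_i\lambda_i^2\leq\lambda_1\sum_i\lambda_i=n\lambda_1\) then yields \(\lambda_1\geq n\), and since \(\lambda_1\leq\operatorname{Tr}(C)=n\) we get \(\lambda_1=n\) with all other eigenvalues zero. Thus \(C=nv_1v_1^T\), and the diagonal constraint \(1=C_{ii}=nv_{1i}^2\) forces \(v_{1i}=\pm1/\sqrt{n}\); setting \(x_i:=\sqrt{n}\,v_{1i}\in\{\pm1\}\) gives \(C_{ij}=nv_{1i}v_{1j}=x_ix_j\), as desired.

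The hard part is exactly this last upgrade, and the trace argument is attractive because it reduces a combinatorial sign-consistency question to a one-line eigenvalue estimate and directly reuses the rank-one analysis of Example~\ref{ex:rankOne}. An alternative I would keep in reserve is purely local: positive semi-definiteness of each \(3\times3\) principal submatrix with \(\pm1\) off-diagonal entries forces its determinant \(2(C_{ij}C_{ik}C_{jk}-1)\) to be non-negative, hence \(C_{ij}C_{ik}C_{jk}=1\); fixing a reference index and putting \(x_i:=C_{1i}\) then recovers \(C_{ij}=x_ix_j\). This avoids the spectral machinery but is more laborious, so I would present the trace route as the main argument.
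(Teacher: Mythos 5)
Your proof is correct, and its overall skeleton is the paper's: in the hard direction both arguments reduce everything to the spectral fact \(\lambda_1=n\), deduce \(C=n\,v_1v_1^T\), and read off \(v_{1i}=\pm1/\sqrt{n}\) from the unit diagonal. The difference lies in how \(\lambda_1=n\) is obtained. The paper simply quotes Theorem~\ref{th:l_1_w_max}: since \(s_n(c^2+\sigma^2)=s_n(1)=1\), the bound \(\lambda_1/n\geq s_n(c^2+\sigma^2)\) gives \(\lambda_1\geq n\) immediately, while \(\lambda_1\leq n\) is standard. You instead re-derive this endpoint case from scratch via \(n^2=\operatorname{Tr}(C^2)=\sum_i\lambda_i^2\leq\lambda_1\sum_i\lambda_i=n\lambda_1\), which is in effect the \(\alpha=1\) instance of Lemma~\ref{Lem:DegenerationEstimate}; this makes the lemma self-contained and avoids the \(s_n\) machinery, a reasonable trade. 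Likewise, in the easy direction the paper argues spectrally (eigenvalues \(n\) and \(0\) together with Lemma~\ref{lem:Characteristic_Lemma} give \(g_n(c^2+\sigma^2)=1\)), whereas you compute directly from \eqref{eq:circle}; both are one-liners. Two small remarks: your intermediate observation that \(c^2+\sigma^2=1\) forces \(C_{ij}=\pm1\) entrywise is correct but never used by your spectral argument, so it can be dropped or kept purely as motivation for the sign-consistency issue; and your reserve argument via \(3\times3\) principal minors (nonnegativity of \(\det=2(C_{ij}C_{ik}C_{jk}-1)\) forces \(C_{ij}C_{ik}C_{jk}=1\), then \(x_i:=C_{1i}\) works) is also valid and fully elementary, though, as you note, more laborious.
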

	\begin{proof}
		First assume \(C\) has the form \(C=xx^T\) where \(x=(x_1,\ldots,x_n)\) is a vector with \(x_j=\pm1\) for \(1\leq j\leq n\). Then \(C\) is symmetric and all elements of the diagonal are equal to one. Furthermore, the eigenvalues of \(C\) are \(n\) and \(0\) which shows that \(C\) is positive semi-definite and in addition that \(g_n(c^2+\sigma^2)=1\) by Lemma~\ref{lem:Characteristic_Lemma} which leads to \(c^2+\sigma^2=1\). On the other hand,
		given a correlation matrix \(C\) with \(c^2+\sigma^2=1\) we find by the estimates for \(\lambda_1\) in Theorem~\ref{th:l_1_w_max} that \(n=ns_n(c^2+\sigma^2)\leq \lambda_1\leq n\) which shows \(\lambda_1=n\) and hence \(\lambda_j=0\) for \(j\geq 2\). Hence \(C\) has rank one and from the spectral decomposition we obtain \(C=n\tilde{x}\tilde{x}^T\) where \(\tilde{x}\) is an eigenvector for \(\lambda_1\) of unit length. Writing \(\tilde{x}=(\tilde{x}_1,\ldots,\tilde{x}_n)\) we find since \(C\) is a correlation matrix that \(1=n\tilde{x}_j^2\) has to be satisfied for all \(1\leq j\leq n\). This shows \(\tilde{x}_j=\pm \frac{1}{\sqrt{n}}\) for all \(1\leq j\leq n\). Putting \(x=\sqrt{n}\tilde{x}\) completes the prove. 
	\end{proof}
	\begin{corollary}\label{Cor:RankOneCorrelation}
		Given \(n\geq 2\) and \(0\leq k \leq n\) put \(c=(n(2k/n-1)^2-1)/(n-1)\) and \(\sigma=\sqrt{1-c^2}\). Then \(c^2+\sigma^2=1\) and there exists an \(n\times n\) correlation matrix \(C\) with mean correlation \(c\) and standard deviation \(\sigma\).
	\end{corollary}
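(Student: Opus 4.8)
The plan is to produce the required matrix explicitly as a rank-one matrix built from a sign vector, and then to read off its characteristic from the lemmata already at hand. First I would dispose of the elementary part. Since \(0\leq k\leq n\) we have \(2k/n-1\in[-1,1]\), so \((2k/n-1)^2\in[0,1]\). Unwinding the definition \(g_n(x)=\frac{(n-1)x+1}{n}\), the equation \(g_n(c)=(2k/n-1)^2\) has the unique solution \(c=(n(2k/n-1)^2-1)/(n-1)\), which is precisely the prescribed value, and it forces \(c\in[-1/(n-1),1]\subseteq[-1,1]\). Hence \(\sigma=\sqrt{1-c^2}\) is a well-defined real number and the identity \(c^2+\sigma^2=1\) holds by definition.

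For the existence claim I would take the vector \(x=(x_1,\ldots,x_n)\in\R^n\) having \(k\) entries equal to \(-1\) and the remaining \(n-k\) entries equal to \(+1\), and set \(C:=xx^T\). By Lemma~\ref{Lem:RankOneCorrelation}, \(C\) is a correlation matrix whose characteristic satisfies \(c(C)^2+\sigma(C)^2=1\), because every entry of \(x\) equals \(\pm 1\).

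It then remains only to match the mean correlation. Applying Lemma~\ref{lem:sum_and_gn} with \(X=C\) and using \(C_{ij}=x_ix_j\), I obtain \(\sum_{i,j}C_{ij}=\bigl(\sum_{j=1}^n x_j\bigr)^2=(n-2k)^2\), and therefore \(g_n(c(C))=\frac{1}{n^2}(n-2k)^2=(2k/n-1)^2\). By the first step this yields \(c(C)=c\). Combining \(c(C)=c\) with \(c(C)^2+\sigma(C)^2=1\) gives \(\sigma(C)^2=1-c^2=\sigma^2\), and since both quantities are non-negative we conclude \(\sigma(C)=\sigma\), completing the argument.

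There is no real obstacle here: the whole statement is essentially a reformulation of the quantisation already observed in Example~\ref{ex:rankOne} via equation~\eqref{eq:c_quantised}, now expressed through the clean rank-one characterisation of Lemma~\ref{Lem:RankOneCorrelation}. The only point demanding a moment's care is confirming that \(c\in[-1,1]\) so that \(\sigma\) is real, and this is exactly what the range \(0\leq k\leq n\) guarantees.
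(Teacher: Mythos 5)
Your proposal is correct and follows essentially the same route as the paper: the same sign vector with \(k\) entries equal to \(-1\), the rank-one matrix \(C=xx^T\) certified as a correlation matrix via Lemma~\ref{Lem:RankOneCorrelation}, and the identification \(g_n(c(C))=(2k/n-1)^2\). The only cosmetic difference is that you compute this last identity directly from Lemma~\ref{lem:sum_and_gn} (summing \(C_{ij}=x_ix_j\)) rather than quoting Lemma~\ref{lem:Characteristic_Lemma}, which is an equivalent and, if anything, slightly cleaner bookkeeping of the same fact.
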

	\begin{proof}
		Choose \(x=(x_1,\ldots,x_n)\) with \(x_j=-1\) for \(1\leq j\leq k\) and \(x_j=1\) for \(k< j\leq n\). Then by Lemma~\ref{Lem:RankOneCorrelation} the matrix \(C:=xx^T\) is a correlation matrix with characteristic \((c,\sigma)\) such that \(c^2+\sigma^2=1\) which yields \(\sigma=\sqrt{1-c^2}\). Furthermore, \(g_n(c)=\langle x,\delta_n\rangle^2=(2k/n-1)^2\) by Lemma~\ref{lem:Characteristic_Lemma} which leads to the conclusion.
	\end{proof}
	Now we will use perturbation theory in order to construct correlation matrices satisfying \(w_1<w_{\text{max}}\) from  rank one correlation matrices.
\begin{proposition} \label{Lem:PerturbationOfRankOneCorrelation}

		Let \(C\) be an \(n\times n\)  correlation matrix with characteristic \((c,\sigma)\) such that \(c^2+\sigma^2=1\) and \(1/n< g_n(c)< 1/2\). Given
		\begin{equation}0<\mu<\min\left\{\frac{1}{6}\left(\frac{1}{\sqrt{2}}-\sqrt{g_n(c)}\right),1-\sqrt{\frac{2}{3}}\right\}\end{equation} 
		we have that \(C(\mu):=(1-\mu)C+\mu C_0\) with 
		\begin{equation}C_0= \begin{pmatrix}
 1 & c & \cdots & c\\ 
 c & 1  & \ddots   & \vdots \\ 
 \vdots  & \ddots  & \ddots   & c\\ 
 c & \cdots  & c  & 1
\end{pmatrix}
 \end{equation}
		is a correlation matrix of characteristic \begin{equation}(c(\mu),\sigma(\mu))=(c,(1-\mu)\sigma)\end{equation} which satisfies \(w_1<w_{\operatorname{max}}\). 
		
\end{proposition}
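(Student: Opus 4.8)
The plan is to combine the convexity of the space of correlation matrices with the explicit rank-one structure of $C$ and the perturbation estimate of Lemma~\ref{Lem:Wieland}. First I would record that $C(\mu)=(1-\mu)C+\mu C_0$ is a correlation matrix by Lemma~\ref{Lem:ConvexityArgument}, and that its mean correlation equals $c$ since both $C$ and $C_0$ have mean correlation $c$. For the standard deviation, writing $C(\mu)_{ij}=(1-\mu)C_{ij}+\mu c$ for $i\neq j$ and expanding the second moment, the cross term together with $\frac{2}{n(n-1)}\sum_{i>j}C_{ij}^2=c^2+\sigma^2$ collapses so that $\sigma(\mu)^2=(1-\mu)^2\sigma^2$; hence the characteristic is $(c,(1-\mu)\sigma)$ as claimed. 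Since $c^2+\sigma^2=1$, Lemma~\ref{Lem:RankOneCorrelation} gives $C=n\,uu^{T}$ for a unit vector $u$, and by Example~\ref{ex:rankOne} one has $\langle u,\delta_n\rangle^2=g_n(c)$.

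The decisive structural observation is that the weights of $C(\mu)$ are supported on a two-dimensional subspace. Writing $C_0=(1-c)\operatorname{Id}_n+cn\,\delta_n\delta_n^{T}$, one gets $C(\mu)=\mu(1-c)\operatorname{Id}_n+(1-\mu)n\,uu^{T}+\mu cn\,\delta_n\delta_n^{T}$, so the non-scalar part has range inside $V:=\operatorname{span}\{u,\delta_n\}$. Every eigenvector orthogonal to $V$ is orthogonal to $\delta_n$ and thus carries weight $0$ (with eigenvalue $\mu(1-c)$), so all the weight sits on the two eigenvectors lying in $V$, and these two weights add up to $\|\delta_n\|^2=1$. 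Evaluating the Rayleigh quotient at $u$ gives $\lambda_1\geq\langle u,C(\mu)u\rangle\geq(1-\mu)n$, which dominates the bulk value $\mu(1-c)$; hence $\lambda_1$ is the larger in-plane eigenvalue, it is simple, its eigenvector $v_1$ lies in $V$, and $w_{\max}=1-w_1$.

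It then remains to prove $w_1<\tfrac12$. I would apply Lemma~\ref{Lem:Wieland} with $B=C(\mu)$ and $A=(1-\mu)C=(1-\mu)n\,uu^{T}$, whose eigenvalues are $(1-\mu)n$ (eigenvector $u$) and $0$. Taking $\lambda=\lambda_1$, $v=v_1$ and $\xi_k=u$ gives $\eta_k=\lambda_1$ and hence $1-\langle v_1,u\rangle^2\leq \mu^2\|C_0\|_F^2/\lambda_1^2$. Using $\|C_0\|_F^2=n(1+(n-1)c^2)\leq n^2$ and $\lambda_1\geq(1-\mu)n$ yields $1-\langle v_1,u\rangle^2\leq \mu^2/(1-\mu)^2$. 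Decomposing $v_1$ along $u$ and bounding its perpendicular part, the triangle inequality gives $\sqrt{w_1}=|\langle v_1,\delta_n\rangle|\leq\sqrt{g_n(c)}+\mu/(1-\mu)$. The two hypotheses on $\mu$ are there to ensure $\mu/(1-\mu)<\tfrac{1}{\sqrt2}-\sqrt{g_n(c)}$: the bound $\mu<1-\sqrt{2/3}$ gives $1-\mu>\sqrt{2/3}$, and combined with $\mu<\tfrac16\bigl(\tfrac1{\sqrt2}-\sqrt{g_n(c)}\bigr)$ one obtains $\mu/(1-\mu)<\tfrac{\sqrt6}{12}\bigl(\tfrac1{\sqrt2}-\sqrt{g_n(c)}\bigr)<\tfrac1{\sqrt2}-\sqrt{g_n(c)}$. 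Therefore $\sqrt{w_1}<\tfrac1{\sqrt2}$, i.e. $w_1<\tfrac12$, and with $w_{\max}=1-w_1$ we conclude $w_1<\tfrac12<w_{\max}$ for every admissible eigenbasis.

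The step I expect to be the main obstacle is not the perturbation estimate itself but the structural reduction of the second paragraph. The inequality $w_1<\tfrac12$ is useful only because the weight is concentrated on exactly two eigenvectors, so that the complementary weight $1-w_1$ belongs to a single genuine eigenvector rather than being dispersed over the $(n-2)$-dimensional bulk. Establishing that all but two eigenvectors are orthogonal to $\delta_n$, and that $\lambda_1$ together with the second in-plane eigenvalue are simple, is precisely what upgrades $w_1<\tfrac12$ to $w_1<w_{\max}$; the perturbation bound alone would leave open the possibility $w_{\max}=w_1$.
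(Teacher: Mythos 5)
Your perturbation estimate is sound and in places cleaner than the paper's own argument: you apply Lemma~\ref{Lem:Wieland} with $A=(1-\mu)C$ instead of comparing $C(\mu)$ with $C$ directly, and you replace the paper's Lemma~\ref{Lem:Thecnical01} by a direct orthogonal decomposition of $v_1$ along $u$, correctly obtaining $\sqrt{w_1}\leq\sqrt{g_n(c)}+\mu/(1-\mu)<1/\sqrt{2}$ under the stated hypotheses on $\mu$; the computation of the characteristic $(c,(1-\mu)\sigma)$ is also correct.

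The gap sits in the structural reduction of your second paragraph, i.e.\ exactly at the step you yourself flag as the main obstacle. You assert that ``all the weight sits on the two eigenvectors lying in $V$'' and that $w_{\max}=1-w_1$, but this holds for \emph{every} orthonormal eigenbasis only if \emph{both} in-plane eigenvalues of $C(\mu)$ differ from the bulk eigenvalue $\mu(1-c)$, and if $\lambda_1$ is simple. Your Rayleigh-quotient bound only gives $\lambda_1\geq(1-\mu)n>\mu(1-c)$; it says nothing about the second in-plane eigenvalue $\lambda_b$. If $\lambda_b=\mu(1-c)$, the eigenspace for $\mu(1-c)$ is $(n-1)$-dimensional, and an orthonormal eigenbasis of it can disperse the complementary weight $1-w_1$ so that each basis vector receives weight $(1-w_1)/(n-1)$, which for large $n$ is smaller than $w_1$; then $w_{\max}=w_1$ for that basis, and the proposition (which, by the paper's definition, requires $w_1<w_{\max}$ for any choice of eigenbasis) would fail. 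Similarly, if $\lambda_1$ had a two-dimensional eigenspace equal to $V$, one could choose $v_1=\delta_n$ and get $w_1=1$. Neither degeneracy is excluded anywhere in your proof. The fix is short and is precisely what the paper does at the end of its proof: since $g_n(c)>1/n$ forces $c>0$, the in-plane part $(1-\mu)n\,uu^{T}+\mu cn\,\delta_n\delta_n^{T}$ is positive definite on $V=\operatorname{span}\{u,\delta_n\}$ (it annihilates $x\in V$ only if $x\perp u$ and $x\perp\delta_n$, i.e.\ $x=0$), so both in-plane eigenvalues strictly exceed $\mu(1-c)$. Equivalently, the paper assumes $\lambda_2=\mu(1-c)$ for the in-plane vector $v_2$, deduces $(1-\mu)Cv_2=-\mu cn\,\delta_n\delta_n^{T}v_2$, and uses positive semi-definiteness of both sides to conclude $Cv_2=0$ and $v_2\perp\delta_n$, hence $v_2\perp V$, a contradiction. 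Simplicity of $\lambda_1$ then follows, for instance, from your own perturbation bound: two orthonormal eigenvectors for $\lambda_1$ cannot both satisfy $1-\langle v,u\rangle^{2}\leq\mu^{2}/(1-\mu)^{2}<1/2$. Without these additions, your argument establishes $w_1<\tfrac12$ but not $w_1<w_{\max}$.
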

Combining Proposition~\ref{Lem:PerturbationOfRankOneCorrelation} with Corollary~\ref{Cor:RankOneCorrelation} and Corollary~\ref{Cor:ConvexityWithId} leads to a domain  in the $(c,\sigma)$-plane consisting of a bunch of acute triangles, which depend on $n$. These triangles are contained as a part of the black area in Fig.~\ref{Fig:technical_domains}.  For any $(c,\sigma)$ inside these triangles, we find an $n \times n$ correlation matrix $C$ with characteristic $(c,\sigma)$ and $w_1 < w_{\max}$. When $n$ goes to infinity, the union of these triangles  covers the domain $B_2$. For the proof of Proposition~\ref{Lem:PerturbationOfRankOneCorrelation} we will need the two following technical lemma.
\begin{lemma}\label{Lem:Thecnical01}
	Let \(x,\tilde{x}\in\R^n\) be two vectors with \(\|x\|=\|\tilde{x}\|=1\). Put \(w=\langle x,\delta_n\rangle^2\) and \(\tilde{w}=\langle\tilde{x},\delta_n\rangle^2\). We have
	\begin{equation}1-\langle x,\tilde{x}\rangle^2\geq \frac{3}{4}(\sqrt{w}-\sqrt{\tilde{w}})^2\end{equation}
\end{lemma}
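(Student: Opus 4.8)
The plan is to reduce the statement to a one–variable trigonometric inequality by splitting $x$ and $\tilde x$ into their components along and orthogonal to $\delta_n$. First I would observe that replacing $x$ by $-x$ or $\tilde x$ by $-\tilde x$ changes neither $\langle x,\tilde x\rangle^2$ nor the numbers $w,\tilde w$, so without loss of generality I may assume $a:=\langle x,\delta_n\rangle\geq 0$ and $b:=\langle\tilde x,\delta_n\rangle\geq 0$; then $\sqrt{w}=a$ and $\sqrt{\tilde w}=b$. Writing $x=a\delta_n+p$ and $\tilde x=b\delta_n+q$ with $p,q\perp\delta_n$, one has $\|p\|^2=1-a^2$, $\|q\|^2=1-b^2$ and $\langle x,\tilde x\rangle=ab+\langle p,q\rangle$, where the Cauchy--Schwarz inequality gives $|\langle p,q\rangle|\leq\sqrt{(1-a^2)(1-b^2)}$.

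Next I would substitute $a=\cos\alpha$ and $b=\cos\beta$ with $\alpha,\beta\in[0,\pi/2]$, so that $\sqrt{1-a^2}=\sin\alpha$ and $\sqrt{1-b^2}=\sin\beta$. The bound of the previous step then reads $\cos(\alpha+\beta)\leq\langle x,\tilde x\rangle\leq\cos(\alpha-\beta)$. Since $\cos(\alpha-\beta)-\cos(\alpha+\beta)=2\sin\alpha\sin\beta\geq 0$ and $\cos(\alpha-\beta)+\cos(\alpha+\beta)=2\cos\alpha\cos\beta\geq 0$, it follows that $|\cos(\alpha+\beta)|\leq\cos(\alpha-\beta)$, whence $|\langle x,\tilde x\rangle|\leq\cos(\alpha-\beta)$ and therefore $1-\langle x,\tilde x\rangle^2\geq\sin^2(\alpha-\beta)$. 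Because $(\sqrt{w}-\sqrt{\tilde w})^2=(\cos\alpha-\cos\beta)^2$, it then suffices to prove the purely trigonometric inequality $\sin^2(\alpha-\beta)\geq\frac{3}{4}(\cos\alpha-\cos\beta)^2$ for all $\alpha,\beta\in[0,\pi/2]$.

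For this last inequality I would pass to the half--angle variables $s=\frac{1}{2}(\alpha+\beta)$ and $d=\frac{1}{2}(\alpha-\beta)$. The identities $\sin(\alpha-\beta)=2\sin d\cos d$ and $\cos\alpha-\cos\beta=-2\sin s\sin d$ give $\sin^2(\alpha-\beta)=4\sin^2 d\cos^2 d$ and $(\cos\alpha-\cos\beta)^2=4\sin^2 s\sin^2 d$, so after cancelling the common factor $\sin^2 d$ the claim reduces to $\cos^2 d\geq\frac{3}{4}\sin^2 s$. The constraint $\alpha,\beta\in[0,\pi/2]$ forces $s\in[0,\pi/2]$ and $|d|\leq \pi/2-s$; since cosine is nonnegative and decreasing on $[0,\pi/2]$, this yields $\cos|d|\geq\cos(\pi/2-s)=\sin s\geq 0$, and hence $\cos^2 d\geq\sin^2 s\geq\frac{3}{4}\sin^2 s$, which is in fact stronger than needed.

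All steps are elementary; the only point that genuinely requires attention is the direction of the estimate. Since I want a lower bound for $1-\langle x,\tilde x\rangle^2$, I must upper--bound $\langle x,\tilde x\rangle^2$, so the extremal configuration is the aligned choice of the orthogonal parts $p,q$ captured by Cauchy--Schwarz, and both endpoints $\cos(\alpha\pm\beta)$ must be controlled before squaring. I expect the main (though still minor) obstacle to be the careful bookkeeping of the sign reductions together with the verification $|\cos(\alpha+\beta)|\leq\cos(\alpha-\beta)$, which legitimises the passage to $\sin^2(\alpha-\beta)$; the half--angle reduction itself is routine.
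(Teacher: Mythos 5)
Your proof is correct, and it diverges from the paper's in an instructive way. Both arguments begin identically: the sign normalisation, the orthogonal decomposition $x=a\delta_n+p$, $\tilde x=b\delta_n+q$, and Cauchy--Schwarz giving $|\langle x,\tilde x\rangle|\leq \sqrt{w\tilde w}+\sqrt{(1-w)(1-\tilde w)}$ (your interval $[\cos(\alpha+\beta),\cos(\alpha-\beta)]$ is exactly this bound in trigonometric clothing, since $\cos\alpha\cos\beta+\sin\alpha\sin\beta=\cos(\alpha-\beta)$). The divergence is in how that bound is exploited. The paper weakens it by AM--GM, $\sqrt{(1-w)(1-\tilde w)}\leq 1-\tfrac{1}{2}(w+\tilde w)$, uses the identity $1-\tfrac12(\sqrt w-\sqrt{\tilde w})^2=\sqrt{w\tilde w}+\tfrac12(2-w-\tilde w)$, and obtains
\begin{equation}
1-\langle x,\tilde x\rangle^2\geq (\sqrt w-\sqrt{\tilde w})^2\left(1-\tfrac14(\sqrt w-\sqrt{\tilde w})^2\right)\geq \tfrac34(\sqrt w-\sqrt{\tilde w})^2,
\end{equation}
the factor $3/4$ being exactly the price of the AM--GM step combined with $|\sqrt w-\sqrt{\tilde w}|\leq 1$. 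You instead keep the Cauchy--Schwarz bound intact, pass to $1-\cos^2(\alpha-\beta)=\sin^2(\alpha-\beta)$, and prove $\sin^2(\alpha-\beta)\geq(\cos\alpha-\cos\beta)^2$ via the half-angle factorisation and $\cos|d|\geq\cos(\pi/2-s)=\sin s$; your verification of $|\cos(\alpha+\beta)|\leq\cos(\alpha-\beta)$, which legitimises squaring, is the one step the paper's route never needs and you handle it correctly. The payoff is that you prove the sharper inequality $1-\langle x,\tilde x\rangle^2\geq(\sqrt w-\sqrt{\tilde w})^2$ with the optimal constant $1$ (attained at $w=1$, $\tilde w=0$), which strictly implies the lemma; fed into the proof of Proposition~\ref{Lem:PerturbationOfRankOneCorrelation} it would even improve the bound $\sqrt{w_1}\leq 8\mu/(\sqrt3\sigma)+\sqrt{w'}$ to $4\mu/\sigma+\sqrt{w'}$. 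Two cosmetic points: when you cancel the common factor $\sin^2 d$ you should note that the case $\sin d=0$ is trivial (the right-hand side vanishes), and the constraint analysis $|d|\leq \pi/2-s$ deserves the one-line derivation from $0\leq s\pm d\leq\pi/2$; neither is a gap.
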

\begin{proof}
	We will first show
	\(|\langle x,\tilde{x}\rangle|\leq \sqrt{w}\sqrt{\tilde{w}}+\sqrt{1-w}\sqrt{1-\tilde{w}}.\) 
	Without loss of generality we can assume \(\langle x,\delta_n\rangle,\langle \tilde{x},\delta_n\rangle\geq0\). Then one has \(x=\sqrt{w} \delta_n+p\),   \(\tilde{x}=\sqrt{\tilde{w}} \delta_n+\tilde{p}\) with \(p,\tilde{p}\perp\delta_n\).
	From \(\|x\|=\|\tilde{x}\|=1\) we obtain \(\|p\|^2=1-w\) and \(\|\tilde{p}\|^2=1-\tilde{w}\). Hence one finds
	\begin{eqnarray*}
	|\langle x,\tilde{x}\rangle|&\leq&|\sqrt{w}\sqrt{\tilde{w}}+\langle p,\tilde{p}\rangle|\\
	&\leq& \sqrt{w}\sqrt{\tilde{w}}+\|p\|\|\tilde{p}\|=\sqrt{w}\sqrt{\tilde{w}}+\sqrt{1-w}\sqrt{1-\tilde{w}}.
	\end{eqnarray*}
	Now since \(\sqrt{1-w}\sqrt{1-\tilde{w}}\leq \frac{1}{2}(2-w-\tilde{w})\) and \begin{equation}1-\frac{1}{2}(\sqrt{w}-\sqrt{\tilde{w}})^2=\sqrt{w}\sqrt{\tilde{w}}+\frac{1}{2}(2-w-\tilde{w})\end{equation} we observe
	\begin{eqnarray*}
		1-\langle x, \tilde{x}\rangle^2 &\geq& 1-(\sqrt{w}\sqrt{\tilde{w}}+\sqrt{1-w}\sqrt{1-\tilde{w}})^2\\
		&\geq& 1-(1-\frac{1}{2}(\sqrt{w}-\sqrt{\tilde{w}})^2)^2\\
		&=& (\sqrt{w}-\sqrt{\tilde{w}})^2(1-\frac{1}{4}(\sqrt{w}-\sqrt{\tilde{w}})^2).
	\end{eqnarray*}
	From \(|\sqrt{w}-\sqrt{\tilde{w}}|\leq 1\) we conclude 
	\begin{equation}1-\langle x, \tilde{x}\rangle^2\geq \frac{3}{4}(\sqrt{w}-\sqrt{\tilde{w}})^2.\end{equation}
\end{proof}
\begin{lemma}\label{Lem:Thecnical02}
	Let \(x\in\R^n\) be a vector with \(\|x\|=1\) and \(0\leq\langle x,\delta_n \rangle < 1\). Put \(w=\langle x,\delta_n\rangle^2\) and
	\begin{equation}\tilde{x}=-\frac{\sqrt{w}}{\sqrt{1-w}}x+\frac{1}{\sqrt{1-w}}\delta_n.\end{equation}
	Then we have \(\tilde{x}\perp x\), \(\|\tilde{x}\|=1\) and \(\tilde{w}:=\langle \tilde{x},\delta_n\rangle^2=1-w\).
\end{lemma}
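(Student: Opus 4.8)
The plan is to verify the three assertions by direct computation, using only the normalisation \(\|x\|=1\), the fact that \(\|\delta_n\|=1\) (since \(\delta_n=(1,\ldots,1)/\sqrt{n}\)), and the observation that the hypothesis \(\langle x,\delta_n\rangle\geq 0\) together with \(w=\langle x,\delta_n\rangle^2\) forces \(\langle x,\delta_n\rangle=\sqrt{w}\). Conceptually, \(\tilde{x}\) is nothing but the Gram--Schmidt orthogonalisation of \(\delta_n\) against \(x\), renormalised to unit length; indeed one can rewrite
\begin{equation}
\tilde{x}=\frac{1}{\sqrt{1-w}}\bigl(\delta_n-\langle\delta_n,x\rangle x\bigr).
\end{equation}
The condition \(\langle x,\delta_n\rangle<1\), that is \(w<1\), guarantees that the denominator \(\sqrt{1-w}\) is strictly positive, so \(\tilde{x}\) is well defined. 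Recognising this structure makes all three assertions essentially automatic, but I would still record the computations explicitly.

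First I would establish orthogonality by expanding \(\langle\tilde{x},x\rangle\) by bilinearity, substituting \(\langle x,x\rangle=1\) and \(\langle\delta_n,x\rangle=\sqrt{w}\); the two resulting terms cancel and give \(\langle\tilde{x},x\rangle=0\). Next I would compute \(\|\tilde{x}\|^2=\langle\tilde{x},\tilde{x}\rangle\) by expanding the three inner products, inserting \(\langle x,x\rangle=1\), \(\langle\delta_n,\delta_n\rangle=1\) and \(\langle x,\delta_n\rangle=\sqrt{w}\); collecting the terms over the common denominator \(1-w\) yields \((w-2w+1)/(1-w)=1\), so \(\|\tilde{x}\|=1\).

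Finally, for the weight I would compute \(\langle\tilde{x},\delta_n\rangle\) in the same manner, obtaining \((1-w)/\sqrt{1-w}=\sqrt{1-w}\), and hence \(\tilde{w}=\langle\tilde{x},\delta_n\rangle^2=1-w\). There is no genuine obstacle here: the lemma is a purely algebraic identity, and every step reduces to bilinearity of the inner product together with the two normalisations \(\|x\|=\|\delta_n\|=1\). The only point requiring a word of care is the well-definedness of \(\tilde{x}\), which is exactly what the strict inequality \(\langle x,\delta_n\rangle<1\) provides.
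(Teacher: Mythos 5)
Your proposal is correct and follows essentially the same route as the paper: a direct computation using bilinearity, the normalisations \(\|x\|=\|\delta_n\|=1\), and the identity \(\langle x,\delta_n\rangle=\sqrt{w}\), with well-definedness secured by the strict inequality \(w<1\). The only cosmetic difference is that the paper bootstraps the norm computation from the already-established orthogonality and weight identities, whereas you expand \(\|\tilde{x}\|^2\) directly; both are equally valid.
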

\begin{proof}
	First we observe that since \(\langle x,\delta_n \rangle<1\) we have \(1-w>0\).
	Hence we write
	\begin{equation}\sqrt{1-w}\langle x,\tilde{x} \rangle=-\sqrt{w}\|x\|^2+\langle x,\delta_n \rangle=-\sqrt{w}+\sqrt{w}=0\end{equation}
	to show that \(\langle x,\tilde{x} \rangle=0\). Then we check
	\begin{equation}\sqrt{1-w}\langle \tilde{x},\delta_n\rangle=-\sqrt{w}\sqrt{w}+1=1-w\end{equation}
	which shows \(\tilde{w}=1-w\). Furthermore, we deduce from \(\langle \tilde{x},\delta_n\rangle=\sqrt{1-w}>0\) that
	\begin{equation}\sqrt{1-w}\langle \tilde{x},\tilde{x} \rangle=-\sqrt{w}\langle x,\tilde{x} \rangle+\langle \delta_n,\tilde{x} \rangle=\sqrt{1-w}\end{equation} which proves \(\|\tilde{x}\|=1\).
\end{proof}
	\begin{proof}[\textbf{Proof of Proposition~\ref{Lem:PerturbationOfRankOneCorrelation}}]
		By Lemma~\ref{Lem:ConvexityArgument} and Example~\ref{ex:sigma_0}  we have that \(C(\mu)\) is a correlation matrix for any \(0\leq \mu\leq 1\). Since \(c^2+\sigma^2=1\) we have by Lemma~\ref{Lem:RankOneCorrelation} that \(C=nv'v'^T\) with \(\|v'\|=1\) and \(|v'_{j}|=1\), \(1\leq j\leq n\), where \(v'=(v'_{1},\ldots,v'_{n})\). Furthermore, we have \(g_n(c)=w':=\langle v',\delta_n\rangle^2\) by Lemma~\ref{lem:Characteristic_Lemma}. After possibly replacing \(v'\) by \(-v'\) we can ensure that \(\langle v',\delta_n\rangle=\sqrt{g_n(c)}\geq0\). Put 
		\begin{equation}\tilde{v}:=-\frac{\sqrt{w'}}{\sqrt{1-w'}}v'+\frac{1}{\sqrt{1-w'}}\delta_n.\end{equation}
		Then we have \(\|\tilde{v}\|=1\) and \(\langle v',\tilde{v}\rangle=0\) by Lemma~\ref{Lem:Thecnical02}.
		Choose \(v_3,\ldots, v_n\) such that \(v',\tilde{v},v_3,\ldots,v_n\) is an orthonormal basis. Since \(\delta_n=\sqrt{1-w'}\tilde{v}+\sqrt{w'}v'\) we find \(\langle v_j,\delta_n\rangle=0\) for \(j\geq 3\).
		Hence \(v_3,\ldots,v_n\) are eigenvectors of \(C(\mu)\) for the eigenvalue \(\mu(1-c)\) (see Example~\ref{ex:sigma_0}). Put \(\rho:=C-C_0\). Then \(C=C_0+\rho\) and with \(\rho=(\rho_{ij})\) we have \(\rho_{ii}=0\), \(1\leq i\leq n\), \(\sum_{1\leq i,j\leq n}\rho_{ij}=0\) and \(\sum_{1\leq i,j\leq n}\rho_{ij}^2=n(n-1)\sigma^2\). Hence we obtain \(c(\mu)=c\) and \(\sigma(\mu)=(1-\mu)\sigma\). Furthermore, we have \(\|C-C(\mu)\|^2_F=\mu^2\|\rho\|^2_F=\mu^2\sigma^2n(n-1)\) where \(\|\cdot\|_F\) denotes the Forbenius norm. Now fix \(\mu\) satisfying the assumptions and let \(v_1\) be an eigenvector of \(C(\mu)\) with respect to its largest eigenvalue \(\lambda_1\) such that \(\|v_1\|=1\) and \(\langle v_1, \delta_n\rangle\geq0\). Put \(w_1:=\langle v_1, \delta_n\rangle^2\). We have \(0<\mu<1/2\) and since \(\lambda_1/n\geq g_n(c_\mu^2+\sigma_\mu^2)\geq c^2+(1-\mu)^2\sigma^2\geq \frac{1}{4}\sigma^2 \), \(Cv_j=0\) for \(j\geq3\), \(C\tilde{v}=0\) we have by Stewart (see Lemma~\ref{Lem:Wieland}) that
		\begin{equation}1-\langle v',v_1\rangle^2\leq\frac{n-1}{n}\frac{\mu^2\sigma^2}{(\frac{1}{4}\sigma^2)^2}\leq \frac{16\mu^2}{\sigma^2}\end{equation}
		and hence by Lemma~\ref{Lem:Thecnical01} that \(\frac{3}{4}(\sqrt{w_1}-\sqrt{w'})^2\leq \frac{16\mu^2}{\sigma^2}\) holds which implies
		\begin{equation}\sqrt{w_1}\leq \frac{8\mu}{\sqrt{3}\sigma}+\sqrt{w'}\leq 6\mu+\sqrt{w'}\end{equation}
		where for the last estimate we use that \(c\leq g_n(c)< 1/2 \) and \(c^2+\sigma^2=1\) implies \(\sigma\geq\frac{\sqrt{3}}{2}\). It follows that given \(\mu<\frac{1}{6}(\frac{1}{\sqrt{2}}-\sqrt{w'})\) we have \(w_1<1/2\). Furthermore, from \(\sigma\geq\frac{\sqrt{3}}{2}\) and the estimate for \(\lambda_1\) above we obtain with \(\mu<1-\sqrt{\frac{2}{3}}\) that \(\lambda_1/n\geq (1-\mu)^2\sigma^2>\frac{1}{2}\). Hence \(\lambda_1>\mu(1-c)\) which implies \(v_1\perp\text{span}\{v_3,\ldots,v_n\}\) by the standard properties of symmetric matrices. Then choose \(v_2 \perp\text{span}\{v_1,v_3,\ldots,v_n\}\), \(\|v_2\|=1\). It follows that \(v_1,v_2,v_3,\ldots,v_n\) is an orthonormal eigenbasis of \(C(\mu)\) with \(Cv_1=\lambda_1v_1\), \(Cv_2=\lambda_2v_2\) and \(Cv_j=\mu(1-c)v_j\) for \(j\geq 3\). Put \(w_j=\langle v_j, \delta_n\rangle^2\), \(j\geq 2\), and \(w_{\text{max}}=\max_{1\leq j\leq n}w_j\). Since \( w_1<1/2\), \(w_j=0\), \(j\geq 3\), and \(w_1+\ldots +w_n=1\) we have \(w_2=1-w_1>1/2\). This shows \(w_1<w_{\text{max}}\) for the specific basis \(v_1,\ldots,v_n\). In order to show that \(w_1<w_{\text{max}}\) holds for any orthonormal eigenbasis of \(C(\mu)\) it lasts out to prove that \(\lambda_2\neq \mu(1-c)\). Because given any  orthonormal eigenbasis \(\tilde{v}_1,\ldots,\tilde{v}_n\) with respect to the eigenvalues \(\lambda_1,\lambda_2\) and \(\mu(1-c)\) of \(C_\mu\) and \(\lambda_2\neq \mu(1-c)\) we immediately find \(\tilde{v}_1=\pm v_1\), \(\tilde{v}_2=\pm v_2\) and \(\langle \tilde{v}_j,\delta_n\rangle=0\) which shows that with \(\tilde{w}_j:=\langle \tilde{v}_j,\delta_n\rangle\) we have \(\tilde{w}_1=w_1<w_2=\tilde{w}_2\).  So let us show that \(\lambda_2\neq \mu(1-c)\) is satisfied. Assuming that \(\lambda_2= \mu(1-c)\) leads to \((1-\mu)Cv_2=-\mu c1\!\!\!1v_2\) where \(1\!\!\!1\) is the \(n\times n\)-matrix with all entries equal to one. Since \(\mu c>0\) this is only possible if \(Cv_2=1\!\!\!1 v_2=0\). But this implies \(v_2\perp \delta_n\) and hence \(v_1=\pm \delta_n\). But this is not possible since \(w_1<1/2\). In conclusion we obtain \(\lambda_2\neq \mu(1-c)\) which shows that \(w_1<w_{\text{max}}\) for any orthonormal eigenbasis of \(C(\mu)\).
	\end{proof}
We wish to cover the domain \(B_2\) by correlation matrices with \(w_1<w_{\text{max}}\) by applying the convexity argument (Corollary~\ref{Cor:ConvexityWithId}) to the matrices in Proposition~\ref{Lem:PerturbationOfRankOneCorrelation}. By Lemma~\ref{Lem:RankOneCorrelation} and Corollary~\ref{Cor:RankOneCorrelation} we have for \(n\) fixed that there exists only finitely many correlation matrices with characteristic \((c,\sigma)\) satisfying \(c^2+\sigma^2=1\). Hence we need to ensure that there exist enough of them when \(n\) is large.
	\begin{lemma}\label{Lem:Thecnical03}
		Let \(0<a<b<1\) two real numbers. There exists \(n_0\in\N\) such that for any \(n\geq n_0\) there is a \(k\in\N\), \(0\leq k\leq n\), with
		\begin{equation}\label{eq:pigeonprinciple}
		    a<\frac{n(2k/n-1)^2-1}{n-1}<b.
		\end{equation}
	\end{lemma}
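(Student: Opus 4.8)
The plan is to treat the quantity in \eqref{eq:pigeonprinciple} as a function of the grid variable $t:=2k/n-1$, which ranges over the equally spaced set $\{2k/n-1\mid 0\leq k\leq n\}\subset[-1,1]$ with gap $2/n$, and to reduce the claim to a statement about where the squares $t^2$ fall. First I would rewrite the double inequality: with $t=2k/n-1$, multiplying by $n-1>0$ and rearranging shows that $a<\frac{n t^2-1}{n-1}<b$ is equivalent to $L_n<t^2<U_n$, where $L_n:=\frac{a(n-1)+1}{n}=a+\frac{1-a}{n}$ and $U_n:=\frac{b(n-1)+1}{n}=b+\frac{1-b}{n}$. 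Since $0<a<b<1$, both endpoints decrease to their limits, so $L_n\to a$ and $U_n\to b$ as $n\to\infty$, and the target window for $t^2$ converges to $(a,b)$.

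Next I would fix an auxiliary closed interval strictly inside $(a,b)$, say $[a',b']$ with $a<a'<b'<b$ (for concreteness $a'=a+\tfrac{b-a}{4}$ and $b'=b-\tfrac{b-a}{4}$). Because $U_n>b>b'$ always and $L_n\to a<a'$, there is an $n_1\in\N$ so that $L_n<a'$, and hence $[a',b']\subset(L_n,U_n)$, for all $n\geq n_1$. It therefore suffices to produce, for all large $n$, an admissible $k$ with $t^2=(2k/n-1)^2\in[a',b']$, that is, with $|t|\in[\sqrt{a'},\sqrt{b'}]$.

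Now comes the pigeonhole step. The interval $[\sqrt{a'},\sqrt{b'}]\subset(0,1)$ has fixed positive length $\sqrt{b'}-\sqrt{a'}$. The grid values $2k/n-1$, $0\leq k\leq n$, are equally spaced with gap $2/n$, and for $n$ large their range covers $[\sqrt{a'},\sqrt{b'}]$, since the largest value is $1\geq\sqrt{b'}$ while the smallest nonnegative value is at most $1/n<\sqrt{a'}$. Consequently, once $n_0\geq n_1$ is chosen so large that $2/n<\sqrt{b'}-\sqrt{a'}$ for all $n\geq n_0$, each such $n$ admits an integer $k$ with $2k/n-1\in[\sqrt{a'},\sqrt{b'}]$; this forces $n/2<k<n$, so $1\leq k\leq n$ and $k$ is admissible. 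For this $k$ we have $t^2\in[a',b']\subset(L_n,U_n)$, which is precisely \eqref{eq:pigeonprinciple}.

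The computations here are elementary; the only point requiring care is coordinating the two limiting effects, namely that the window $(L_n,U_n)$ for $t^2$ is itself $n$-dependent while the grid of available $t$-values is refining. Isolating a fixed interval $[a',b']$ inside the limit $(a,b)$ decouples these effects and reduces everything to the single clean statement that a grid of mesh $2/n$ eventually meets any fixed subinterval of $(0,1)$ of positive length, which is where the pigeonhole principle enters.
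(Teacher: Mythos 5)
Your proof is correct and follows essentially the same route as the paper's: rewrite the inequality as $g_n(a)<(2k/n-1)^2<g_n(b)$, take square roots, and use that the grid $\{2k/n-1\}$ of mesh $2/n$ must meet an interval of fixed positive length. Your insertion of the fixed intermediate interval $[a',b']\subset(a,b)$ is in fact a slightly cleaner handling of the $n$-dependence of the window than the paper's own proof, which chooses $n_0$ in terms of the $n$-dependent quantities $g_n(a),g_n(b)$ and leaves the uniform lower bound on $\sqrt{g_n(b)}-\sqrt{g_n(a)}$ implicit.
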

	\begin{proof}
		Put \(a'=g_n(a)\), \(b'=g_n(b)\). Then \(0<a'<b'<1\) and \eqref{eq:pigeonprinciple} is equivalent to
		\(a'<(2k/n-1)^2<b'\).
		Choose \(n_0\) such that \((\sqrt{b'}-\sqrt{a'})/2>1/n_0\). 
		We find that for any \(n\geq n_0\) there exists \(1\leq k\leq n-1\) with \((\sqrt{a'}+1)/2\in[(k-1)/n,k/n)\) and hence
		\begin{equation}\frac{\sqrt{a'}+1}{2}<\frac{k}{n}<\frac{\sqrt{b'}+1}{2}\end{equation} which is equivalent to \(a'<(2k/n-1)^2<b'\).
	\end{proof}
Now we are ready to prove a result as in Theorem~\ref{Thm:LocalResultB1} for the domain \(B_2\).
	\begin{theorem}\label{Thm:LocalResultB2}
		Let \(c_0,\sigma_0>0\) be two real numbers with \(c_0^2+\sigma_0^2<1\) and \(\sigma>\sqrt{3}c\). Then there in an open neighborhood \(U\) around \((c_0,\sigma_0)\) and a positive integer \(n_0\) such that for any \((c,\sigma)\in U\) and any \(n\geq n_0\) there exists an \(n\times n\) correlation matrix with characteristic \((c,\sigma)\) and \(w_1<w_{\operatorname{max}}\). 
	\end{theorem}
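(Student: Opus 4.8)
The plan is to combine three earlier constructions: the rank-one correlation matrices on the unit circle (Corollary~\ref{Cor:RankOneCorrelation}), the perturbation of Proposition~\ref{Lem:PerturbationOfRankOneCorrelation}, and the convexity scaling of Corollary~\ref{Cor:ConvexityWithId} together with Remark~\ref{Rmk:ConvexityWithId}. The geometric picture is as follows. Given $(c_0,\sigma_0)\in B_2$, the ray from the origin through $(c_0,\sigma_0)$ meets the unit circle at $(a^\ast,\sqrt{1-a^{\ast 2}})$ with $a^\ast=c_0/\sqrt{c_0^2+\sigma_0^2}$. Since $\sigma_0>\sqrt3\,c_0$ the slope $m_0:=\sigma_0/c_0$ exceeds $\sqrt3$, so $a^\ast<\tfrac12$, and since $c_0^2+\sigma_0^2<1$ we have $c_0<a^\ast$. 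I would start from a rank-one matrix whose characteristic sits at a circle point $(a,\sqrt{1-a^2})$ with $a$ slightly below $a^\ast$, push its second coordinate down by Proposition~\ref{Lem:PerturbationOfRankOneCorrelation} until the slope becomes exactly $m_0$ (keeping the first coordinate fixed at $a$), and finally slide down the ray by Corollary~\ref{Cor:ConvexityWithId} until the first coordinate reaches $c_0$. Each step preserves the property $w_1<w_{\max}$.

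Concretely, for a circle value $a$ the perturbation $C(\mu)$ realizes every characteristic $(a,(1-\mu)\sqrt{1-a^2})$ with $\mu\in(0,\mu_{\max}(a,n))$, and the subsequent scaling by $1-\nu$, $\nu\in[0,1)$, reaches every point $((1-\nu)a,(1-\nu)(1-\mu)\sqrt{1-a^2})$. Hence for fixed $a$ and $n$ the reachable set is the open wedge
\[
R_{a,n}=\Bigl\{(c,\sigma)\ \Big|\ 0<c<a,\ (1-\mu_{\max}(a,n))\tfrac{\sqrt{1-a^2}}{a}<\tfrac{\sigma}{c}<\tfrac{\sqrt{1-a^2}}{a}\Bigr\},
\]
and every characteristic in $R_{a,n}$ is attained by a correlation matrix with $w_1<w_{\max}$. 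The upper slope bound $\sqrt{1-a^2}/a$ exceeds $m_0$ whenever $a<a^\ast$, and the perturbation needed to reach slope $m_0$ is $\mu=1-m_0\,a/\sqrt{1-a^2}$, which tends to $0$ as $a\uparrow a^\ast$. Since $\mu_{\max}(a,n)$ increases in $n$ to a positive limit (using $g_n(a)\downarrow a<\tfrac12$), one has $m_0\in R_{a,n}$ for $a$ close enough to $a^\ast$ from below and $n$ large.

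To produce the uniform neighborhood I would fix an interval $[a_1,a_2]$ with $c_0<a_1<a_2<a^\ast<\tfrac12$, chosen so close to $a^\ast$ that the circle slope $\sqrt{1-a^2}/a$ stays in a small band above $m_0$ for all $a\in[a_1,a_2]$. This gives uniform control: the upper slope endpoint is bounded below by $\sqrt{1-a_2^2}/a_2>m_0$, and, for $n\ge n_0$ large so that $\mu_{\max}(a,n)$ is bounded below by a positive constant on $[a_1,a_2]$ and $g_n(a)\in(1/n,\tfrac12)$ there, the lower slope endpoint is bounded above by some $m_-<m_0$. Choosing $m_0<m_+<\sqrt{1-a_2^2}/a_2$, the fixed slope band $(m_-,m_+)$ together with $\{0<c<a_1\}$ defines an open neighborhood $U$ of $(c_0,\sigma_0)$ contained in $R_{a,n}$ for every $a\in[a_1,a_2]$ and every $n\ge n_0$. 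Finally, for each $n\ge n_0$ I would invoke Lemma~\ref{Lem:Thecnical03} to obtain an actual circle value $a_n\in(a_1,a_2)$ realizable by a rank-one matrix through Corollary~\ref{Cor:RankOneCorrelation}; then $U\subseteq R_{a_n,n}$ furnishes, for every $(c,\sigma)\in U$, the desired $n\times n$ correlation matrix with $w_1<w_{\max}$.

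I expect the main obstacle to be the uniformity in $n$: because the realizable circle value $a_n$ supplied by Lemma~\ref{Lem:Thecnical03} drifts with $n$ and the perturbation budget $\mu_{\max}(a_n,n)$ depends on both, one must exhibit a single slope band $(m_-,m_+)$ and radius for $U$ lying inside $R_{a_n,n}$ simultaneously for all large $n$. Pinning $[a_1,a_2]$ just below $a^\ast$, so that the needed perturbation is uniformly small while $\mu_{\max}$ stays uniformly bounded away from $0$, is what makes these estimates uniform; the remaining verifications that $g_n(a_n)\in(1/n,\tfrac12)$ and that each step preserves $w_1<w_{\max}$ are routine given the cited results.
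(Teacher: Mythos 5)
Your proposal is correct and follows essentially the same route as the paper's proof: rank-one correlation matrices on the unit circle supplied by Corollary~\ref{Cor:RankOneCorrelation} and Lemma~\ref{Lem:Thecnical03}, perturbed via Proposition~\ref{Lem:PerturbationOfRankOneCorrelation} and scaled toward the origin via Corollary~\ref{Cor:ConvexityWithId}, with uniformity in \(n\) obtained exactly as in the paper by fixing realizable circle values in an interval just below the ray's intersection point \(a^{\ast}=c_1\), where the needed perturbation is uniformly small while the perturbation budget stays bounded away from zero. Your wedge/slope-band description of the reachable set is merely a reparametrization of the paper's triangles \(\Delta(\hat{c},\hat{\sigma})\) spanned by \(0\), \((\hat{c},\hat{\sigma})\) and \((\hat{c},\hat{\sigma}-\mu)\).
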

	\begin{proof}
		By the assumptions  we find that the line through the origin and \((c_0,\sigma_0)\) intersects the unit circle in a point \((c_1,\sigma_1)\) with \(c_1,\sigma_1>0\) and \(\sigma_1>\sqrt{3}c_1\). Then with \(c_1^2+\sigma_1^2=1\) we deduce \(0<c_1<1/2\). By the properties of \(g_n\) (see Lemma~\ref{Lem:gnsnestimate}) we can choose \(n_1\in\N\) such that \(g_n(c_1)<1/2\) for all \(n\geq n_1\). Fix a real number \(\mu\) with
			\begin{equation}0<\mu<\frac{\sqrt{3}}{2}\min\left\{\frac{1}{6}\left(\frac{1}{\sqrt{2}}-\sqrt{g_{n_1}(c_1)}\right),1-\sqrt{\frac{2}{3}}\right\}.\end{equation}
		Choose real numbers \(c_2\) and \(\varepsilon\) with \(c_0<c_2<c_1\) and \(0<\varepsilon<\mu\) such that the open triangle denoted by \(U'\) spanned by \(0\), \((c_2,\sqrt{1-c_2^2})\) and \((c_2,\sqrt{1-c_2^2}-\varepsilon)\) is an open neighborhood around \((c_0,\sigma_0)\). Then choose \(c_3\) with \(c_0<c_3<c_2\) and \(\sqrt{1-c_3^2}-\sqrt{1-c_2^2}<\mu-\varepsilon\). We have that \(U:=\{(c,\sigma)\in U'\mid c<c_3\}\) is an open neighborhood around \((c_0,\sigma_0)\) and it follows from simple geometric observations that given any \(\hat{c},\hat{\sigma}>0\) with \(\hat{c}^2+\hat{\sigma}^2=1\) and \(c_3<\hat{c}<c_2\) we have that \(U\) is contained in the triangle denoted by \(\Delta(\hat{c},\hat{\sigma})\) spanned by the points \(0\), \((\hat{c},\hat{\sigma})\) and \((\hat{c},\hat{\sigma}-\mu)\). By Lemma~\ref{Lem:Thecnical03} we can choose \(n_0\geq n_1\) large enough such that for any \(n\geq n_0\) there exists \(0\leq k\leq n\) with
		\begin{equation}c_3<\frac{n(2k/n-1)^2-1}{n-1}<c_2.\end{equation}
		Then given \(n\geq n_0\) put \(\hat{c}=(n(2k/n-1)^2-1)/(n-1)\) for some \(0\leq k\leq n\) such that \(c_3<\hat{c}<c_2\) holds and set \(\hat{\sigma}=\sqrt{1-\hat{c}^2}\). By Corollary~\ref{Cor:RankOneCorrelation} we find an \(n\times n\) correlation matrix with characteristic \((\hat{c},\hat{\sigma})\) and hence by Proposition~\ref{Lem:PerturbationOfRankOneCorrelation} (note that \(\hat{\sigma}-\mu\geq\hat{\sigma}(1-2/\sqrt{3}\mu)\) holds since \(\hat{\sigma}\geq \sqrt{3}/2\)) and Corollary~\ref{Cor:ConvexityWithId} we find for any given \((c,\sigma)\in\Delta(\hat{c},\hat{\sigma})\) an \(n\times n\) correlation matrix with characteristic \((c,\sigma)\) and \(w_1<w_\text{max}\). Since \(n\geq n_0\) was arbitrary and \(U\) is contained in \(\Delta(\hat{c},\hat{\sigma})\) the claim follows.    
	\end{proof}
From Theorem~\ref{Thm:LocalResultB1} and Theorem~\ref{Thm:LocalResultB2} the conclusion of Theorem~\ref{Thm:AsymptoticCounterExamples} follows immediately.
	\begin{proof}[\textbf{Proof of Theorem~\ref{Thm:AsymptoticCounterExamples}}]
		Let \(K \subset A_2\) be a compact set. We can write \(A_2=B_1\cup B_2\) with \(B_1=\{(c,\sigma)\mid c<\sigma<1-c,\,\,c,\sigma>0\}\) and \(B_2=\{(c,\sigma)\mid \sigma>\sqrt{3}c,\,\,c,\sigma>0,\,\,c^2+\sigma^2<1\}\). Then for any point \(p\in K\) we have  \(p\in B_1\) or \(p\in B_2\). In both cases we find by Theorem~\ref{Thm:LocalResultB1} or Theorem~\ref{Thm:LocalResultB2} an open neighborhood \(U_p\) around \(p\) and a positive integer \(n_p\) such that for any \(n\geq n_p\) and any \((c,\sigma)\in U_p\) there exists an \(n\times n\) correlation matrix with characteristic \((c,\sigma)\) satisfying \(w_1<w_{\text{max}}\). Since \(K\) is compact we find finitely many points \(p_1,\ldots,p_N\) such that \(K\subset U_{p_1}\cup\ldots\cup U_{p_N}\). Then the claim follows for \(n_0:=\max_{1\leq j\leq N}{n_{p_j}}\).
	\end{proof}

\section{Conclusions and Outlook} \label{sec:conclusion}

In the present paper we derived generic features for the spectral structure of correlation matrices in terms of their mean correlation and standard deviation. We showed that some of those properties, earlier observed or conjectured for correlation matrices of large dimension \(n\), are also valid when \(n \geq 2\) is arbitrary.

 Our results provide a quantitative measure to which extent a correlation matrix is approximately given by a single eigenvector depending on its characteristic  $(c,\sigma)$. In particular we discover that not simply a large $c$ but more general large $r_c = \sqrt{c^2+\sigma^2} $ imply distinctly large eigenvalue of the underlying correlation matrix.  Analogously, not only small $\sigma$ but more general small $ \phi_c  = \arccos (c/r_c)$ imply an approximately diagonal eigenvector.

 Furthermore, we explicitly constructed examples of correlation matrices which show that in general  eigenvectors for the leading eigenvalue do not need to be diagonal, or approximately diagonal, even when \(n\) is large. We note that the construction of  correlation matrices with specific spectral properties, is a widely studied but non-trivial task\cite{Chalmers1975,MarsagliaOlkin1984,Numpa2012,TUITMAN2020,Waller2020} itself.
 
In this work we mainly focused on the case \(c>0\), but all bounds in Theorem~\ref{th:l_1_w_max}, except for $w_1$, are also valid for $c\leq 0$. For \(c=0\) we can find correlation matrices with \(w_1<w_\text{max}\) using Corollary~\ref{Cor:ConvexityWithId}.   We note that $c<0 $ is only relevant for small $n$ as  we showed Corollary~\ref{cor:smalles_mean_c}.

In the main results we defined a domain in the \((c, \sigma)\)-plane where \(w_1=w_\text{max}\) generically holds for correlation matrices of any dimension $n$. This domain is a simplifications of the more technical domains from  Theorem~\ref{Thm:w1wmaxDomainGeneral}, which for small $n$ covers a slightly larger area, as shown in Fig.~\ref{Fig:technical_domains}. However, the difference between these domains vanishes when \(n\) becomes large. 

Throughout the paper we considered a correlation matrix as a fixed realisation of a random variable.  Our result contribute to analyses of random or empirical correlation matrices connecting the mean correlation and the standard deviation of the correlation coefficients with the spectral decomposition of the underlying correlation matrix.

Among further studies we expect that our results can be extended  by taking  the skewness and kurtosis of the correlation coefficients into account. Especially, as empirical correlations have been observed to follow a non-stationary and asymmetric distribution (see Fig.~5 in Ref.~\cite{Munnix2012}). Furthermore, we leave the Question~\ref{q:question} on the alignment of the first eigenvector for correlation matrices with  $(c,\sigma) \in A \setminus (A_1 \cup A_2)$ unanswered. 

Finally, we expect that similar results can be obtained for wider class of symmetric positive semi-definite matrices as we mentioned it in Remark~\ref{rem:symmetric_matr}.

\section*{Acknowledgement}

We thank Sebastian Krause and Gerrit Herrmann for fruitful discussions.

\bibliography{refs}

\end{document}